\newcommand{\RR}{\mathbb{R}}
\newcommand{\PP}{\mathcal{P}}
\newcommand{\TT}{\mathbb{T}}
\newcommand{\HH}{\mathcal{H}}
\newcommand{\SSS}{\mathcal{S}}
\newcommand{\pmax}{\ensuremath{p_{\max}}}
\newcommand{\Tw}{\ensuremath{w}}
\newcommand{\dd}{\ensuremath{d}}
\newcommand{\co}{\ensuremath{c}}
\newcommand{\br}[1]{\ensuremath{\{#1\}}}
\DeclareMathOperator {\bigO}{\mathcal{O}}
\newcommand{\NP}{\texorpdfstring{$\mathcal{NP}$}{NP}}
\newcommand{\debug}[1]{
\ifthenelse{\boolean{ShowOldText}}{#1}{}
}
\newcommand{\jk}[1]{#1}
\newenvironment{jkblock}{}{}
\title{Parameterized Algorithms for Computing Pareto Sets}
\authorrunning{J.\,M. Könen, H. Röglin, and T. Stuck}
\keywords{parameterized algorithms, treewidth, multicriteria optimization problems, multicriteria MST, multicriteria TSP, polygon aggregation}
\author{Joshua Marc K\"onen}{Institute of Computer Science, University of Bonn, Germany}{koenen@cs.uni-bonn.de}{https://orcid.org/0000-0003-4245-4812}{}
\author{Heiko R\"oglin}{Institute of Computer Science, University of Bonn, Germany}{roeglin@cs.uni-bonn.de}{https://orcid.org/0009-0006-8438-3986}{}
\author{Tarek Stuck}{Institute of Computer Science, University of Bonn, Germany}{s6tastuc@uni-bonn.de}{}{}
\begin{document}
\maketitle

\begin{abstract}
The problem of computing the set of Pareto-optimal solutions has been studied for a variety of multiobjective optimization problems. For many such problems, algorithms are known that compute the Pareto set in (weak) output-polynomial time. These algorithms are often based on dynamic programming and by weak output-polynomial time, we mean that the running time depends polynomially on the size of the Pareto set but also on the sizes of the Pareto sets of the subproblems that occur in the dynamic program. For some problems, like the multiobjective minimum spanning tree problem, such algorithms are not known to exist and for other problems, like multiobjective versions of many NP-hard problems, such algorithms cannot exist, unless $\mathcal{P}=\mathcal{NP}$.

Dynamic programming over tree decompositions is a common technique in parameterized algorithms. 
In this paper, we study whether this technique can also be applied to compute Pareto sets of multiobjective optimization problems. 
We first derive an algorithm to compute the Pareto set for the multicriteria $s$-$t$ cut problem and show how this result can be applied to a polygon aggregation problem arising in cartography that has recently been introduced by Rottmann et al.~(GIScience 2021). 
We also show how to apply these techniques to also compute the Pareto set of the multiobjective minimum spanning tree problem and for the multiobjective TSP. 
The running time of our algorithms is $\bigO(f(\Tw)\cdot\mathrm{poly}(n,p_{\max}))$, where $f$ is some function in the treewidth~$\Tw$, $n$ is the input size, and $p_{\max}$ is an upper bound on the size of the Pareto sets of the subproblems that occur in the dynamic program.
Finally, we present an experimental evaluation of computing Pareto sets on real-world instances of polygon aggregation problems.
For this matter we devised a task-specific data structure that allows for efficient storage and modification of large sets of Pareto-optimal solutions. 
Throughout the implementation process, we incorporated several improved strategies and heuristics that significantly reduced both runtime and memory usage, enabling us to solve instances with treewidth of up to $22$ within reasonable amount of time.
Moreover, we conducted a preprocessing study to compare different tree decompositions in terms of their estimated overall runtime.
\end{abstract}

\setcounter{page}{0}

\newpage

%!TEX root = ParetoFPT.tex

\section{Introduction}
Multiobjective optimization problems arise naturally in many contexts. When booking a train ticket, one might be interested in minimizing the travel time, the price, and the number of changes. Similarly when planning a new infrastructure, companies often have to find a compromise between reliability and costs, to name just two illustrative examples. For multiobjective optimization problems there is usually not a single solution that is optimal for all criteria simultaneously, and hence, one has to find a trade-off between the different criteria. Since it is not a priori clear which trade-off is desired, one often studies the set of \emph{Pareto-optimal solutions} (also known as \emph{Pareto set}), where a solution is Pareto-optimal if it is not dominated by any other solution, i.e., there does not exist another solution that is better in at least one criterion and at least as good in all other criteria. The reasoning behind this is that a dominated solution cannot be a reasonable compromise of the different criteria and it makes sense to restrict one's attention to only the Pareto-optimal solutions.

A lot of research in multiobjective optimization deals with algorithms for computing the Pareto set and with studying the size of this set for various optimization problems. If this set is not too large then it can be presented to a (human) decision maker who can select a solution. The Pareto set also has the important property that any monotone function that combines the different objectives into a single objective is optimized by a Pareto-optimal solution. This means, in order to optimize such a function, one could first generate the Pareto set and then pick the best solution from this set.

This approach is only reasonable if there are not too many Pareto-optimal solutions. While in the worst case the Pareto set can be of exponential size for almost all multiobjective optimization problems, it has been observed that the Pareto set is often small in practice (see, e.g.,~\cite{Ehrgott2005,Muller-HannemannW06}). It has also been shown in the probabilistic model of smoothed analysis that for many problems the expected size of the Pareto set is polynomial if the input is subject to a small amount of random noise~\cite{BrunschR15,BeierRRV2022}. 
This motivates the development of algorithms for computing the Pareto set that have polynomial running time with respect to the output size.

In the literature such output-sensitive algorithms have been developed for different multiobjective optimization problems (e.g., for the multiobjective shortest path problem~\cite{CorleyM85,Hansen80,SkriverA00}, multiobjective flow problems~\cite{Ehrgott99,MustafaG98}, and the knapsack problem when viewed as a bicriteria optimization problem~\cite{NemhauserU69}). There is, however, a small caveat. Since these algorithms are usually based on dynamic programming, they are not output-polynomial in the strict sense but they solve certain subproblems of the given instance and their running times depend not only polynomially on the size of the Pareto set of the entire instance but also polynomially on the sizes of the Pareto sets of the subproblems. While in theory it could be the case that some of the subproblems have Pareto sets of exponential size while the Pareto set of the entire instance is small (see, e.g.,~\cite{Boekler2018}), this behavior is neither observed in experiments nor does it occur in probabilistic input models. Hence, algorithms that are output-polynomial in the weak sense that their running time depends polynomially on the sizes of the Pareto sets of all subproblems that occur in the dynamic program are useful and state-of-the-art for many multiobjective optimization problems. For some problems such output-sensitive algorithms are not known to exist. In particular, for the multiobjective spanning tree problem no algorithm is known that computes the Pareto set in output-polynomial time (not even in the weak sense).

Dynamic programming is a common technique in parameterized algorithms. For many graph problems, dynamic programming on tree decompositions is applied to obtain fixed-parameter tractable (FPT) algorithms with respect to the treewidth of the input graph. In this paper, we explore for the first time the potential of dynamic programming on tree decompositions for computing Pareto sets.  
First we design an algorithm for the multicriteria $s$-$t$ cut problem and apply it on a problem from cartography that has recently been introduced by Rottmann et al.~\cite{RDHR21} and that was the original motivation for our research. 

Rottmann et al. study maps with building footprints and present a new model how less detailed maps can be derived from a given map. Their method is based on viewing the problem as a bicriteria optimization problem. It is assumed that the plane containing the building footprints is triangulated and a set of triangles to glue together some of the buildings is to be selected that on the one hand minimizes the total area and on the other hand minimizes the total perimeter. Rottmann et al. present an algorithm that computes the set of extreme nondominated solutions (i.e., the set of solutions that optimize a linear combination of the objectives), which is a subset of the Pareto set. 
They ask if it is also possible to compute the entire Pareto set in some output-efficient way. 

We show how a treewidth-based algorithm can be used to compute the Pareto set for the $s$-$t$ cut problem.
The running time of this algorithm is FPT in the treewidth and output-polynomial in the weak sense, i.e., it is of the form $\bigO(f(\Tw)\cdot \mathrm{poly}(n,\pmax))$ where $f$ is some function in the treewidth $\Tw$, $n$ denotes the input size, and $\pmax$ denotes an upper bound on the size of the Pareto sets of the subproblems that occur in the dynamic program.

As a special case, the results for the multiobjective $s$-$t$ cut algorithm directly translate to the cartography problem by computing the whole Pareto set in FPT time and being output-polynomial in the weak sense.
We experimentally analyze the computation of Pareto sets for this cartography problem on real-world datasets by designing a specialized data structure, developing heuristics, and integrating additional implementation-specific techniques tailored to the task.

\subsection{Our Results}
We first consider the multiobjective $s$-$t$ cut problem and its special case, the triangle aggregation problem~\cite{RDHR21}: Given some polygons $P$ and triangles $T$, the goal is to find all Pareto-optimal subsets $T'\subseteq T$ minimizing both the total area and perimeter of $P\cup T'$. 
This problem arises in cartography, where the goal is to compute a less detailed map from a given map.
Here the polygons are building footprints and the space between these footprints is triangulated.
By including triangles, one can glue together these building footprints, leading to a less detailed version of the map. 
We show that the multiobjective $s$-$t$ cut problem implies an algorithm to compute the Pareto set in time $\bigO(n\Tw\cdot 2^{\Tw} \cdot \pmax^2\log(\pmax))$ for this problem, where $\Tw$ is the treewidth of the triangle adjacency graph.

In the appendix, Section~\ref{apx:proofsMST} and~\ref{apx:proofsTSP}, we show that this method for computing Pareto sets can also be applied for other multiobjective optimization problems, such as the multiobjective minimum spanning tree problem (MST) and the multiobjective traveling salesman problem (TSP).
For both, we present algorithms with running time $\bigO(n\Tw^{\bigO(\Tw)}\cdot \pmax^2 \log^{\dd-2}(\Tw^{\mathcal{O}(\Tw)}\cdot \pmax^2))$ where $\Tw$ is the treewidth of the input graph, $n$ is the number of vertices in the graph, $\pmax$ is the size of the largest Pareto set computed in one of the subproblems of the dynamic program, and $\dd\geq 2$ is the (constant) number of different objectives. 
This shows that the Pareto set can be efficiently computed for graphs with small treewidth if the Pareto sets of the subproblems are not too large, which is often the case in realistic inputs.

Finally, we experimentally evaluate our algorithm on real-world datasets. 
We introduce new heuristic pruning techniques, runtime  estimates for tree decompositions and their root, memory optimization, multi-threading, and other improvements for reducing runtime and memory usage. 

\subsection{Related Work}

Output-polynomial time algorithms (at least in the weak sense) are known for many problems. Examples include the multiobjective shortest path problem~\cite{CorleyM85,Hansen80,SkriverA00}, multiobjective flow problems~\cite{Ehrgott99,MustafaG98}, and the knapsack problem when viewed as a bicriteria optimization problem~\cite{NemhauserU69}. For the multiobjective spanning tree problem such algorithms are not known and there are only results on the set of extreme nondominated solutions. This is a subset of the Pareto set and it contains all solutions that optimize some linear combination of the different objectives. For the multiobjective spanning tree problem it is known that there are only polynomially many extreme nondominated solutions~\cite{Dey97}, and efficient algorithms for enumerating them exist~\cite{AgarwalEGH98}.

Rottmann et al.~\cite{RDHR21} introduce the triangle aggregation problem under the objective of map simplification by grouping multiple building footprints together. They show that, similar to the multiobjective minimum spanning tree problem, there exist only a polynomial number of nondominated extreme solutions and they present an algorithm to compute this set in polynomial time. They also provide an extensive experimental study of their algorithms on real-world data sets showing that the model captures nicely the intention to aggregate building footprints to obtain less detailed maps. They put it as an open question whether or not it is possible to also compute the set of Pareto-optimal solutions in an output-efficient way.

Motivated by the observation that for many problems the Pareto set is often small in applications, the number of Pareto-optimal solutions has been studied in the probabilistic framework of smoothed analysis in a sequence of articles~\cite{RoglinT09,MoitraO12,BrunschR15,BeierRRV2022}. It has been shown for a large class of linear integer optimization problems (which contains in particular the multiobjective MST problem and the multiobjective TSP) that the expected number of Pareto-optimal solutions is polynomially bounded if the coefficients (in our case, the edge weights) are independently perturbed by random noise. Formally, the coefficients of $\dd-1$ out of the $\dd$ objectives are assumed to be independent random variables with adversarially chosen distributions with bounded density. It is shown that not only the expected number of Pareto-optimal solutions is bounded polynomially in this setting but also all constant moments, so in particular the expected squared number of Pareto-optimal solutions is also polynomially bounded. Combined with this, our results imply that the Pareto set for the multiobjective MST problem, the multiobjective TSP and the multiobjective $s$-$t$ cut problem can be computed in expected FPT running time $\bigO(f(\Tw)\cdot \text{poly}(n))$ in the model of smoothed analysis.

\begin{jkblock}
While dynamic programming over tree decompositions is a well-established technique, using it effectively in practice comes with more difficulties than simply minimizing the width of the decomposition. 
In experiments the runtime of the algorithm can vary largely though the width of the tree decomposition is the same in all cases.

Bodlaender and Fomin~\cite{BF05} introduced the concept of the \emph{$f$-cost} of a tree decomposition, which sums up a cost function $f$ applied to the bag sizes across all nodes. This framework formalizes the observation that not all nodes contribute equally to runtime or memory, and that practical efficiency can benefit from structurally favorable decompositions.

The $f$-cost approach was later extended and evaluated experimentally by Abseher et al.~\cite{ADMW15} using machine learning to select a tree decomposition based on different features such as bag sizes, branching factors, and node depths.
With their method they could often select a decomposition that performed well in practice.
Kangas et al.~\cite{KKS20} used the same model for the problem of counting linear extensions, confirming the value of such estimators, and suggesting the average join-node depth as a good single feature for estimating runtime.

Beyond selection heuristics, tree decompositions have also been studied with respect to memory efficiency. Charwat et al.~\cite{CW15} explored compressed representations of intermediate states using decision diagrams, and Betzler et al.~\cite{BRJ06} proposed anchor-based compression techniques to reduce hard drive memory usage.
 
In our case, we not only solve an optimization problem, but compute the entire set of Pareto-optimal solutions. This makes memory usage more sensitive to the structure of join operations and limits the applicability of pointer-based or diagram-compressed representations.
To address this, we develop a custom estimation strategy for predicting both runtime and memory usage, based on predicting the number of Pareto-optimal solutions at each node and the runtime impact of join operations.
We then select a decomposition based on a combination of both runtime and memory usage. 
Details of this procedure are explained further in Appendix~\ref{sec:appendix_choosing_a_td}.
\end{jkblock}

\section{Preliminaries} \label{sec:Preliminaries}

Let $[i]:= \{1,\ldots,i\}$.
We consider an arbitrary optimization problem with a constant number $\dd\in \mathbb{N}$ of objectives to be minimized. Each (feasible) solution~$s$ is mapped to a cost vector $f(s)=(f(s)_1,\ldots,f(s)_\dd) \in \mathbb{R}^\dd$ where $f$ is our objective function.
A solution $s_1$ is said to \emph{dominate} another solution $s_2$ if $f(s_1)_i \leq f(s_2)_i$ for all $i\in [\dd]$ and there exists some $j\in [\dd]$ such that $f(s_1)_j<f(s_2)_j$.
We write $s_1 \prec s_2$ to denote that $s_1$ dominates $s_2$.
The set of all non-dominated solutions is called the \emph{Pareto set}. 
These definitions extend naturally to optimization problems where a subset of objectives is to be maximized instead of minimized.  
In the following we assume that all objectives are to be minimized, but all arguments and results can be adapted analogously when some (or all) objectives are to be maximized.

Let $\mathcal{P}_1$ and $\mathcal{P}_2$ be two sets of Pareto-optimal solutions.
We define their combined Pareto set as $\PP_1 + \PP_2 := \{p \in \PP_1 \cup \PP_2 \mid p \text{ is Pareto-optimal in } \PP_1 \cup \PP_2\}$, their
union as $S^{\mathcal{P}_1,\mathcal{P}_2} := \{p_1\cup p_2 \mid p_1 \in \PP_1, p_2 \in \PP_2 \}$, which contains all pairwise unions of solutions from $\PP_1$ and $\PP_2$, and the Pareto set of $S^{\mathcal{P}_1,\mathcal{P}_2}$ as $\PP_1 \oplus \PP_2 := \{s\in S^{\mathcal{P}_1,\mathcal{P}_2} \mid  s\text{ is Pareto-optimal in }S^{\mathcal{P}_1,\mathcal{P}_2} \}$.

\subsection{Treewidth}
Our algorithms rely on dynamic programming over a \textit{tree decomposition}.
The concept of tree decomposition and \textit{treewidth}, introduced by Robertson and Seymour \cite{RS86}, provide a measure of how similar a graph is to a tree. 
Many \NP{}-hard problems become tractable on graphs with small treewidth.
Computing the optimal treewidth is \NP{}-hard \cite{ACP87}, but it is FPT with respect to the treewidth~\cite{B93}, \jk{and a tree decomposition of width at most $2\cdot \Tw(G)$ can be computed in linear time using approximation algorithms~\cite{KO21}.
}
%\jk{
%In the following we assume that our algorithm starts with a tree decomposition of optimal width.
%}
%Hence, we assume that our algorithm starts from such an approximate decomposition if no optimal one is available.}
We now briefly introduce the notation used for tree decompositions.

\begin{definition}[Tree decomposition]
A tree decomposition of a graph $G=(V,E)$ is a pair $\mathcal{T}=(\mathbb{T}=(\{t_1,\ldots,t_m\},E),\{X_t\}_{t\in V(\mathbb{T})})$ where $\mathbb{T}$ is a tree and each node $t\in V(\mathbb{T})$ is associated with a bag $X_t\subseteq V$, such that the following conditions hold:
\begin{itemize}
\item $\bigcup_{t\in V(\mathbb{T})}X_t = V(G)$, i.e., every vertex of $G$ appears in at least one bag,
\item $\forall \{u,v\}\in E(G)~\exists t\in V(\mathbb{T}): u\in X_t \wedge v \in X_t$, i.e., for every edge $\{u,v\}\in E(G)$ the vertices $u$ and $v$ must appear together in at least one bag,
\item $T_u = \{t\in V(\mathbb{T}) \mid u \in X_t\}$ is a connected subtree of $\mathbb{T}$ for every $u\in V(G)$.
\end{itemize}
\end{definition}
The \textit{width} of a tree decomposition is $w(\mathcal{T}) := \max_{t\in V(\mathbb{T})}\vert X_t\vert - 1$, and the \textit{treewidth} $\Tw(G)$ is the smallest possible \textit{width} of a tree decomposition.
We denote by $\Tw$ the treewidth of the input graph.
Additionally, we denote with $V_t$ the union of vertices that were introduced at $t$ or some child of $t$.
In the following, we write $\Tw$ for the width of the tree decomposition used by our algorithm, which may be larger than $\Tw(G)$ if an approximation is used.

We assume we have a \textit{nice tree decomposition}, which is a binary tree decomposition rooted at some node $r\in V(\mathbb{T})$ with $X_r=\emptyset$ and with four specific node types: 
\begin{itemize}
\item \textbf{Leaf node}: Let $t$ be a node with no child nodes. 
Then $t$ is a \textit{Leaf node} iff $X_t = \emptyset$.
\item \textbf{Introduce node}: Let $t_\text{parent},t_\text{child}$ be two nodes in $V(\mathbb{T})$, where $t_\text{parent}$ has exactly one child $t_\text{child}$. 
Then $t_\text{parent}$ is an \textit{Introduce node} iff $X_{t_\text{parent}} = X_{t_\text{child}}\cup \{v\}$ for some $v\in V(G)$.
\item \textbf{Forget node}: Let $t_\text{parent},t_\text{child}$ be two nodes in $V(\mathbb{T})$, where $t_\text{parent}$ has exactly one child $t_\text{child}$. 
Then $t_\text{parent}$ is a \textit{Forget node} iff $X_{t_\text{parent}} \cup \{v\} = X_{t_\text{child}}$ for some $v\in V(G)$.
\item \textbf{Join node}: Let $t_\text{parent},t_\text{child}^1, t_\text{child}^2$ be three nodes in $V(\mathbb{T})$, where $t_\text{parent}$ has exactly two children $t_\text{child}^1$ and $t_\text{child}^2$. 
Then $t_\text{parent}$ is a \textit{Join node} iff $X_{t_\text{parent}} = X_{t_\text{child}^1} = X_{t_\text{child}^2}$.
\end{itemize}

A nice tree decomposition of width $\Tw$ and $\bigO(\vert V(G) \vert \cdot \Tw)$ nodes can be computed in polynomial time from any tree decomposition with the same width~\cite{CF15}.

The algorithm works as follows: Given an instance $G=(V,E)$ and a cost function $\co:E\rightarrow \RR^\dd$, it first computes a nice tree decomposition $(\TT, \{X_t\}_{t\in V(\TT)})$. 
In every leaf node the Pareto sets are initialized as empty sets. 
Then, for each node $t$, depending on its type, the algorithm recursively computes the set of Pareto-optimal solutions for every subproblem at this node. 
We distinguish between the functions $\leafNode$, $\intrNode$ and $\forgNode$, corresponding to their respective node types and show correctness, assuming that the Pareto sets for all child nodes are correctly computed.

%!TEX root = ParetoFPT.tex

\section{\texorpdfstring{Multiobjective $s$-$t$ cut problem}{Multiobjective s-t cut problem}}\label{sec:M_st_C}
We consider the problem of computing all Pareto-optimal $s$-$t$ cuts in a graph. 
Given a graph $G'=(V\cup \{s,t\}, E')$ \jk{with $n=\vert V\vert$} and a cost function $\co':E'\rightarrow \mathbb{R}^\dd$, a cut corresponds to a subset $S \subseteq V$.
We also refer to $S$ as a \textit{selection}.
The cost of a cut $S$ is defined as the sum of costs of all edges crossing from $S \cup \{s\}$ to $(V \setminus S) \cup \{t\}$.
Let $\delta(S):= \{\{u,v\}\in E' \mid u\in S\cup \{s\}, v\in (V\setminus S) \cup \{t\} \}$ be the set of edges cut by $S$.
The cost of $S$ is given by $\sum_{e \in \delta(S)} \co'(e)$.
We now wish to compute all $s$-$t$ cuts that are Pareto-optimal. 
Without loss of generality, we assume $\{s,t\}\notin E'$, since such an edge only introduces a constant shift in cost to every Pareto-optimal solution.

We denote $\delta (A,B):=\{\{u,v\}\in E' \mid u\in A, v\in B\}$ for $A,B\subseteq V(G')$ as the set of edges between vertices in $A$ and $B$ that are being cut.
We define $\co:2^{V(G')}\times 2^{V(G')}\rightarrow \mathbb{R}^\dd$ with $\co(A,B)=\sum_{e\in \delta (A,B)}\co'(e)$ for any subsets $A,B\subseteq V(G')$.
For simplicity, for a single vertex $p\in V(G')$ and subset $A\subseteq V(G')$, we write $\co(p,A)$ instead of $\co(\{p\}, A)$. 

Let $G=G'[V]$ be the subgraph induced by $V$.
Given a nice tree decomposition $(\mathbb{T}, \{X_t\}_{t\in V(\mathbb{T)}})$ for $G$, we compute for each node $t\in V(\mathbb{T})$ and subset $S\subseteq X_t$ the Pareto set $\PP_t^S$ consisting of all Pareto-optimal solutions $p\subseteq V_t$ with $p\cap X_t = S$.
Every such solution $p$ has cost $\co'(\delta(p))=\co(p, V \setminus p)+ \co(p, t)+\co(V\setminus p, s)$ (vertex $v\in V\setminus V_t$ is associated with vertex $t$). 
We refer to such an instance by a pair $(t, S)$.

\begin{theorem}\label{thm:stCut}
For an arbitrary $s$-$t$ cut instance $(G'=(V\cup \{s,t\},E'),\co')$ with $\co':E\rightarrow \mathbb{R}^\dd$, we can compute the set of Pareto-optimal $s$-$t$ cuts in time $\bigO(n\Tw\cdot 2^{\Tw} \cdot \pmax^2 \log^{\max \{\dd-2, 1\}}(\pmax^2))$ if a nice tree decomposition of graph $G=G'[V]$ with width $\Tw$ is provided.
\end{theorem}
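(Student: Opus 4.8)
The plan is to compute the Pareto set by bottom-up dynamic programming over the given nice tree decomposition $\TT$ of $G=G'[V]$, maintaining for every node $t$ and every selection $S\subseteq X_t$ a table entry $\PP_t^S$ with the following intended semantics: writing $\co_t(p):=\co(p,V_t\setminus p)+\co(p,t)+\co(V_t\setminus p,s)$ for the cost of a partial solution $p\subseteq V_t$ when all vertices of $V\setminus V_t$ are provisionally placed on the $t$-side, $\PP_t^S$ shall be the Pareto set of $\{\,\co_t(p)\;:\;p\subseteq V_t,\ p\cap X_t=S\,\}$, stored with back-pointers so that one representative solution per cost vector can be reconstructed. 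Since the root $r$ has $X_r=\emptyset$ and $V_r=V$, and $\co_r(p)$ then coincides with the true cut cost $\co'(\delta(p))$, the unique entry $\PP_r^\emptyset$ is exactly the desired Pareto set of $s$-$t$ cuts. I would prove correctness by induction over $\TT$ along the four node types, and bound the running time by counting (node, selection) pairs and charging each of them a single combination of Pareto sets.

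The engine of both the recurrences and the table-size bound is a locality property of tree decompositions, which I would isolate first: if $v$ is the vertex introduced at $t$ then every neighbour of $v$ in $V_t$ lies in $X_t\setminus\{v\}$; if $v$ is the vertex forgotten at $t$ then $v$ has no neighbour in $V\setminus V_t$; and at a join node $t$ with children $t_1,t_2$ one has $V_{t_1}\cap V_{t_2}=X_t$ and there is no edge between $V_{t_1}\setminus X_t$ and $V_{t_2}\setminus X_t$. All of these follow from the connectivity of the subtrees $T_u$. Granting this, the recurrences are: \textbf{Leaf} --- $\PP_t^\emptyset=\{\emptyset\}$ with cost vector $0$; \textbf{Introduce} ($X_t=X_{t_\mathrm{child}}\cup\{v\}$) --- $\PP_t^S$ is obtained from $\PP_{t_\mathrm{child}}^{S\setminus\{v\}}$ (if $v\in S$) resp.\ $\PP_{t_\mathrm{child}}^{S}$ (if $v\notin S$) by adding to every cost vector the increment $\co(v,X_t\setminus S)+\co(v,t)$ resp.\ $\co(v,S)+\co(v,s)$, which by locality depends only on $(t,S)$; \textbf{Forget} ($X_{t_\mathrm{child}}=X_t\cup\{v\}$) --- $\PP_t^S=\PP_{t_\mathrm{child}}^S+\PP_{t_\mathrm{child}}^{S\cup\{v\}}$, using that $V_t=V_{t_\mathrm{child}}$ so the stored vectors are already the correct $\co_t(p)$; \textbf{Join} ($X_t=X_{t_1}=X_{t_2}$) --- $\PP_t^S=\PP_{t_1}^S\oplus\PP_{t_2}^S$ after subtracting from every cost vector the fixed term $\co(S,X_t\setminus S)+\co(S,t)+\co(X_t\setminus S,s)$, which is exactly the set of cut edges inside $X_t$ and of $s$-/$t$-edges incident to $X_t$ that would otherwise be double-counted. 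For the join one checks that every $p$ with $p\cap X_t=S$ decomposes uniquely as $p_1\cup p_2$ with $p_i=p\cap V_{t_i}$ and $p_i\cap X_{t_i}=S$, and that $\co_t(p)=\co_{t_1}(p_1)+\co_{t_2}(p_2)$ minus that term, with no edge of $\delta(p)$ missed (locality again).

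Correctness at each node then follows from two elementary facts about the operator $\mathrm{Par}(\cdot)$ (the Pareto set of a set of cost vectors), which together justify carrying only Pareto-optimal partial solutions at every node rather than all of them: (i) $\mathrm{Par}$ is unchanged when a fixed vector is added to every element, and $\mathrm{Par}(A\cup B)=\mathrm{Par}\bigl(\mathrm{Par}(A)\cup\mathrm{Par}(B)\bigr)$ --- this is precisely the introduce and forget steps, as the stored tables are, inductively, the Pareto sets $\mathrm{Par}(A)$; and (ii) $\mathrm{Par}\bigl(\{a+b:a\in A,\ b\in B\}\bigr)=\mathrm{Par}\bigl(\{a+b:a\in\mathrm{Par}(A),\ b\in\mathrm{Par}(B)\}\bigr)$, since $a'\prec a$ implies $a'+b\prec a+b$ --- applied with $A=\{\co_{t_1}(p_1)\}$, $B=\{\co_{t_2}(p_2)\}$ and the constant shift, this is the join step. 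For the running time: the nice tree decomposition has $\bigO(n\Tw)$ nodes, and each node has at most $2^{|X_t|}\le 2^{\Tw+1}=\bigO(2^{\Tw})$ selections $S$. For a fixed $(t,S)$ the dominant cost is the join-type combination: the set $S^{\PP_1,\PP_2}$ with $\PP_i=\PP_{t_i}^S$ has at most $\pmax^2$ elements, each of whose cost vectors is computed in $\bigO(\dd)$ time from the parents' vectors together with the correction term (computed once per $(t,S)$ in $\bigO(\Tw^2)$ time), and extracting the non-dominated vectors among $m\le\pmax^2$ points of $\RR^{\dd}$ costs $\bigO\bigl(m\log^{\max\{\dd-2,1\}}m\bigr)$ by the classical divide-and-conquer maxima algorithm; the introduce- and forget-updates (filtering $\le 2\pmax$ vectors) are no more expensive. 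Multiplying the three factors yields the claimed $\bigO\bigl(n\Tw\cdot 2^{\Tw}\cdot\pmax^2\log^{\max\{\dd-2,1\}}(\pmax^2)\bigr)$.

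The step I expect to be the main obstacle is getting the partial-cost bookkeeping exactly right: proving the locality lemma cleanly, and verifying for the join node that the correction term $\co(S,X_t\setminus S)+\co(S,t)+\co(X_t\setminus S,s)$ captures precisely the edges double-counted between the two children and that no edge of $\delta(p)$ is ever omitted --- this is what makes $\oplus$ applied to the two parent Pareto sets, rather than to the full set of partial solutions, already produce the correct table entry. Once this accounting is pinned down, everything else --- the four recurrences, the commutation of $\mathrm{Par}$ with constant shifts, and the cited bound for extracting minima of $\dd$-dimensional vectors --- is routine.
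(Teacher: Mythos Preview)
Your proposal is correct and follows essentially the same approach as the paper: bottom-up dynamic programming over the nice tree decomposition with tables indexed by $(t,S)$, the same four recurrences, and the same running-time accounting via Kung et al.'s maxima algorithm at the join nodes. The only cosmetic differences are that the paper defines the partial cost $\co_t(p)$ by placing all of $V\setminus V_t$ on the $t$-side (yielding a different but $S$-dependent constant shift and hence a different-looking join correction term, namely $\co'(\delta(S))$), and that it phrases the join correctness via an explicit local-exchange lemma (Lemma~\ref{lemma:LocalExchangeCuts}) rather than your abstract identity $\mathrm{Par}(A+B)=\mathrm{Par}(\mathrm{Par}(A)+\mathrm{Par}(B))$---both formulations are equivalent.
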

In the following we describe how to compute the Pareto set for each node type.
For each instance $(t,S)$, the set of Pareto-optimal solutions is stored in an entry $D[t,S]$.
For simplicity, we assume that $D[t,S]$ contains the actual corresponding Pareto set.
Since the algorithm only depends on the cost vectors and selection $S$, in the actual implementation we will only save their cost vectors and reconstruct each solution by additionally storing pointers to the solutions from which it originated.
Since we assume $\dd$ as a constant, the encoding length of every Pareto-optimal solution is constant as well.
The proofs of the following lemmas are deferred to~\cref{apx:proof_st_C}.

\smallskip
\noindent
$\leafNode(t,S)$: For a leaf node $t$, we have only one solution $D[t,\emptyset]=\{\emptyset\}$.

\smallskip
\noindent
$\intrNode(t,S)$: Assume we have $X_t = X_{t'} \cup \{v\}$. 
If $v\notin S$, then $D[t,S]=D[t',S]$.
If $v \in S$, then $v$ is only adjacent to vertices in $X_{t'}$ and $V\setminus V_t$. 
If $X_{t'}$ is fixed, $v$ being associated with $s$ only changes the cost of every solution by a fixed amount.
Therefore we have $D[t, S] =D[t',S\setminus \{v\}]$, and their costs change by 
$\co(v, X_{t'} \cup (V\setminus V_t) \cup \{t\}) - 2\co(S\setminus \{v\},v) - \co(s,v)$.
\begin{restatable}{lem}{IntroduceNodesLemmaST}
If node $t$ introduces some vertex $v$ and has a child $t'$ for which $D[t',S]$ has been computed, then $\intrNode(t,S)$ computes $D[t,S]$ in time $\bigO(\pmax)$ for any partition $S$.
\end{restatable}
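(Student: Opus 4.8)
The plan is to establish the two halves of the claim separately: that the update rule stated just before the lemma indeed produces $\PP_t^S$ (correctness), and that it runs in $\bigO(\pmax)$ time. I would first record two structural facts about the introduce node. Since $v\in X_t\setminus X_{t'}$ and the nodes whose bags contain $v$ form a connected subtree, no bag in the subtree rooted at $t'$ contains $v$; hence $v\notin V_{t'}$ and $V_t=V_{t'}\cup\{v\}$. For the same reason, every neighbour of $v$ in $G$ lies in $X_{t'}\cup(V\setminus V_t)$: an edge $\{v,u\}$ is covered by some common bag, that bag lies at $t$ or above, and then connectedness of $T_u$ forces $u\in X_{t'}$ whenever $u\in V_t$. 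I will also use that the cost attached to a partial solution $p\subseteq V_t$, namely $\co(p,V\setminus p)+\co(p,t)+\co(V\setminus p,s)$, is a function of $p$ alone (the convention ``$V\setminus V_t$ is on the $t$-side'' is already built into this expression), so the stored cost of a fixed set does not depend on which node one views it at.

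For correctness when $v\notin S$: any $p$ with $p\subseteq V_t$ and $p\cap X_t=S$ has $v\notin p$, hence $p\subseteq V_{t'}$ and $p\cap X_{t'}=S$, and conversely; since the cost of such a $p$ is identical at $t$ and $t'$, the instances $(t,S)$ and $(t',S)$ have the same feasible solutions with the same costs, so $\PP_t^S=D[t',S]$. For correctness when $v\in S$: the map $p\mapsto p':=p\setminus\{v\}$ is a bijection from $\{p\subseteq V_t: p\cap X_t=S\}$ onto $\{p'\subseteq V_{t'}: p'\cap X_{t'}=S\setminus\{v\}\}$. The crux is to compute $\co'(\delta(p))-\co'(\delta(p'))$: comparing $\delta(p)$ with $\delta(p')$, the only edges whose cut-status changes are those incident to $v$ — in $p'$ the vertex $v$ sits on the $t$-side, so its cut edges are those to $p'$ and to $s$, while in $p$ it sits on the $s$-side, so its cut edges are those to $V\setminus p$ and to $t$. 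Using that the neighbours of $v$ in $V$ lie in $X_{t'}\cup(V\setminus V_t)$ and that $p'\cap X_{t'}=S\setminus\{v\}$, this difference rewrites to the fixed vector $\Delta:=\co(v,X_{t'}\cup(V\setminus V_t)\cup\{t\})-2\co(S\setminus\{v\},v)-\co(s,v)$, i.e.\ exactly the shift stated before the lemma, which is independent of $p'$. Since translating every cost vector by a common constant leaves the dominance relation $\prec$ unchanged, the Pareto-optimal solutions of $(t,S)$ are precisely $\{p'\cup\{v\}: p'\in D[t',S\setminus\{v\}]\}$ with costs shifted by $\Delta$, which is what $\intrNode(t,S)$ returns.

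For the running time: $\Delta$ is evaluated in $\bigO(\Tw)$ time — $\co(v,X_{t'})$ and $\co(S\setminus\{v\},v)$ each scan the $\le\Tw+1$ vertices of one bag, $\co(v,V\setminus V_t)=\co(v,V)-\co(v,X_{t'})$ with $\co(v,V)$ precomputed, and $\co(v,t),\co(s,v)$ are single edge look-ups — an additive lower-order term absorbed by the $\Tw$ factor already present in the overall bound. Given $\Delta$ (or nothing, when $v\notin S$), the algorithm makes a single pass over the at most $\pmax$ entries of the relevant child table, emitting for each one a new solution whose cost vector is a copy, resp.\ a copy shifted by $\Delta$, and whose selection adds $v$; this is $\bigO(1)$ per solution, so $\bigO(\pmax)$ overall. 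Finally, if child tables are kept in sorted order (as the later merging steps require), translation by the constant $\Delta$ preserves that order, so $D[t,S]$ comes out already sorted at no extra cost.

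The only genuinely delicate point is the edge-bookkeeping in the $v\in S$ case together with the appeal to tree-decomposition locality needed to see that the resulting shift $\Delta$ is the \emph{same} vector for every $p'$; once that is in hand, correctness is immediate from translation-invariance of Pareto-optimality, and the running time analysis is routine.
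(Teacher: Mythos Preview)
Your proposal is correct and follows essentially the same approach as the paper: split on whether $v\in S$, identify the feasible sets of $(t,S)$ and $(t',S\setminus\{v\})$ via the bijection $p\mapsto p\setminus\{v\}$, verify that the cost changes by the constant vector $\Delta=\co(v,X_{t'}\cup(V\setminus V_t)\cup\{t\})-2\co(S\setminus\{v\},v)-\co(s,v)$, and conclude by translation-invariance of dominance. The paper derives $\Delta$ by an explicit expansion of $\co'(\delta(p))-\co'(\delta(\tilde p))$ whereas you reason directly about which edges incident to $v$ switch cut-status, but this is a stylistic difference only; your added remarks on computing $\Delta$ in $\bigO(\Tw)$ and on preservation of sorted order are harmless extras not present in the paper.
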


\smallskip
\noindent
$\forgNode(t,S)$: Assume we have $X_t = X_{t'} \setminus \{v\}$. 
To obtain $D[t,S]$, we compute the union of the sets $D[t',S]$ and $D[t', S\cup \{v\}]$ and then remove all dominated solutions.

\begin{restatable}{lem}{ForgetNodesLemmaST}
If node $t$ forgets some vertex $v$ and has a child $t'$ for which all possible sets $D[t', S']$ have been computed, then $\forgNode(t,S)$ computes $D[t,S]$ in time $\bigO(\pmax \log^{\dd-2}(\pmax))$.
\end{restatable}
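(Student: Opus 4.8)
The plan is to establish correctness and running time separately. For correctness, I would argue that the set of solutions $p \subseteq V_t$ with $p \cap X_t = S$ is exactly the disjoint union of the solutions $p \subseteq V_{t'}$ with $p \cap X_{t'} = S$ and those with $p \cap X_{t'} = S \cup \{v\}$ — this follows immediately because $V_t = V_{t'}$ (a forget node changes only the bag, not the set of introduced vertices) and because the relevant cost of a solution $p$ (namely $\co(p, V\setminus p) + \co(p,t) + \co(V\setminus p, s)$) depends only on $p$ itself, not on which bag we are currently at. Hence $D[t',S] \cup D[t', S\cup\{v\}]$ contains, among its members, all Pareto-optimal solutions for the instance $(t,S)$; taking the Pareto set of this union — i.e. removing dominated elements — yields exactly $D[t,S]$. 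One subtlety to spell out: a solution that is Pareto-optimal for $(t,S)$ must already have been Pareto-optimal in its respective child instance (otherwise it is dominated there by a solution with the same bag-intersection, which then also dominates it in the merged set), so nothing Pareto-optimal is lost by only keeping the child Pareto sets $D[t',\cdot]$ rather than all feasible solutions.

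For the running time, the two child sets each have size at most $\pmax$, so we must compute the Pareto set of a set of at most $2\pmax$ cost vectors in $\RR^\dd$. This is the classical \emph{maxima of a point set} (equivalently, minima) problem. I would invoke the standard result that the set of non-dominated points among $N$ points in $\RR^\dd$ can be computed in time $\bigO(N \log^{\max\{\dd-2,1\}} N)$ using the Kung–Luccio–Preparata divide-and-conquer algorithm; with $N = 2\pmax$ this gives $\bigO(\pmax \log^{\dd-2}(\pmax))$ (absorbing the constant $2$ and using $\log(2\pmax) = \bigO(\log \pmax)$, and noting that for $\dd = 2$ the bound is $\bigO(\pmax \log \pmax)$, consistent with the $\max\{\dd-2,1\}$ exponent appearing in Theorem~\ref{thm:stCut}). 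Concatenating the two lists and sorting lexicographically are both within this bound.

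The main obstacle — though a mild one — is bookkeeping rather than any deep argument: one must be careful that "removing dominated solutions" is implemented so that when two solutions have identical cost vectors only one is retained (the problem asks for Pareto-\emph{optimal} cost vectors, and the algorithm elsewhere stores only cost vectors plus reconstruction pointers, as noted before the lemma), and that ties at equal coordinates are broken consistently so that the divide-and-conquer maxima routine returns a genuine antichain. I would also note for completeness that this merge step is exactly the operation written $\PP_1 + \PP_2$ in the Preliminaries, so the lemma is really just the statement that $+$ can be computed in the claimed time; the correctness part above is the verification that $D[t,S] = D[t',S] + D[t',S\cup\{v\}]$.
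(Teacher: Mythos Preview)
Your proposal is correct and follows essentially the same approach as the paper: split on whether $v\in p$, observe that the candidate set is $D[t',S]\cup D[t',S\cup\{v\}]$, remove dominated solutions, and bound the time by the cost of computing the Pareto set of at most $2\pmax$ vectors. Your write-up is in fact more careful than the paper's own proof (you explicitly note $V_t=V_{t'}$, that the cost depends only on $p$, and that Pareto-optimality in $(t,S)$ forces Pareto-optimality in the child instance); the only minor wrinkle is that for $\dd=2$ the paper assumes the child lists are kept sorted and uses a linear sweep to get $\bigO(\pmax)$ rather than the $\bigO(\pmax\log\pmax)$ you cite from Kung--Luccio--Preparata, which is why the lemma's bound is stated with exponent $\dd-2$ rather than $\max\{\dd-2,1\}$.
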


For the correctness of $\joinNode(t,S)$, the following lemma will be useful:

\begin{restatable}{lem}{LocalExchangeCuts}\label{lemma:LocalExchangeCuts}
For each Pareto-optimal solution $p\subseteq V$ and any node $t\in V(\mathbb{T})$, the solution $p\cap V_t$ is Pareto-optimal in the instance $(t, X_t\cap p)$.
\end{restatable}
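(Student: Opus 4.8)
The plan is to argue by contradiction: suppose $p \subseteq V$ is a Pareto-optimal $s$-$t$ cut of the whole instance, but that $p \cap V_t$ is \emph{not} Pareto-optimal in the subinstance $(t, X_t \cap p)$. Then there is a solution $q \subseteq V_t$ with $q \cap X_t = X_t \cap p = p \cap X_t$ that dominates $p \cap V_t$ in the cost measure of the subinstance, namely the measure $\co(\cdot, V_t \setminus \cdot) + \co(\cdot, t) + \co(V_t \setminus \cdot, s)$ restricted to vertices of $V_t$. The natural candidate for a better global solution is the ``graft'' $p' := (p \setminus V_t) \cup q$, i.e., keep $p$'s choices outside $V_t$ and replace its choices inside $V_t$ by $q$. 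Since $q$ and $p\cap V_t$ agree on the interface $X_t$, this is well-defined and $p' \cap X_t = p \cap X_t$.

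The key structural fact I would use is the standard separator property of tree decompositions: $X_t$ separates $V_t \setminus X_t$ from $V \setminus V_t$ in $G$ — every edge of $G'$ with one endpoint in $V_t \setminus X_t$ and one endpoint in $V \setminus V_t$ cannot exist, because such an edge would have to appear in a common bag, forcing the $V\setminus V_t$ endpoint into some $X_{t''}$ with $t''$ in the subtree rooted at $t$, contradicting $u \in V_t$. Using this, I would partition the cut edges of any selection $R$ into three groups: edges internal to $V_t$ (between $R\cap V_t \cup\{s\}$ and $(V_t\setminus R)\cup\{t\}$, but properly accounting for $s,t$), edges internal to $V\setminus V_t$, and edges crossing the interface, and show the cost of $R$ decomposes as $\co(R,V\setminus R) + \co(R,t) + \co(V\setminus R, s)$ splits additively into (i) a part depending only on $R \cap V_t$ together with how $R$ treats the boundary $X_t$, and (ii) a part depending only on $R \setminus (V_t \setminus X_t)$, i.e., the choices outside. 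The first part is exactly the subinstance cost of $R\cap V_t$ in $(t, R\cap X_t)$ plus correction terms that depend only on $R\cap X_t$; crucially the crossing edges have both endpoints either in $X_t$ or in $V\setminus V_t$, so their contribution is determined once $R \cap X_t$ and $R \setminus V_t$ are fixed. Since $p$ and $p'$ agree outside $V_t$ and agree on $X_t$, part (ii) is identical for both, and the correction terms in part (i) are identical; hence the global cost difference $\co'(\delta(p')) - \co'(\delta(p))$ equals the subinstance cost difference between $q$ and $p\cap V_t$, which is $\preceq 0$ and strictly negative in some coordinate. This contradicts Pareto-optimality of $p$.

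Concretely the steps are: (1) state and justify the $X_t$-separator property from the three tree-decomposition axioms; (2) fix the decomposition $\co'(\delta(R)) = A(R\cap V_t,\, R\cap X_t) + B(R\setminus(V_t\setminus X_t))$ where $A$ is the subinstance cost plus an $(R\cap X_t)$-only term and $B$ collects the purely-outside contributions, verifying it by checking each edge of $E'$ lands in exactly one summand (this uses the separator property to rule out edges counted in neither or double-counted); (3) define $p' = (p\setminus V_t)\cup q$, check $p'\cap X_t = p\cap X_t$ and $p'\cap V_t = q$ so that $A$-terms compare as the subinstance costs and $B$-terms are equal; (4) conclude $p' \prec p$ globally, contradicting the hypothesis. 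I expect the main obstacle to be step (2): carefully bookkeeping the roles of the artificial terminals $s,t$ (recall $v\in V\setminus V_t$ is associated with $t$ inside the subinstance, and vertices can be "associated with $s$" after a forget/introduce) and the edges incident to $X_t$, so that the claimed additive split is exact rather than off by a term that depends on both sides. Once the cost bookkeeping is pinned down, the contradiction is immediate, so the technical heart is purely the decomposition identity combined with the separator property.
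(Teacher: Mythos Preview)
Your proposal is correct and follows essentially the same approach as the paper: both argue that if $p\cap V_t$ were dominated by some $q$ agreeing on $X_t$, the graft $(p\setminus V_t)\cup q$ would dominate $p$ globally, using the separator property of $X_t$ to show the global cost decomposes additively into a part depending only on the $V_t$-side and a part depending only on the outside. The paper carries out your step~(2) via an explicit partition of $V$ into the sets $I_p,\overline{I_p},S,\overline{S},O_p,\overline{O_p}$ and a direct algebraic comparison of $\co'(\delta(\tilde p))$ with $\co'(\delta(p_2))$, which is exactly the bookkeeping you anticipate as the technical heart.
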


\smallskip
\noindent
$\joinNode(t,\mathcal{S})$:
Assume we have $X_t = X_{t_1} = X_{t_2}$. 
Since $X_t$ separates $V_{t_1}\setminus X_t, V_{t_2}\setminus X_t$ and $V\setminus V_t$, we can combine every solution from $D[t_1, S]$ with every solution from $D[t_2, S]$ and then remove all dominated ones.

\begin{restatable}{lem}{JoinNodesLemmaST}
If node $t\in V(\mathbb{T})$ is a join node with children $t_1,t_2\in V(\mathbb{T})$ and sets $D[t_1, S]$ and $D[t_2, S]$ have been correctly computed, then $\joinNode(t,S)$ computes $D[t,S]$ in time 
$\bigO(\pmax^2 \log^{\max \{\dd-2, 1\}}(\pmax^2))$.
\end{restatable}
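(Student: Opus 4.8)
The plan is to have $\joinNode(t,S)$ return $D[t_1,S]\oplus D[t_2,S]$: form all pairwise unions $p_1\cup p_2$ with $p_1\in D[t_1,S]$ and $p_2\in D[t_2,S]$, assign each the cost obtained from the stored costs of $p_1$ and $p_2$ plus a fixed shift, and discard the dominated ones. I would first collect the structural facts about a join node: by the connectedness condition of a tree decomposition we have $V_{t_1}\cap V_{t_2}=X_t$, hence $V_t=V_{t_1}\cup V_{t_2}$, and the three sets $A_1:=V_{t_1}\setminus X_t$, $A_2:=V_{t_2}\setminus X_t$ and $B:=V\setminus V_t$ are pairwise non-adjacent in $G'$ (every edge of $G$ has both endpoints in some bag, which lies in the subtree of $t_1$, in the subtree of $t_2$, or outside, with $X_t$ as the interface). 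Consequently every solution $p$ of $(t,S)$ splits uniquely as $p=p_1\cup p_2$ with $p_1=p\cap V_{t_1}$, $p_2=p\cap V_{t_2}$ and $p_1\cap X_t=p_2\cap X_t=S$; conversely, for any $p_1,p_2$ with $p_i\cap X_t=S$ one recovers $p_i=(p_1\cup p_2)\cap V_{t_i}$, so $(p_1,p_2)\mapsto p_1\cup p_2$ is injective.

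The technical core is a cost-decomposition identity. Writing $\delta_{t_i}(\cdot)$ for the cut cost in the subinstance at $t_i$ (where $V\setminus V_{t_i}$ is merged with $t$), I would split $\delta(p)$ into its contributions inside $X_t$, between $X_t$ and each of $A_1,A_2,B$, inside $A_1$, inside $A_2$, and incident to $s$ or $t$, and use non-adjacency of $A_1,A_2,B$ to verify that
\[
\co'(\delta(p_1\cup p_2)) \;=\; \co'(\delta_{t_1}(p_1)) \;+\; \co'(\delta_{t_2}(p_2)) \;+\; \gamma(t,S),
\]
where $\gamma(t,S)\in\RR^\dd$ is an explicit signed combination of $\co(S,X_t\setminus S),\co(S,B),\co(S,t),\co(X_t\setminus S,s),\co(B,s)$ (the terms internal to the interface) and of $\co(S,A_1),\co(A_1,s),\co(S,A_2),\co(A_2,s)$ (the parts that the two subinstances double-count), all of which depend only on $t$ and $S$. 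Since a common additive shift does not change the domination order, the Pareto set of $\{\,p_1\cup p_2\,\}$ under $\co'(\delta(\cdot))$ coincides with the Pareto set of the pairs $(p_1,p_2)$ under $\co'(\delta_{t_1}(p_1))+\co'(\delta_{t_2}(p_2))$.

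With this identity, correctness follows from an exchange argument like the one behind Lemma~\ref{lemma:LocalExchangeCuts}, now carried out inside the subinstance at $t$: if $p$ is Pareto-optimal in $(t,S)$ then $p\cap V_{t_i}$ is Pareto-optimal in $(t_i,S)$, since a solution dominating $p\cap V_{t_1}$ in $(t_1,S)$ could be re-glued with $p\cap V_{t_2}$ to obtain, by the identity, a solution dominating $p$ in $(t,S)$ (and symmetrically for $t_2$). As $D[t_1,S]$ and $D[t_2,S]$ are correct, and hence contain all Pareto-optimal solutions of their instances, every Pareto-optimal $p$ of $(t,S)$ equals $p_1\cup p_2$ for some $p_i\in D[t_i,S]$, hence lies in $S^{D[t_1,S],D[t_2,S]}$ and, being non-dominated there, in $D[t_1,S]\oplus D[t_2,S]$. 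Conversely, if some $p_1\cup p_2\in D[t_1,S]\oplus D[t_2,S]$ were dominated in $(t,S)$, pick a Pareto-optimal dominator $q$; by the same exchange argument $q=q_1\cup q_2$ with $q_i\in D[t_i,S]$, so $q\in S^{D[t_1,S],D[t_2,S]}$ would dominate $p_1\cup p_2$ inside that set, a contradiction. Hence $\joinNode(t,S)$ computes $D[t,S]$.

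For the running time, $S^{D[t_1,S],D[t_2,S]}$ has at most $\pmax^2$ elements by injectivity, and with the vectors $\co(u,A_1),\co(u,A_2),\co(u,B)$ for $u\in X_t$ and the $t$-only part of $\gamma$ precomputed within the global budget, $\gamma(t,S)$ is assembled in $\bigO(\Tw)$ time and each of the $\pmax^2$ cost vectors in $\bigO(1)$ time. Extracting the non-dominated vectors among $N=\pmax^2$ points in $\RR^\dd$ is the maxima-of-a-point-set problem, solvable in $\bigO(N\log N)$ time for $\dd=2$ by sorting and in $\bigO(N\log^{\dd-2}N)$ for $\dd\ge 3$ by divide and conquer, i.e., $\bigO(\pmax^2\log^{\max\{\dd-2,1\}}(\pmax^2))$ overall, as claimed. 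I expect the main obstacle to be the cost-decomposition identity: pinning down exactly which edges are double-counted when the two child cuts are merged, and checking that every residual $S$-independent over-count collects into the single shift $\gamma(t,S)$ — the separator property of $X_t$ is what makes the two subinstances independent given the fixed interface $S$, but it has to be applied carefully to the three-way split $A_1\mid A_2\mid B$ and to the edges incident to $s$ and $t$.
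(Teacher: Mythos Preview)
Your proposal is correct and follows essentially the same approach as the paper: establish an additive cost-decomposition identity with a shift depending only on $(t,S)$, invoke the exchange argument (Lemma~\ref{lemma:LocalExchangeCuts}) to show that both restrictions $p\cap V_{t_i}$ of a Pareto-optimal $p$ are Pareto-optimal in their child instances, and then compute $D[t_1,S]\oplus D[t_2,S]$ via Kung et al.\ in $\bigO(\pmax^2\log^{\max\{\dd-2,1\}}(\pmax^2))$. The paper carries out the same calculation but goes one step further and identifies the constant shift explicitly as $-\co'(\delta(S))$; your $\gamma(t,S)$ is exactly this quantity, and the terms you list ($\co(S,\overline S),\co(S,B),\co(S,A_i),\co(A_i,s),\ldots$) are precisely the summands of $\co'(\delta(S))$. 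Your write-up is in fact slightly more complete than the paper's, since you also argue the converse direction (nothing in $D[t_1,S]\oplus D[t_2,S]$ is dominated in $(t,S)$) and make the injectivity of $(p_1,p_2)\mapsto p_1\cup p_2$ explicit.
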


\begin{proof}[Proof of Theorem \ref{thm:stCut}]
The correctness follows directly from the previous lemmas for the different node types, and from $D[r,\emptyset]$ being the Pareto set for the entire instance.

The runtime is dominated by the join nodes. 
Our nice tree decomposition contains $\bigO(n\Tw)$ nodes, and for each node $t$ the number of subsets $S\subseteq X_t$ is bounded by $2^{\Tw+1}$. 
Computing $D[t,S]$ at a join node takes time $\bigO(\pmax^2 \log^{\max \{\dd-2, 1\}}(\pmax^2))$. 
Hence, the total running time is bounded by $\bigO(n\Tw\cdot 2^{\Tw} \cdot \pmax^2 \log^{\max \{\dd-2, 1\}}(\pmax^2))$.
\end{proof}

%!TEX root = ParetoFPT.tex
The technique of computing the Pareto set for the $s$-$t$ cut problem can also be applied to other multiobjective optimization problems as well. 
To illustrate this, we show how to compute the Pareto set for the multiobjective MST problem and the multiobjective TSP problem.
The proofs of the following theorems are deferred to the appendix, Section~\ref{apx:proofsMST} and~\ref{apx:proofsTSP}.

\begin{restatable}{theorem}{MMST}\label{thm:mmst}
For an arbitrary multiobjective spanning tree instance $(G=(V,E),\co)$ with $\dd\geq 2$ objectives and $\co:E\rightarrow \mathbb{R}^\dd$, we can compute the set of Pareto-optimal spanning trees in time $\bigO(n(2\Tw)^{\bigO(\Tw)}\cdot \pmax^2 \log^{\max\{\dd-2,1\}}((2\Tw)^{\bigO(\Tw)}\cdot \pmax^2))$, if a nice tree decomposition of graph $G$ with width $\Tw$ is provided.
\end{restatable}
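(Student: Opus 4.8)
The plan is to run a dynamic program over the nice tree decomposition whose state at a node $t$ is a set-partition $\Pi$ of the bag $X_t$, in the spirit of the classical connectivity dynamic programs for spanning subgraphs. For each pair $(t,\Pi)$ I would store in $D[t,\Pi]$ the Pareto set of all Pareto-optimal \emph{partial forests} $F$, that is, acyclic edge subsets of $G$ using only edges introduced in the subtree rooted at $t$, such that on the vertex set $V_t$ the partition of $X_t$ into components of $F$ is exactly $\Pi$ (two bag vertices lie in one block iff in one tree of $F$) and every component of $F$ contains a vertex of $X_t$. To process each edge exactly once I would first refine the decomposition with introduce-edge nodes; this keeps the number of nodes at $\bigO(n\Tw)$ (as $|E(G)|\le n\Tw$), and the number of states per node is the Bell number $B_{|X_t|}\le (2\Tw)^{\bigO(\Tw)}$.

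The transitions are the natural analogues of those for Theorem~\ref{thm:stCut}. A leaf node has $D[t,\{\}]=\{\emptyset\}$. An introduce node adding $v$ turns $v$ into a fresh isolated component: $D[t,\Pi]=D[t',\Pi\setminus\{\{v\}\}]$ if $\{v\}$ is a block of $\Pi$, and $D[t,\Pi]=\emptyset$ otherwise. An introduce-edge node for $e=\{u,w\}\subseteq X_t$ either excludes $e$, contributing $D[t',\Pi]$, or includes it, which requires $u,w$ to lie in distinct child-blocks; then these two blocks merge and every cost vector is shifted by $\co(e)$, so $D[t,\Pi]$ is obtained by combining (via the operation $+$) the set $D[t',\Pi]$ with all shifted sets $D[t',\Pi']$ over the at most $2^{\Tw-1}$ ways of splitting the block of $\Pi$ containing $u$ and $w$ into a $u$-part and a $w$-part. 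A forget node removing $v$ discards every child solution in which $v$ forms a singleton block (it would leave a component with no bag vertex) and otherwise sets $D[t,\Pi]=\sum_{\Pi'}D[t',\Pi']$ over the at most $|X_t|$ child partitions $\Pi'$ that restrict to $\Pi$ when $v$ is deleted from its block. A join node, using that $X_t$ separates $V_{t_1}\setminus X_t$, $V_{t_2}\setminus X_t$, and $V\setminus V_t$ and that the edge sets introduced below $t_1$ and below $t_2$ are disjoint, sets $D[t,\Pi]=\sum D[t_1,\Pi_1]\oplus D[t_2,\Pi_2]$, the sum ranging over all at most $B_{|X_t|}^2$ pairs $(\Pi_1,\Pi_2)$ whose join in the partition lattice is $\Pi$ and that are \emph{admissible}, meaning $|\Pi_1|+|\Pi_2|-|\Pi_1\vee\Pi_2|=|X_t|$ (equivalently, gluing the two partial forests along $X_t$ creates no cycle). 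The Pareto set of all spanning trees is then $D[t',\{X_{t'}\}]$ at the child $t'$ of the root (whose bag is a single vertex, forced by the invariant to span the whole tree); equivalently one carries an extra bit indicating whether a component has been closed off and reads the answer off at the root $r$ itself.

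For correctness I would prove, by induction on node type, (i) soundness: each $D[t,\Pi]$ is a Pareto set of genuine partial forests of the prescribed kind; and (ii) the analogue of Lemma~\ref{lemma:LocalExchangeCuts}: for every Pareto-optimal spanning tree $T$ and every node $t$, if $F$ is the set of edges of $T$ introduced below $t$ and $\Pi$ is the partition it induces on $X_t$, then $F$ is Pareto-optimal in the subproblem $(t,\Pi)$. Part (ii) is a cut-and-paste argument: if a valid $(t,\Pi)$-solution $F'$ dominated $F$, then $T':=F'\cup(T\setminus F)$ is again a spanning tree. This relies on the facts that every edge of $T\setminus F$ has both endpoints in $X_t\cup(V\setminus V_t)$, that $F$ and $F'$ have the same number of components, namely $|\Pi|$ (each component meets $X_t$), hence the same edge count, and that contracting the $\Pi$-blocks turns both $T/F$ and $T'/F'$ into the same (necessarily spanning) tree; then $\co(T')=\co(F')+\co(T\setminus F)\prec\co(F)+\co(T\setminus F)=\co(T)$, contradicting optimality of $T$. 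Together with $D[t',\{X_{t'}\}]$ being an antichain, (i) and (ii) show it equals the full Pareto set. As in the $s$-$t$ cut algorithm, one stores only the cost vectors plus origin pointers, so the actual trees can be reconstructed.

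The running time is dominated by the join nodes. For each of the at most $B_{|X_t|}^2=(2\Tw)^{\bigO(\Tw)}$ admissible pairs, $\oplus$ produces at most $\pmax^2$ candidate cost vectors, and the final $+$ amounts to extracting the Pareto front of $(2\Tw)^{\bigO(\Tw)}\cdot\pmax^2$ points in $\RR^{\dd}$, costing $\bigO((2\Tw)^{\bigO(\Tw)}\cdot\pmax^2\log^{\max\{\dd-2,1\}}((2\Tw)^{\bigO(\Tw)}\cdot\pmax^2))$ by the standard divide-and-conquer maxima algorithm; the leaf, introduce, introduce-edge, and forget nodes are cheaper by an analogous estimate. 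Multiplying by the $\bigO(n\Tw)$ nodes gives the claimed bound. I expect the completeness lemma to be the main obstacle: showing that re-inserting a dominating sub-forest into $T$ along a bag genuinely yields a spanning tree is the delicate step, and it is exactly what dictates both the ``every component of a partial forest meets its bag'' invariant and the admissibility condition at join nodes --- keeping these two conventions consistent across all node types is where most of the care is needed.
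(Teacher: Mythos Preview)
Your proposal is correct and follows essentially the same approach as the paper: a connectivity DP over a nice tree decomposition with introduce-edge nodes, states indexed by set-partitions of the bag, the same transitions at each node type (including the acyclic-join condition, which the paper phrases as the auxiliary multigraph $G_t^{\mathcal{S}_1,\mathcal{S}_2}$ being a forest rather than via your block-count identity), and the same cut-and-paste exchange lemma for completeness. The only cosmetic differences are that the paper reads the answer at the root as $D[r,\{\emptyset\}]$ rather than at the root's child, and it leaves the ``every component meets the bag'' invariant implicit in the forget-node rule rather than stating it up front.
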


\begin{restatable}{theorem}{MTSP}\label{thm:mtsp}
For an arbitrary multiobjective traveling salesman instance $(G=(V,E),\co)$ with $\dd\geq 2$ objectives and $\co:E\rightarrow \mathbb{R}^d$, we can compute the set of all Pareto-optimal tours in time $\bigO (n(3\Tw+3)^{\bigO(\Tw)} \cdot \pmax^2 \cdot \log^{\max\{\dd-2,1\}}((3\Tw+3)^{\bigO(\Tw)} \cdot \pmax^2))$, if a nice tree decomposition of graph $G$ with width $\Tw$ is provided.
\end{restatable}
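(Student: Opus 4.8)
The plan is to mirror the dynamic program used for the $s$-$t$ cut problem, but over the classical treewidth dynamic program for Hamiltonian cycles, replacing the single optimal cost stored at each state by the full Pareto set of achievable cost vectors and replacing the scalar $\min/+$ updates by the set operations $+$ and $\oplus$ from \cref{sec:Preliminaries}. First I would pass to a nice tree decomposition that additionally contains introduce-edge nodes, so that every edge of $G$ is introduced exactly once; this keeps the number of nodes $\bigO(n\Tw)$ and makes the edge sets handled in disjoint subtrees disjoint. Writing $E_t$ for the set of edges introduced at $t$ or a descendant of $t$, I would define a \emph{state} $\sigma$ of a bag $X_t$ to consist of a degree label in $\{0,1,2\}$ for each vertex of $X_t$ together with a perfect matching on the degree-$1$ vertices, recording which two bag vertices are the endpoints of a common partial path; there are $(3\Tw+3)^{\bigO(\Tw)}$ such states. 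A \emph{partial solution} for $(t,\sigma)$ is an edge set $F\subseteq E_t$ such that in $(V_t,F)$ every vertex of $V_t\setminus X_t$ has degree $2$, every vertex of $X_t$ has the degree prescribed by $\sigma$, every connected component is a path whose endpoints are the two degree-$1$ bag vertices paired by $\sigma$, and no component is a cycle. I would store in $D[t,\sigma]$ the Pareto set of the cost vectors $\sum_{e\in F}\co(e)$ over all partial solutions $F$ of $(t,\sigma)$; using the standard device of designating one vertex $v_0$ through which the unique tour must pass, so that the DP closes exactly one cycle, $D[r,\emptyset]$ will be precisely the Pareto set of Hamiltonian cycles.

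Next I would give the five transitions, each a composition of $+$ and $\oplus$. A leaf node has $D[t,\emptyset]=\{\vec 0\}$. An introduce-vertex node introducing $v$ forces $v$ to degree $0$ (no incident edge lies in $E_t$ yet), so $D[t,\sigma]=D[t',\sigma|_{X_{t'}}]$ if $\sigma$ labels $v$ with $0$ and $D[t,\sigma]=\emptyset$ otherwise. At an introduce-edge node introducing $\{u,v\}$, a partial solution either omits the edge (copy $D[t',\sigma]$) or uses it, in which case it arises from a child state with the degrees of $u$ and $v$ each one smaller and with the matching updated by joining the two paths ending at $u$ and at $v$ — disallowed when $u,v$ are already matched, since that would close a cycle, except for the single permitted closure through $v_0$; using the edge shifts every cost vector by $\co(u,v)$, and the two alternatives are combined by a $+$-merge. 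A forget node forgetting $v$ forces $v$ to degree $2$, so $D[t,\sigma]=\sum_{\sigma'}D[t',\sigma']$ over the child states $\sigma'$ that restrict to $\sigma$ on $X_t$ and give $v$ degree $2$. A join node with children $t_1,t_2$ combines, over all compatible pairs $(\sigma_1,\sigma_2)$ (degrees summing component-wise to those of $\sigma$ and staying $\le 2$, matchings merging to that of $\sigma$ without creating a cycle), the sets $D[t_1,\sigma_1]\oplus D[t_2,\sigma_2]$ — legitimate because $E_{t_1}\cap E_{t_2}=\emptyset$ — and then takes the $+$-merge over all such pairs.

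Correctness would be proved by induction from the leaves to the root, the crux being an exchange lemma in the spirit of \cref{lemma:LocalExchangeCuts}: for any Pareto-optimal tour $F^\star$ and any node $t$, the restriction $F^\star\cap E_t$ is a partial solution of $(t,\sigma)$ for the state $\sigma$ it induces on $X_t$, and it is Pareto-optimal there; for if some partial solution $F'$ of $(t,\sigma)$ dominated it, then $(F^\star\setminus E_t)\cup F'$ would again be a valid tour — the degrees and the endpoint pairings on $X_t$ agree, and $X_t$ separates $V_t\setminus X_t$ from $V\setminus V_t$, so connectivity and the single-cycle property are preserved — with a cost vector dominating $\co(F^\star)$, a contradiction. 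Conversely, every vector in every $D[t,\sigma]$ is witnessed by an explicit partial solution, so no spurious entries appear; the inductive step for each node type then follows because $+$ and $\oplus$ implement exactly the operation of forming all unions of compatible partial solutions and discarding dominated cost vectors.

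For the running time, introduce, forget, and introduce-edge nodes only $+$-merge $(3\Tw+3)^{\bigO(\Tw)}$ Pareto sets, so the cost is dominated by the join nodes. For a fixed target state $\sigma$ at a join node one forms up to $(3\Tw+3)^{\bigO(\Tw)}$ sets $D[t_1,\sigma_1]\oplus D[t_2,\sigma_2]$, each of size at most $\pmax^2$, and then computes the Pareto set of their union, a set of at most $(3\Tw+3)^{\bigO(\Tw)}\cdot\pmax^2$ points in $\RR^\dd$; the standard divide-and-conquer algorithm for the maxima of a point set does this in time $\bigO\big((3\Tw+3)^{\bigO(\Tw)}\pmax^2\log^{\max\{\dd-2,1\}}((3\Tw+3)^{\bigO(\Tw)}\cdot\pmax^2)\big)$, the exponent $\max\{\dd-2,1\}$ accounting for the $\dd=2$ case. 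Multiplying by the $(3\Tw+3)^{\bigO(\Tw)}$ states per node and by the $\bigO(n\Tw)$ nodes yields the claimed bound. I expect the main obstacle to be exactly the join step together with the connectivity bookkeeping: one must verify that brute-forcing all endpoint pairings — rather than the rank-based or representative-set shortcuts used for plain Hamiltonicity, which are incompatible with enumerating all Pareto-optimal solutions — correctly enforces the ``exactly one cycle'' constraint, that compatibility of state pairs is detected correctly, and that the exchange lemma survives the global connectivity requirement of a tour.
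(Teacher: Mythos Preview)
Your proposal is correct and follows essentially the same approach as the paper: a nice tree decomposition with introduce-edge nodes, states given by degree labels in $\{0,1,2\}$ plus a matching on the degree-$1$ bag vertices, an exchange lemma showing that partial solutions with the same state can be swapped, and a running-time analysis dominated by the join nodes with $(3\Tw+3)^{\bigO(\Tw)}$ state pairs. The only cosmetic difference is that you fix a designated vertex $v_0$ to control where the unique cycle closes, whereas the paper handles cycle closure implicitly by allowing it precisely when the matching becomes empty; both devices achieve the same thing.
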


\section{Experiments}\label{sec:Exp}
In this section we present the experimental evaluation of our proposed algorithm for a special case of the multiobjective $s$-$t$ cut problem: the bicriteria polygon aggregation problem, introduced by Rottmann et al.~\cite{RDHR21}.
An instance $(T,P)$ consists of a set $P = \{p_1,\ldots,p_m\}$ of polygons and a set $T = \{t_1,\ldots,t_n\}$ of triangles. 
For any polygon $s\in T\cup P$, let $A(s)>0$ denote its area and $P(s)>0$ its perimeter.
Given a set of triangles $S\subseteq T$, we define $A(S)$ as the total area of $S\cup P$ and $P(S)$ as the total perimeter of $S\cup P$. 
The goal is to compute the Pareto set $\mathcal{P}$ of  subsets $S\subseteq T$ minimizing both $A(S)$ and $P(S)$. 

As with many multiobjective optimization problems, the set of Pareto-optimal solutions for the bicriteria polygon aggregation problem can be of exponential size. 
In fact, one can more generally show that the bicriteria polygon aggregation problem can be seen as a generalization of the bicriteria knapsack problem.
The statement on the size of the Pareto set follows as a corollary.
The proof of the following theorem is deferred to \cref{apx:proofKP}.

\begin{restatable}{theorem}{KPTA}
For every instance $I$ of the bicriteria knapsack problem, there exists an instance $I'$ of the bicriteria triangle aggregation problem such that there is a one-to-one correspondence between the Pareto-optimal solutions in $I$ and $I'$. 
\end{restatable}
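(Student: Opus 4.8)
The plan is to encode an arbitrary bicriteria knapsack instance~$I$ -- with items $1,\dots,k$, positive profits $\pi_1,\dots,\pi_k$ and positive weights $\omega_1,\dots,\omega_k$, where one wants to maximize the total profit and minimize the total weight of a chosen set $S\subseteq[k]$ -- into a triangle aggregation instance~$I'$ assembled from $k$ pairwise far apart \emph{gadgets}, one per item. Gadget~$i$ will consist of a single simple polygon~$q_i$ whose boundary contains a triangular \emph{notch}, i.e.\ three consecutive boundary vertices $a_i,v_i,b_i$ such that the triangle $a_iv_ib_i$ lies in the exterior of~$q_i$, together with the triangle $t_i:=a_iv_ib_i$ that exactly fills this notch. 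The polygons of $I'$ are $q_1,\dots,q_k$ and its triangles are $T=\{t_1,\dots,t_k\}$, and the map $S\mapsto\{t_i\mid i\in S\}$ is a bijection from $2^{[k]}$ onto $2^{T}$; the goal is to show it restricts to a bijection on Pareto sets.

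Next I would evaluate both objectives on a selection. Because the gadgets are laid out with pairwise disjoint polygons and no triangle touching any polygon or triangle of another gadget, each objective decomposes into independent per-gadget contributions. If $t_i\notin S$, gadget~$i$ contributes $A(q_i)$ and $P(q_i)$. If $t_i\in S$, then $q_i\cup t_i$ is again a simple polygon whose boundary replaces the two notch edges $\overline{a_iv_i},\overline{v_ib_i}$ of~$q_i$ by the single edge $\overline{a_ib_i}$ of~$t_i$, so the contributions become $A(q_i)+A(t_i)$ and $P(q_i)-\Delta_i$ with $\Delta_i:=|\overline{a_iv_i}|+|\overline{v_ib_i}|-|\overline{a_ib_i}|$. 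The crucial observation is that $\Delta_i>0$ strictly, which is nothing but the triangle inequality for~$t_i$ (as $v_i$ does not lie on $\overline{a_ib_i}$). Summing over gadgets,
\[
A(S)=A_0+\sum_{i\in S}A(t_i),\qquad P(S)=P_0-\sum_{i\in S}\Delta_i,
\]
with constants $A_0=\sum_i A(q_i)$ and $P_0=\sum_i P(q_i)$.

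Then I would realize prescribed values $A(t_i)=\omega_i$ and $\Delta_i=\pi_i$ by shaping the notch: take $t_i$ isosceles with base $\overline{a_ib_i}$ of length $b>0$ and the other two sides of length $(\pi_i+b)/2$. This is a nondegenerate triangle for every $b>0$ (using $\pi_i>0$), it satisfies $\Delta_i=\pi_i$ independently of~$b$, and its area $\tfrac{b}{4}\sqrt{\pi_i^2+2\pi_i b}$ depends continuously on~$b$ and takes every positive value as $b$ ranges over $(0,\infty)$, so some~$b$ makes it exactly~$\omega_i$; any polygon~$q_i$ large enough to admit such a boundary notch then completes the gadget, and $k$ such gadgets are trivially placed with disjoint interiors far from one another (if a connected planar subdivision is desired, the leftover region can be triangulated with fixed background triangles added to the polygon set without affecting anything). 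With these choices $A(S)=A_0+\sum_{i\in S}\omega_i$ and $P(S)=P_0-\sum_{i\in S}\pi_i$, hence for any selections $S,S'$ we have $A(S)\le A(S')\iff\sum_{i\in S}\omega_i\le\sum_{i\in S'}\omega_i$ and $P(S)\le P(S')\iff\sum_{i\in S}\pi_i\ge\sum_{i\in S'}\pi_i$, with strictness matched in both directions. Therefore $S$ dominates $S'$ in~$I'$ precisely when the item set of~$S$ dominates that of~$S'$ in~$I$, and $S\mapsto\{t_i\mid i\in S\}$ maps the Pareto set of~$I$ bijectively onto the Pareto set of~$I'$.

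The main obstacle is the geometric bookkeeping -- that a triangle really can be slotted into the boundary of a simple polygon as a notch, that filling it changes the perimeter by exactly $-\Delta_i$ and the area by exactly $A(t_i)$, and that any pair $(\omega_i,\pi_i)$ of positive reals is attainable -- but all of this is elementary; the conceptual heart of the argument is simply that filling a notch trades two edges for one, so that area and negated perimeter can be made to play the roles of a knapsack item's weight and profit. The announced corollary on the worst-case size of the Pareto set then follows at once, since the bicriteria knapsack problem admits instances with exponentially many Pareto-optimal solutions.
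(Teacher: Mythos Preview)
Your proof is correct and follows essentially the same approach as the paper: encode each knapsack item as a triangular notch on a polygon, so that selecting the triangle adds a prescribed area and decreases the perimeter by a prescribed amount, which makes the dominance relations in the two problems coincide. The only cosmetic differences are that the paper uses a single polygon carrying all $k$ notches (rather than $k$ separate gadget polygons) and works with right triangles for which it gives closed-form side lengths in terms of $(p_i,w_i)$ (rather than isosceles triangles with an intermediate-value argument for the area).
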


Our implementation follows the methodology outlined in Section~\ref{sec:M_st_C}, is written in Java, and was evaluated on the datasets used by Rottmann et al.\cite{RDHR21}, which consist of triangle aggregation problems derived from various cities and villages in Germany.
The source code is available at \url{https://github.com/Tarek-pub/Bicriteria_Aggregation}.

We use the $s$-$t$ cut construction from Rottmann et al.~\cite{RDHR21} to generating each instance. 
Additionally, we remove vertices corresponding to polygons from the graph~$G$, resulting in a more compact, yet equivalent graph structure (see Appendix~\ref{appendix:graph_simplification} for more details).
To reduce complexity, weights (i.e. area and perimeter) are rounded to one decimal place (perimeter in decimals), and only unique-weight solutions are retained.
Tree decompositions are computed using the JDrasil framework~\cite{jdrasil}.
Several observations during development led to significant improvements in runtime and memory usage, which we integrated iteratively into the implementation and quantify through an analysis on the Ahrem dataset (\cref{tab:algorithm_improvements}).
Experiments were run using 16 threads of an Intel Core i9-13900 with 100GB RAM. 
Multiple tree decompositions were generated for exactly 1 hour using 16 threads.

A detailed description of the implementation is provided in \cref{sec:appendix_extended_implementation}.
\cref{sec:appendix_experiments} contains a complete list of datasets used in the final evaluation, along with their respective runtimes and memory usage. 

\begin{table}
    \centering
    \begin{tabular}{|l|r|r|}
        \hline
        \textbf{Algorithm improvement} & \textbf{Runtime [h]} & \textbf{Storage usage [GiB]} \\
        \hline
        Outsourcing \& Pruning** & 13767.8* & 3707 \\
        Choosing a good root** & 8571.6* (-38\%) & 3991 (+8\%) \\
        Choosing from multiple tree decompositions & 7129.6* (-17\%) & 3647 (-9\%) \\
        Join node heuristic & 162.0 (-98\%) & 3647 (-0\%) \\
        Join-forget nodes & 8.2 (-95\%) & 1499 (-59\%) \\
        Introduce-join-forget nodes & 7.0 (-15\%) & 122 (-92\%) \\
        \hline
        \textbf{Summary} & \textbf{-99.95\%} & \textbf{-96.71\%} \\
        \hline
    \end{tabular}
    \caption{Performance comparison of different algorithmic improvements on the dataset Ahrem. 
    *: These values are extrapolated based on sampled subproblems and scaled proportionally to estimate full runtimes (see \cref{sec:appendix_extrapolation}). \\
    **: For the versions before choosing a good root or tree decomposition, we chose the median rated root / tree decomposition in this table for a fair comparison.}
    \label{tab:algorithm_improvements}
\end{table}

As noted in \cref{sec:M_st_C}, it suffices to store only the overall cost of a solution, rather than the full set of triangles included.
The actual solutions can be reconstructed later by maintaining, additionally to the cost, pointers to the solutions from which the current one originated. 

We briefly describe the data structure used in our implementation, as well as a rough description of the algorithm adaptations. 
Further implementation details and specific improvements are provided in \cref{sec:appendix_extended_implementation}.
Each solution $p$ is represented as a weighted pair $(A(p),P(p))$ for area and perimeter.

\subsection{Outsourcing and Pruning}
\label{sec:implementation_pruning}
Initially, the implementation was unable to solve even small datasets such as Osterloh (treewidth $14$) due to RAM exhaustion reaching up to 100GB. 
To solve this problem, we implemented an efficient outsourcing strategy that stores all currently unneeded solutions onto a hard drive, keeping only the necessary solutions in RAM. 
Each solution is stored as a 16-byte entry in a  dedicated \textit{origin-pointer} file.
Each DP table $D[t,S]$ is stored in a separate \textit{surface-pointer} file, containing solution IDs and their weights. 
These files are kept on disk and only loaded into RAM when needed.
This data structure allowed solving Osterloh without exceeding RAM limits, but other datasets remained unsolvable due to hard drive usage growing up to 2TB. 
To free up disk space, we remove obsolete surface-pointer files and, when necessary, run a slow in-place pruning step to clean the origin-pointer file. 
This involves recursively marking only reachable entries by following their pointers and compacting the file accordingly. 
Pruning is only triggered when space is critically low, based on conservative estimates of future storage use.
These strategies prevented further storage issues, but many datasets were still unsolvable due to excessive runtime, particularly in join nodes. 
For example, the runtime required to solve the Ahrem dataset at this stage was estimated at approximately 13,768 hours, or one and a half year worth of CPU.

\subsection{Choosing an appropriate tree decomposition}
When generating multiple tree decompositions for the same graph, we observed that both the decomposition structure and the choice of the root node can significantly affect performance.
To estimate these effects\jk{, similar to Abseher et al.~\cite{ADMW15}}, we implemented a preprocessing simulation:
Each decomposition is traversed in postorder, and we estimate the number of solutions at each node to predict runtime and memory usage. 
We then select the decomposition that minimizes a weighted combination of both.

Earlier versions of our approach focused solely on minimizing runtime, which did not always lead to reduced memory usage (see \cref{tab:algorithm_improvements}). 
Only in the final version did we optimize both runtime and memory simultaneously. 
A detailed breakdown of these improvements is provided in \cref{sec:appendix_choosing_a_td}.

Incorporating this performance estimator into the algorithm led to a significant improvement in runtime. 
For the Ahrem dataset we estimated a runtime decrease to around 7,130 hours, a $48\%$ decrease in the estimated runtime.

\jk{
Additionally, in~\cref{sec:appendix_choosing_a_td} we show that our estimator reliably selects an almost optimal tree decomposition across our datasets. 
In nearly all cases, the decomposition with the best predicted  score closely matched the actual best decomposition in terms of lowest runtime and memory consumption. 
For example, in an experiment with 100 decompositions, our estimator selected the decomposition that ranked first in both runtime and storage.
The correlation between the predicted score and actual performance was $0.91$ for runtime and $0.97$ for storage.
We obtained similarly strong results when estimating performance for different root node choices, again observing a high correlation between predicted and actual runtime and memory usage.
}

\subsection{Join node algorithm}  
\label{sec:join_heap}
Let \( t \) be a join node with children \( t_1 \) and \( t_2 \). 
For each subset \( S \subseteq X_t \), we compute \( \PP_t^S =\PP_{t_1} \oplus \PP_{t_2} \) by combining all solutions \( p_1 \in \PP_{t_1}^S \) with all solutions \( p_2 \in \PP_{t_2}^S \) and subsequently remove all dominated solutions. 
To do this efficiently, we employ a lexicographical ordering of the input lists and process the combinations using a min-heap. 
Assume \( |\PP_{t_1}^S| \leq |\PP_{t_2}^S| \). 
We initialize a min-heap with one heap node for each \( p_1 \in \PP_{t_1}^S \), each pointing to the first entry in \( \PP_{t_2}^S \). 
We then repeatedly extract the root of the min-heap, process the corresponding solution pair \( (p_1, p_2) \), and increment its pointer to reference the next solution \( p_2' \in \PP_{t_2}^S \) that follows \( p_2 \) in lexicographical order. 
If such a solution \( p_2' \) exists, we update the heap accordingly, otherwise, we remove the heap node from the data structure. 
This process continues until the heap is empty, ensuring that every valid solution pair has been considered.  
This min-heap mechanism ensures that all candidate pairs are enumerated in lexicographical order, which enables an efficient check for Pareto-optimality.
A pseudocode description can be found in \cref{apx:join_algorithm}.

\subsection{Optimizing join node computations}
As with many dynamic programming algorithms over a nice tree decomposition, join nodes are the main computational bottleneck. 
At each join node $t$ we needed to compute $\PP = \PP_1 \oplus \PP_2$ for two Pareto sets $\PP_1,\PP_2$ exactly $2^{\vert X_t\vert}$ many times.
Computing $\PP_1 \oplus \PP_2$ for $\dd\geq 2$ can be done in $\bigO(p_{\max}^2 \log^{\max\{\dd-2,1\}}(\pmax^2))$ (see \cref{apx:pareto_runtime}).
In the worst case, $\PP_1 \oplus \PP_2$ can contain up to $\vert \PP_1\vert \cdot \vert \PP_2 \vert$ many solutions if every combination is Pareto-optimal.
By considering every combination $(p_1,p_2)\in \PP_1\times \PP_2$ and filtering out all dominated ones, this becomes very time-consuming.
However, in practice we observed only a roughly linear growth in size relative to the larger input Pareto set (see \cref{apx:join_solutions}).
This discrepancy indicated that many combinations were computed that would ultimately be discarded.
To circumvent this problem, we utilize a heuristic   pruning strategy to efficiently exclude many non-Pareto-optimal combinations at the same time. 

We construct a heuristic set $\HH$ for $\PP$ using a small subset of $\PP_1$ and $\PP_2$.
Afterwards we partition $\PP_1,\PP_2$ and $\HH$ into multiple sections and compute lower bounds for each section in $\PP_1,\PP_2$ and upper bounds for $\HH$. 
We then iterate over all combinations of lower bounds and compare them with the upper bounds. 
If a combination of lower bounds is entirely dominated by all upper bounds, we can conclude that no solution in these sections will be Pareto-optimal (PO) in $\PP$ and safely skip them in the min-heap.
To construct $\HH$, we apply the same optimization recursively until a base case is reached. 
Furthermore, we apply multi-threading for each join node $t$, enabling parallel computation of multiple entries $D[t,S]$ at the same time.

The heuristic pruning process introduces a trade-off between its own runtime and the efficiency gains from skipping combinations.
By selecting appropriate parameters, we achieved a favorable balance, reducing the runtime by nearly $98\%$ and ultimately solving Ahrem in under seven days.

\subsection{Join-forget nodes}
To further improve runtime and reduce memory usage, we analyzed recurring patterns in the algorithm process that allowed for optimization.
In many instances, after computing a join node, many of these solutions were immediately discarded in subsequent forget nodes. 
To address this inefficiency, we introduce a new node type, called \emph{join-forget} node, which merges a join node with subsequent forget nodes into a single operation.
Formally, for a join node~$t$ in the tree decomposition, if its unique parent and all ancestors up to the next non-forget node are forget nodes, we replace~$t$ and all these forget nodes by a single join-forget node~$t'$. 
This new node retains the original bag $X_t$ and additionally stores the set $F$ of forgotten vertices from the removed forget nodes. 

We consider which lists would be merged in the upcoming forget nodes of the original join node while combining solutions in the min-heap. 
More specifically, for a join-forget node with forgotten vertices $F$, we only create a min-heap for each subset $S' \subseteq X_t \setminus F$. 
Each such heap allows us to efficiently iterate over all combinations from the union
$\bigcup_{S\in \{S' \cup F' \mid F'\subseteq F \}} \{p_1\cup p_2 \mid p_1 \in \mathcal{P}_{t_1}^S, p_2 \in \mathcal{P}_{t_2}^S \}$.

This setup also enhances our join node heuristic. 
Instead of computing a heuristic set $\HH$ and upper bounds for each pair $\mathcal{P}_{t_1}^S,\mathcal{P}_{t_2}^S$ separately, we reuse the join-forget structure.
Specifically, for each subset $S' \subseteq X_t \setminus F$, we compute the same heuristic set $\HH$ for all $S\in \{S' \cup F' \mid F'\subseteq F \}$. We do so by using subsets of $\mathcal{P}_{t_1}^S$ and $\mathcal{P}_{t_2}^S$ for all such $S$.
As before, we recursively apply this heuristic until a small base case is reached.
A side effect of this strategy is that it significantly increases RAM usage, as each thread now requires not just three lists in memory (namely $\mathcal{P}_{t_1}^S,\mathcal{P}_{t_2}^S$ and $\mathcal{P}_{t_1}^S \oplus \PP_{t_2}^S$), but up to $2^{\vert F \vert +1}+1$ lists.

This strategy led to another significant boost in runtime for many of our datasets. 
For the Ahrem dataset, using join-forget nodes reduced the runtime by $95\%$, allowing us to solve the instance in almost $8$ hours. 
Moreover, storage usage also decreased by $59\%$.
We emphasize that the improvements are not solely due to the use of join-forget nodes, but by adapting the tree decomposition performance estimator accordingly as well.

\begin{table}
    \centering
    \begin{tabular}{lrrrrrrrrr}
        \toprule
        \shortstack[t]{Dataset \\~} & \shortstack[t]{$\Tw$\\~} & \shortstack[t]{Time\\{[h]}} & \shortstack[t]{Storage\\{[GIB]}} & \shortstack[t]{\#PO\\Solutions} & \shortstack[t]{Percentage\\nonextreme} & \shortstack[t]{Graph\\\#Vertices} & \shortstack[t]{Graph\\\#Edges} & \shortstack[t]{$\vert V(\mathbb{T}) \vert$\\~} \\
        \midrule
        Osterloh   & 14 & 0.15 & 5.8   & 83055   & 99.43 & 1717 & 1887 & 7586   \\
        Ahrem      & 17 & 3.1  & 122   & 219969  & 98.99 & 5280 & 5607 & 21778  \\
        Lottbek    & 19 & 10.5 & 355   & 316567  & 99.48 & 5595 & 6101 & 24426  \\
        Erlenbach  & 22 & 70.7 & 1754  & 303565  & 99.47 & 5644 & 6284 & 25682  \\
        \bottomrule
    \end{tabular}
    \caption{Overview over some of the datasets we managed to solve. For all datasets see \cref{tab:all_datasets} in the appendix.}
    \label{tab:some_datasets}
\end{table}

\subsection{Introduce-join-forget nodes}
Together with the newly defined join-forget node, we searched for new patterns in the algorithm computations to reduce the storage consumption further. 
To this end, we introduce a second node type called \emph{introduce-join-forget} node, which further generalizes the concept of join-forget nodes. 
After replacing applicable structures with join-forget nodes, we consider each join-forget node $t$ with child nodes $t_1,t_2$, where at least one child ($t_1$, $t_2$, or both) is an introduce node. 
We remove all consecutive introduce nodes among the children and store their introduced vertices as additional information in $t$.

A major inefficiency of standard introduce nodes arises from the exponential growth in possible subsets $S \subseteq X_t$. 
When a new vertex $v$ is introduced, each existing solution $p \in \PP_t^S$ for $S \subseteq X_t \setminus \{v\}$ is, with a constant weight adjustment, duplicated for both $S$ and $S\cup \{v\}$. 
This significantly increases both memory usage and I/O overhead in the outsourcing step, making it a significant bottleneck.
To resolve this, we avoid explicit duplication of solutions in introduce-join-forget nodes. 
Let $I \subseteq X_t$ be the set of vertices whose introduction was skipped on one side. 
If a set $\PP^S$ is required for the min-heap computations and $S \cap I \neq \emptyset$, we instead load the solutions of the surface-pointer file of $\PP^{S \setminus I }$ and add the constant weight adjustments according to $S\cap I$.
Additionally, we create an origin-pointer entry for such a solution only if they are identified as Pareto-optimal in the min-heap. 
This strategy also reduces the growth of the origin-pointer file.

Together with adapting the performance estimator, this final improvement reduced the memory consumption by an additional $92\%$, resulting in a total of only 122GB of storage required for the Ahrem dataset, while also achieving a further $15\%$ decrease in runtime, bringing the total runtime down to 7 hours.

\begin{figure}[t]
    \centering
    \includegraphics[width=0.32\textwidth]{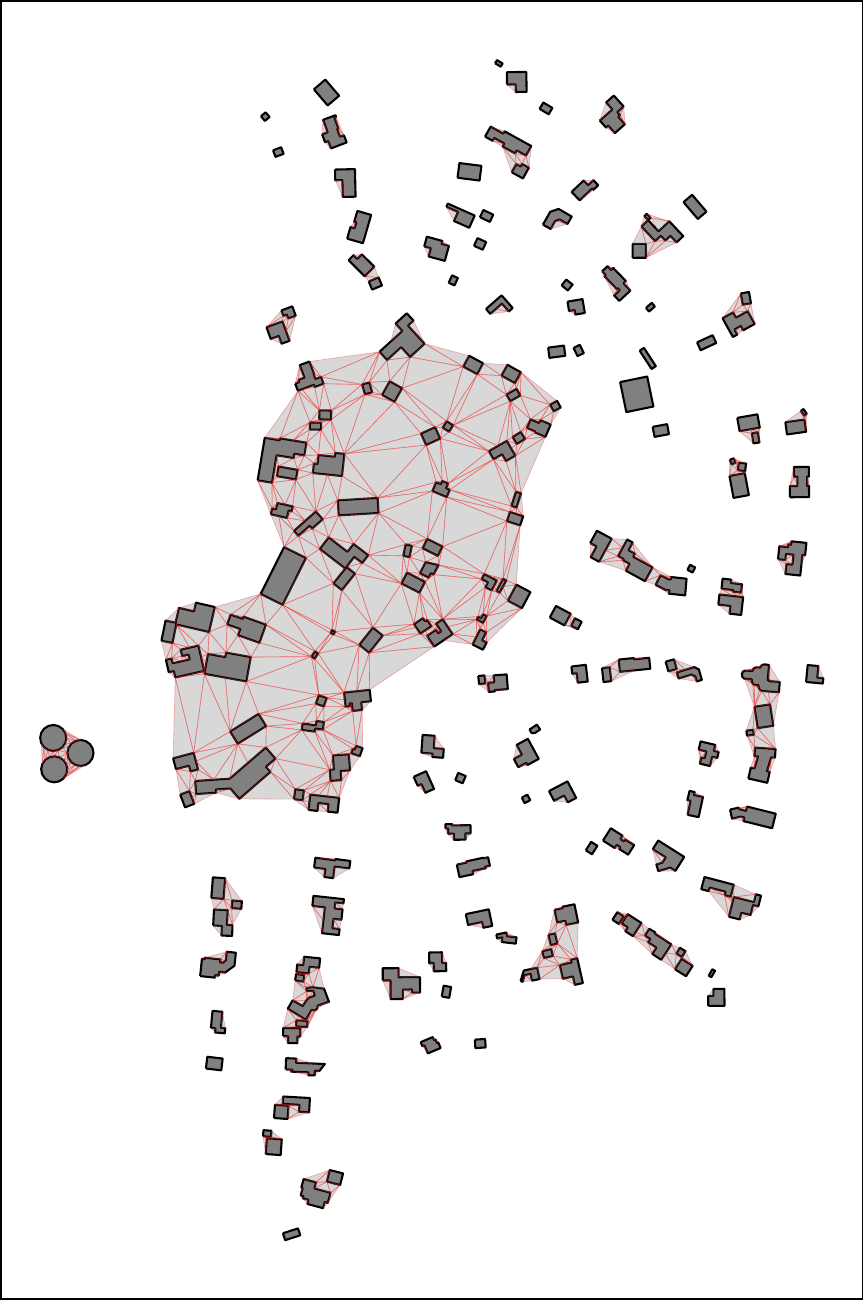}
    \includegraphics[width=0.32\textwidth]{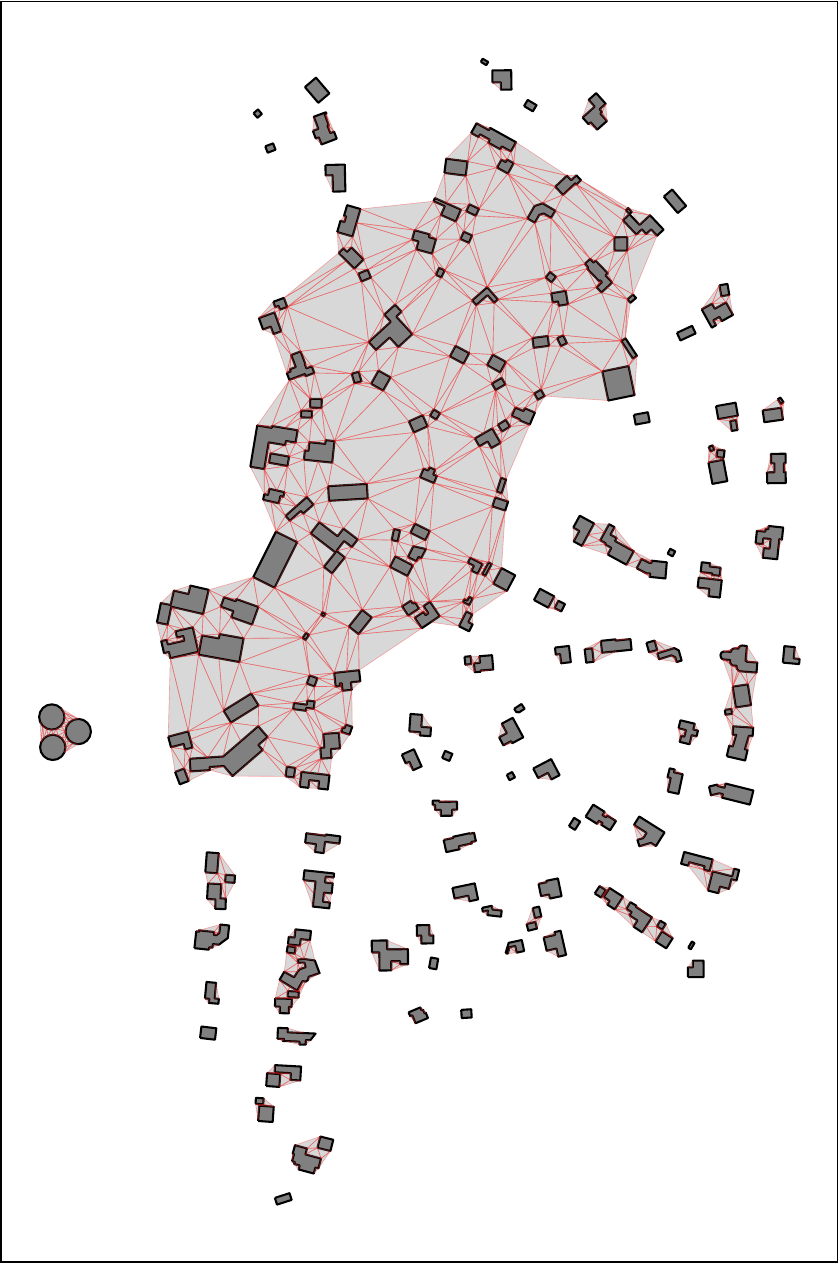}
    \includegraphics[width=0.32\textwidth]{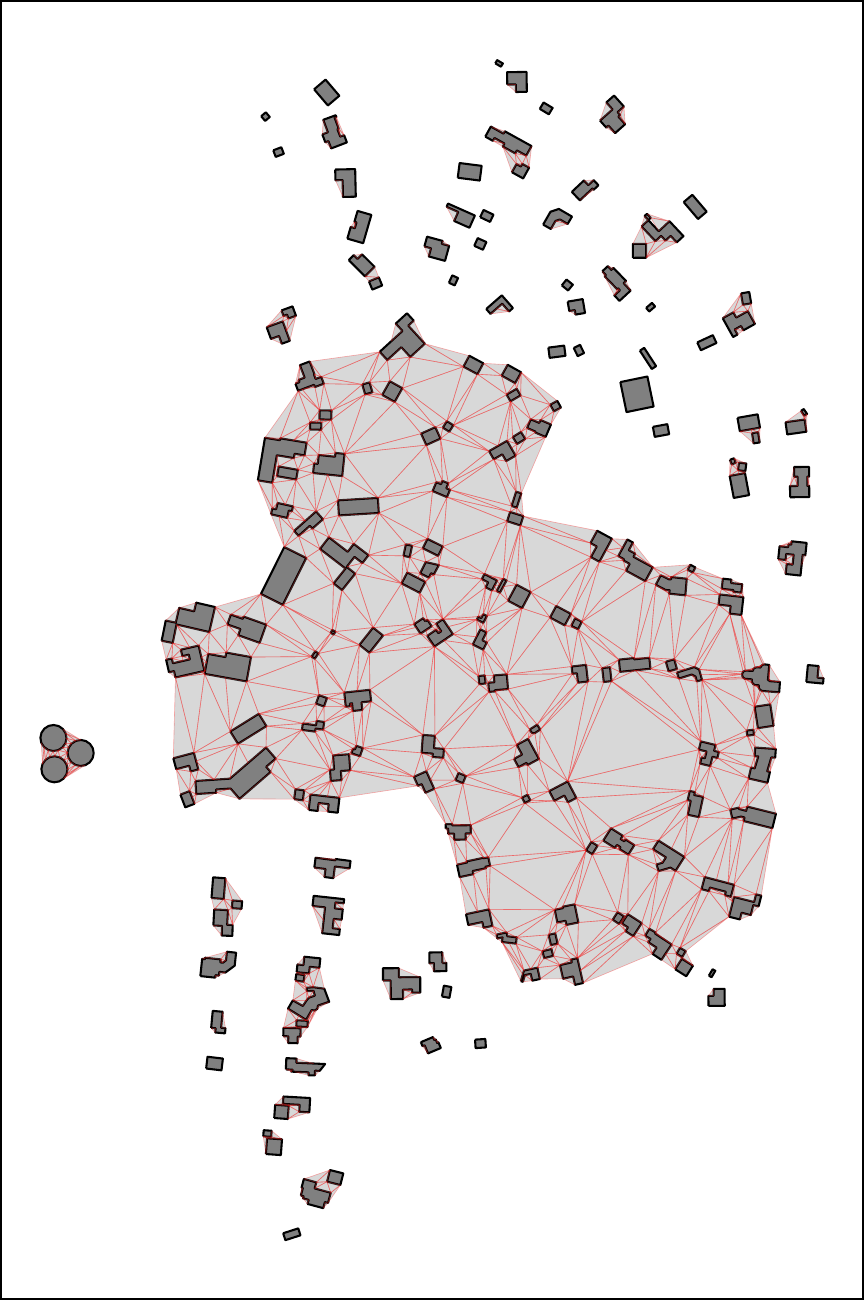}
    \caption{Three PO aggregations of the dataset Osterloh. Input polygons are filled dark gray, chosen triangles are filled light gray. Middle: nonextreme solution. Left/Right: Closest extreme solutions to the nonextreme solution.}
    \label{fig:extreme_vs_nonextreme}
\end{figure}

\section{Results}
Using an efficient data structure and multiple algorithmic improvements described in \cref{sec:Exp}, we were able to compute the full Pareto set for many datasets, including instances with treewidths of up to $22$, within a reasonable amount of time.
\cref{tab:some_datasets} shows a subset of datasets we successfully solved. 
All of these computations were performed on a high performance computing system. 
We used 96 threads of two Intel Xeon ``Sapphire Rapids'' 2.10GHz and about two terabytes of storage.
The full table and a more detailed description can be found in the appendix, \cref{tab:all_datasets}.

For the polygon aggregation problem, it is known that all extreme solutions are hierarchically compatible~\cite{RDHR21}. In \cref{fig:extreme_vs_nonextreme} we illustrate this by showing two consecutive extreme solutions $p_1,p_2$ and one nonextreme solution $p'$ that lies between them (i.e. $A(p_1)<A(p')<A(p_2)$ and $P(p_1)>P(p')>P(p_2)$). 
While the nonextreme solution $p'$ is also a viable aggregation, it exhibits a significantly different structure compared to the extreme solutions. 
Focusing solely on extreme solutions may lead to large gaps in the solution space, as seen in \cref{fig:osterloh_weights}.

Across the datasets we solved, extreme solutions accounted for only less than one percent of the total PO solutions on average (see \cref{tab:all_datasets}).
As a result, computing the set of nonextreme solutions can therefore lead to a significant richer range of solutions for the user to choose from, while the extreme solutions are forced being hierarchical depending on each other.

We published a tool to interactively investigate all PO solutions for the dataset Osterloh. 
The tool is available at \url{https://github.com/Tarek-pub/Bicriteria_Aggregation_plotting}.

\begin{figure}[t]
    \centering
    \includegraphics[width=0.5\textwidth]{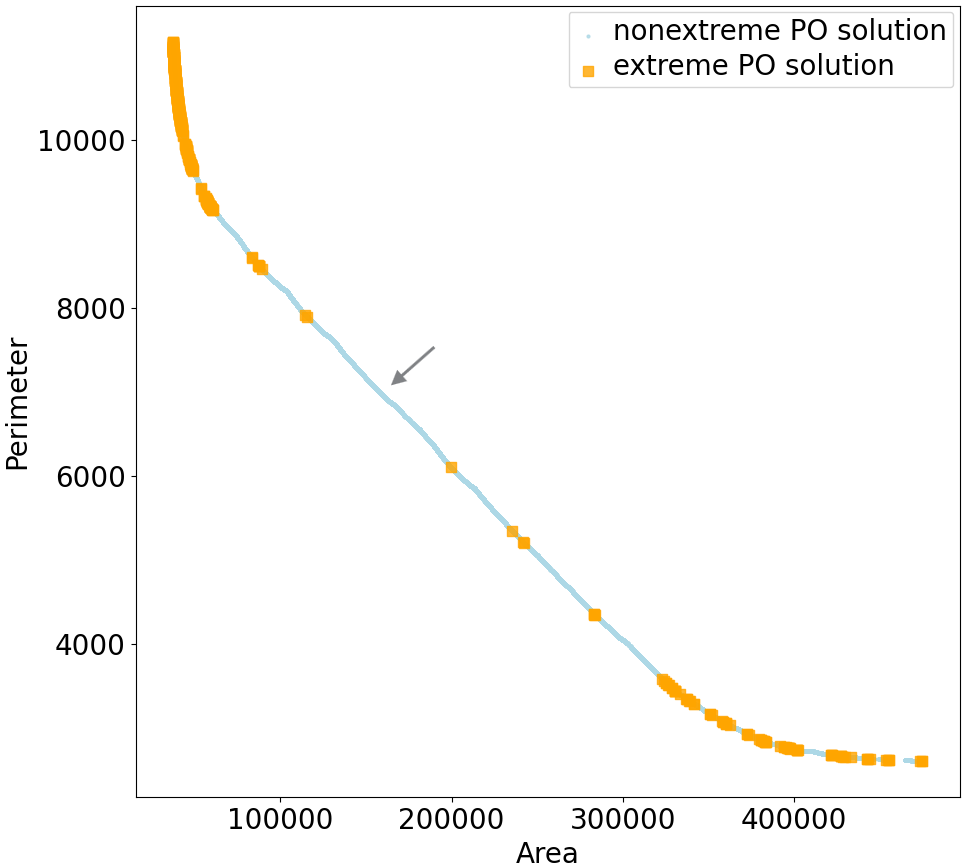}
    \caption{The weights of all PO solutions of Osterloh. The arrow indicates which nonextreme solution is shown in \cref{fig:extreme_vs_nonextreme}.}
    \label{fig:osterloh_weights}
\end{figure}
\section{Conclusions}

We presented the first algorithms for computing Pareto sets using dynamic programming over tree decompositions and showed that this framework can naturally be applied to various multiobjective optimization problems. 
The main motivation for our work was the article of Rottmann et al.~\cite{RDHR21}, who raised the question of whether it is possible to compute the Pareto set in output-polynomial time.  
We also conducted an experimental analysis of the polygon aggregation problem on real-world instances and developed several techniques to improve both runtime and memory usage.

We want to highlight that the theoretical running times of our algorithms are only worst-case bounds. 
In practice, because not every bag in a tree decomposition has the maximum possible size and the number of join nodes is often relatively small, the actual running time will usually be smaller. 
Additionally, it is a common phenomenon for multiobjective optimization that Pareto sets are not too large for real-world inputs. 
Hence, the dependency on the size of the largest Pareto set is not prohibitive in practice. 
For problems like the multiopjective minimum spanning tree problem, the multiopjective TSP, and many other problems, this is further supported by theoretical results in smoothed analysis, which shows that Pareto sets are expected to be polynomially bounded when instances are randomly perturbed~\cite{BrunschR15}. 
\bibliography{refs}
\newpage
\appendix

\section{Pareto sets computation}\label{apx:pareto_runtime}

In a naive approach, computing $\PP_1 + \PP_2$ takes time $\bigO(\vert \PP_1 \vert \vert \PP_2 \vert)$, and computing $\PP_1 \oplus \PP_2$ takes time $\bigO(\vert \mathcal{P}_1 \vert^2\vert \mathcal{P}_2\vert^2)$. 
This is because there are $\vert \mathcal{P}_1 \vert\vert \mathcal{P}_2\vert$ possible combinations of one solution from each set, and removing dominated solutions requires pairwise comparisons between all resulting combinations. Since the number $\dd$ of objectives is assumed to be constant, combining two solutions and comparing their cost vectors can be done in constant time. 
In this analysis we ignored the encoding length of a solution, which would contribute another factor into the running time. 
However, in our algorithms we do not store the solutions explicitly, but instead only their cost vectors, which is sufficient for our purposes and is of constant encoding length for each solution.

Assume the size of each Pareto set can never exceed some value $\pmax$. We can improve the running time of both operations as follows:
For $\PP_1 + \PP_2$ and $\dd=2$, if both sets are sorted by their first objective, we can compute the combined Pareto set via a simple sweep procedure in time $\bigO(\max\{\vert \PP_1 \vert, \vert \PP_2 \vert\}) = \bigO(\pmax)$.
For $\dd \geq 3$, to compute $\sum_{i=1}^m \PP_i$ we can maintain a pointer $p_i$ in each sorted list $\PP_i$ and store the current candidate in a min-heap, allowing retrieval the next-best solution in $\bigO(1)$ and updating the heap in $\bigO(\log m)$.
Since each pointer is incremented at most $\pmax$ times, we perform at most $\pmax \cdot \log m$ heap operations, yielding a total running time of $\bigO(\pmax \cdot m \log m)$. 
For $\dd\geq 3$ we use an algorithm introduced by Kung et al.~\cite{Kung1975}, which finds all componentwise maximal vectors in a $\dd$-dimensional set in $\bigO(n\log n)$ time for $\dd = 2$ and in $\bigO(n \log^{\dd-2}n)$ time for $\dd\geq 3$.
In case of $\dd\geq 3$, we can first compute the union $\PP_1 \cup \PP_2$ and then apply Kung et al.'s algorithm to extract the final Pareto set. This yields a running time of $\bigO((\vert \PP_1 \vert + \vert \PP_2 \vert) \cdot \log^{\dd-2}(\vert \PP_1 \vert + \vert \PP_2 \vert)) = \bigO(p_{\max} \cdot \log^{\dd-2}(\pmax))$.
Thus, for any $\dd\geq 2$, we obtain a running time of $\bigO(\pmax \cdot \log^{\dd-2}(\pmax))$ for computing $\PP_1 + \PP_2$. 
To compute the combined Pareto set $\sum_{i=1}^m \PP_i$ for $m\geq 3$ and $\dd\geq 3$, we merge all sets $\PP_i$ and then use the algorithm by Kung et al., resulting in a runtime of $\bigO(\pmax \cdot m \cdot \log^{\dd-2}(\pmax \cdot m))$.

For $\PP_1 \oplus \PP_2$, we first compute the set $S^{\mathcal{P}_1,\mathcal{P}_2}$ of size at most $\vert \mathcal{P}_1\vert\vert\mathcal{P}_2\vert \leq p_{\max}^2$ in time $\bigO(\vert \mathcal{P}_1\vert\vert\mathcal{P}_2\vert) = \bigO(p_{\max}^2)$, and then compute $\mathcal{P}_1\oplus \mathcal{P}_2$ from this set in \\$\bigO(\vert\mathcal{P}_1\vert\vert\mathcal{P}_2\vert \log^{\max\{\dd-2, 1\}}(\vert \mathcal{P}_1\vert \vert \mathcal{P}_2 \vert)) = \bigO(p_{\max}^2 \log^{\max\{\dd-2,1\}}(p_{\max}^2))$ for $\dd\ge 2$, using the algorithm by Kung et al.~\cite{Kung1975}. 
Therefore, for every $\dd\geq 2$ the final running time for computing $\PP_1 \oplus \PP_2$ is $\bigO(p_{\max}^2 \log^{\max\{\dd-2,1\}}(p_{\max}^2))$.

%!TEX root = ParetoFPT.tex

\section{Nice tree decomposition with introduce edges}\label{apx:introduce_edges}
To simplify the description for the multiobjective minimum spanning tree problem and multiobjective traveling salesman problem, we follow the approach in~\cite{CF15} and adjust the nice tree decomposition to include an additional type of node that introduces edges one by one. 
For every $t\in V(\mathbb{T})$, we define $V_t$ and $E_t$ as the set of vertices and edges that are revealed up to node $t$, respectively. The graph $G_t=(V_t,E_t)$ is then the subgraph of $G$ revealed up to node $t$. 
Formally, $V_t$ is the union of all $X_{t'}$ for nodes $t'\in V(\mathbb{T})$ in the subtree rooted at $t$. 
We define $E_t$ recursively as follows: $E_t=\emptyset$ for all leaves $t\in V(\mathbb{T})$. 
Now, let $t\in V(\mathbb{T})$ be an introduce or forget node, and let $t'\in V(\mathbb{T})$ be its child. Then we define $E_t=E_{t'}$. 
If $t\in V(\mathbb{T})$ is a join node and $t_1,t_2\in V(\mathbb{T})$ are its children, we define $E_t=E_{t_1}\cup E_{t_2}$. 
Edges are introduced by the following type of nodes.

\begin{itemize}
    \item \textbf{Introduce edge node}: For every edge $e=\{u,v\}\in E$ there is exactly one introduce edge node $t\in V(\mathbb{T})$ with $u,v\in X_t=X_{t'}$ for its only child bag $t'$ and $E_t = E_{t'}\cup \{e\}$.
\end{itemize}

For every edge $\{u,v\}\in E$, there must exist at least one node $t$ with $u,v\in X_t$. 
If we follow the unique path from $t$ to the root, we have to pass by a forget node $t'$, which forgets either $u$ or $v$. 
Without loss of generality, we assume that the introduce edge node for $\{u,v\}$ is placed between node $t'$ and its only child node. 
One consequence of this ordering is that for a join node $t\in V(\mathbb{T})$ the set $E_t$ does not contain any edges $\{u,v\}$ for $u,v\in X_t$, because these are introduced only further upwards in the tree.
When showing correctness for all possible types of nodes, we will denote the function that handles edge nodes as $\intrEdgeNode$.

It is worth noting that, given a tree decomposition of width $\Tw$ for some graph $G$, a nice tree decomposition of the same width $\Tw$, which also includes introduce edge nodes, and contains $\bigO(\vert V(G)\vert \cdot \Tw)$ nodes can also be computed in polynomial time~\cite{CF15}.

\section{\texorpdfstring{Proofs of Lemmas in multiobjective $s$-$t$ cut Section~\ref{sec:M_st_C}}{Proofs of Lemmas in multiobjective s-t cut Section~\ref{sec:M_st_C}}}\label{apx:proof_st_C}

\begin{figure}[t]
\centering
\includegraphics[width=.5\linewidth]{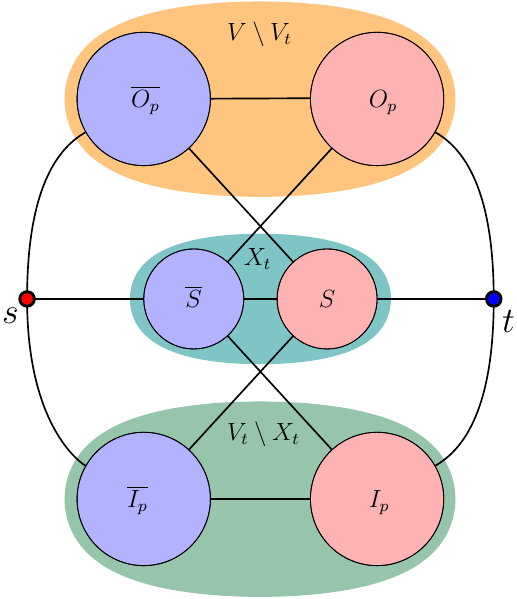}
\caption{Visualization of a partitioning for some solution $p\subseteq V$. All sets drawn in red are associated with vertex $s$ and all blue sets with $t$. Only drawn edges can contribute to the cost $w'(\delta(p))$.}
\label{fig:stCutNeighbourhood}
\end{figure}

\begin{figure}[t]
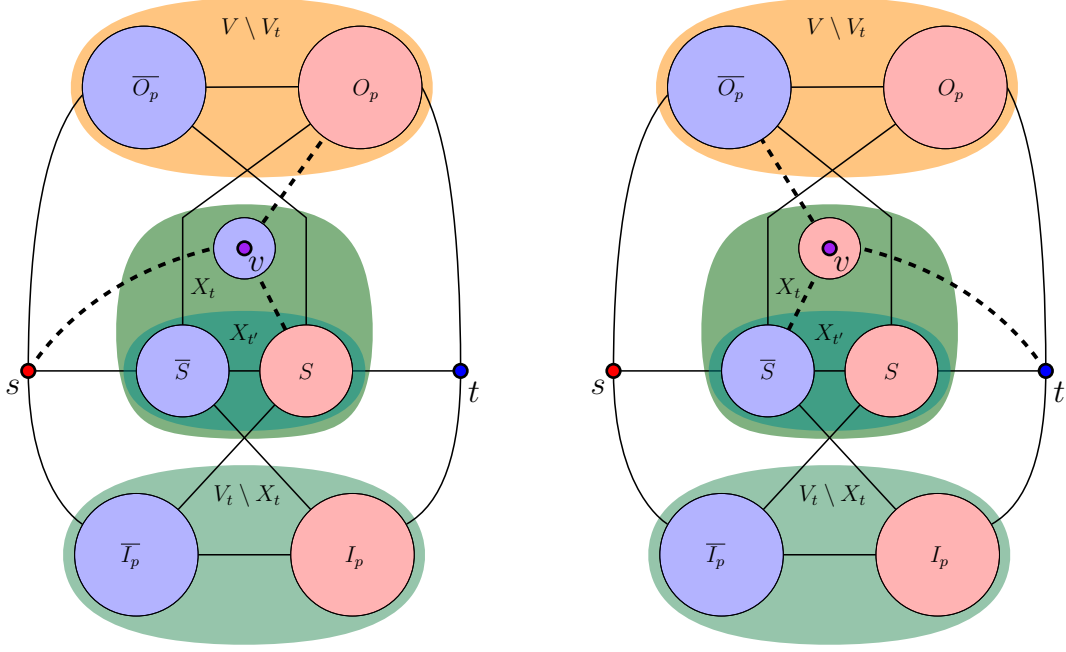

\begin{minipage}{.45\linewidth}
\includegraphics[width=\linewidth, page=2]{figures/stCutNeighbourhood.pdf}
\end{minipage}
\hfill
\begin{minipage}{.45\linewidth}
\includegraphics[width=\linewidth, page=3]{figures/stCutNeighbourhood.pdf}  
\end{minipage}
\caption{Transition of a vertex $v$ for an introduce node. Changing costs depending on $v$ are shown with dashed lines.}
\label{fig:stCutNeighborhoodIntro}
\end{figure}

For ease of notation, we subdivide the set $V$ with respect to an arbitrary node $t\in V(\mathbb{T})$ and solution $p \subseteq V$ with bag intersection $S:= X_t\cap p$ into the following subsets:
\begin{itemize}
\item $I_p:= p\cap (V_t \setminus X_t)$: vertices assigned to $s$ in $V_t\setminus X_t$.
\item $\overline{I_p}:= (V_t \setminus X_t) \setminus p$: vertices assigned to $t$ in $V_t \setminus X_t$.
\item $\overline{S}:= X_t\setminus S$: vertices assigned to $t$ in $X_t$.
\item $O_p:= p \setminus V_t$: remaining vertices assigned to $s$.
\item $\overline{O_p}:= (V\setminus V_t)\setminus p$: remaining vertices assigned to $t$.
\end{itemize}
For a visualization of these sets, see Figure~\ref{fig:stCutNeighbourhood}.
Due to the properties of a tree decomposition, there cannot be any edges between $V\setminus V_t$ and $V_t\setminus X_t$.
Hence, we always have $\co(\overline{O_p}, I_p) = \co(O_p, \overline{I_p})=0$.

In the following, we assume that each table entry $D[t,\SSS]$ stores the full set of solutions and not only their respective costs. 
Afterwards, we describe how one can reconstruct the actual solutions, if only the costs and a pointer (or two pointers in the case of join-nodes) are given. 
Since each subproblem depends only on the cost vector and bag assignment $S$, it suffices to compare cost vectors to determine the solutions that are Pareto-optimal.

\IntroduceNodesLemmaST*

\begin{proof}
Let $p\subseteq V_t$ be a Pareto-optimal solution in $\PP_t^S$ for the instance $(t,S)$. 
We distinguish two cases, depending on whether the introduced vertex $v$ is included in $p$ or not.

If $v\notin p$, this solution must also be Pareto-optimal for the instance $(t',S)$.
Assume for a contradiction that there exists a  solution $q\in \PP_{t'}^{S}$ with $q \prec p$. 
Since $q$ is also feasible for the instance $(t, S)$, this contradicts the assumption that $p$ is Pareto-optimal in $(t,S)$.
By the assumption of optimality for all child entries, $D[t',S]$ must contain $p$, so $D[t, S]$ will also contain $p$.

\noindent Assume now $v \in p$ and let $\tilde{p}=p\setminus \{v\}$.
If $\tilde{p}$ is Pareto-optimal for the instance $(t',S \setminus \{v\})$, we will also compute $p$, since we have already computed all Pareto-optimal solutions $\PP_{t'}^{S\setminus \{v\}}$ for the instance $(t', S\setminus \{v\})$ and we consider every possibility to add $v$ to every solution in this set.

By computing the cost of $\tilde{p}$ we get 
\begin{align*}
\co'(\delta(\tilde{p}))&=\co(I_{\tilde{p}}, {\overline{I_{\tilde{p}}}}) + \co(S, \overline{S}) + \co(O_{\tilde{p}}, \overline{O_{\tilde{p}}}) \\
&+ \co(S, \overline{I_{\tilde{p}}}) + \co(I_{\tilde{p}}, \overline{S}) + \co(S, \overline{O_{\tilde{p}}}) + \co(O_{\tilde{p}}, \overline{S}) \\
& + \co(s, \overline{O_{\tilde{p}}}) + \co(s, \overline{S}) + \co(s, \overline{I_{\tilde{p}}}) \\
& + \co(O_{\tilde{p}}, t) + \co(S, t) + \co(I_{\tilde{p}}, t)
\end{align*}
where we used that there are no edges between the sets $V\setminus V_t$ and $V_t\setminus X_t$.

\noindent If we now consider the change in cost when adding $v$ to $S$, i.e., the difference of $\co'(p)$ and $\co'(\tilde{p})$, we obtain
\begin{align*}
\co'(\delta(p)) - \co'(\delta(\tilde{p}))  &= \co(v, (V\setminus p) \cup \{t\}) - \co(\tilde{p}\cup \{s\}, v)  \\
&= \co(v, \overline{S}\setminus \{v\}) + \co(v, \overline{O_{\tilde{p}}}) + \co(v, t)\\ 
&-\co(S\setminus \{v\}, v) - \co(O_{\tilde{p}}, v) - \co(s, v) \\
&= \co(v, \overline{S}\setminus \{v\}) + \co(v,V\setminus V_t) + \co(v,t) - \co(S\setminus \{v\},v) - \co(s,v)\\
&= \co(v, \overline{S}) + \co(v,S\setminus \{v\}) + \co(v, V\setminus V_t) + \co(v,t) \\
&-\co(S \setminus \{v\},v) - \co(S \setminus \{v\},v) - \co(s,v)\\
&=\co(v, X_{t'} \cup (V\setminus V_t) \cup \{t\}) - 2\co(S\setminus \{v\},v) - \co(s,v).
\end{align*}
This holds because, when $v$ is introduced, it belongs to the set $V\setminus V_{t'}$, and thus can only be adjacent to vertices from $X_{t'}$ or $V\setminus V_{t'}$, for a visualization, see Figure~\ref{fig:stCutNeighborhoodIntro}. 
We also used that $O_{\tilde{p}}=\emptyset$ and $\overline{O_{\tilde{p}}}=V\setminus V_t$ and $\co(v,X_{t'})=\co(v,\overline{S}) + \co(v,S\setminus \{v\})$.
If the subset $S\subseteq X_{t'}$ is fixed and we add $v$ to any solution $p\subseteq V_{t'}\setminus X_{t'}$, then the cost $\co'(p)$ only changes by a constant amount that is independent of $p$.
Therefore, if a solution $p\subseteq V_{t}\setminus X_{t}$ is Pareto-optimal in the instance $(t, S)$, its remaining subsolution $p\setminus \{v\}$ will also be Pareto-optimal for the instance $(t', S \setminus \{v\})$.

For the computation of the final set we only need to add for each solution $p\in D[t', S\setminus \{v\}]$ a copy which includes $v$.
This can be achieved in $\bigO(\pmax)$, since we only maintain the costs of every solution and only need to add a vector of constant size, depending on $S$ and $V$.
\end{proof}

\ForgetNodesLemmaST*

\begin{proof}
Let $p\in \mathcal{P}_t^S$ be an arbitrary Pareto-optimal solution. 
If $v\notin p$, then $p$ must be contained in $D[t', S]=\mathcal{P}_{t'}^S$.
If $v\in p$, then $p$ must be contained in $D[t', S \cup \{v\}] = \mathcal{P}_{t'}^{S\cup \{v\}}$.
Therefore, the union of the sets $D[t', S\cup \{v\}]$ and $D[t', S \setminus \{v\}]$ contains all candidates for $D[t,S]$.
Since $p\in \PP_t^S$, it cannot be dominated by any other solution in this union and will remain in the final set after removing dominated solutions.

To compute $D[t,S]$, we compute the set $\PP_{t'}^{S\cup \br{v}} + \PP_{t'}^{S\setminus \br{v}}$. 
Since both sets have size at most $\pmax$, and we only store the cost vectors of each solution, the final set can be computed in time $\bigO(\pmax \log^{\dd-2}(\pmax))$, where $\dd$ is the number of objectives.
\end{proof}

\LocalExchangeCuts*

\begin{proof}
Assume we are given two solutions $p_1, p_2 \subseteq V$ with respective costs $\co'(\delta(p_1))$ and $\co'(\delta(p_2))$, and they are identical in their intersection with points in bag $X_t$, i.e., $p_1\cap X_t=p_2\cap X_t$.
Let $S:= X_t\cap p_1 = X_t \cap p_2$.
If $p_1\cap V_t \prec p_2 \cap V_t$ in the instance $(t, S)$, we can improve the solution $p_2$ by replacing its subsolution in $V_t$ with that of $p_1$.
Let $\tilde{p}:= (p_1\cap V_t) \cup (p_2\setminus V_t)$ be the new constructed solution.
For the cost of $\tilde{p}$ we get
\begin{align*}
\co'(\delta(\tilde{p})) &= \co(I_{p_1}, \overline{I_{p_1}}) + \co(S, \overline{S}) + \co(O_{p_2}, \overline{O_{p_2}}) \\
& + \co(S, \overline{I_{p_1}}) + \co(I_{p_1}, \overline{S}) + \co(S, \overline{O_{p_2}}) + \co(O_{p_2}, \overline{S}) \\
& + \co(s, \overline{O_{p_2}}) + \co(s, \overline{S}) + \co(s, \overline{I_{p_1}}) \\
& + \co(O_{p_2}, t) + \co(S, t) + \co(I_{p_1}, t).
\end{align*}
Since $p_1\cap V_t \prec p_2 \cap V_t$ we have
\begin{align*}
&\co'(\delta(p_1\cap V_t))\leq \co'(\delta(p_2 \cap V_t)) \\
\Leftrightarrow~ &\co(I_{p_1} \cup S, \overline{I_{p_1}} \cup \overline{S}) + \co(S, V\setminus V_t) + \co(s,\overline{I_{p_1}}\cup \overline{S} \cup (V\setminus V_t)) + \co(I_{p_1}\cup S,t) \\
\leq~& \co(I_{p_2} \cup S, \overline{I_{p_2}} \cup \overline{S}) + \co(S, V\setminus V_t) + \co(s, \overline{I_{p_2}} \cup \overline{S} \cup (V\setminus V_t)) + \co(I_{p_2}\cup S, t) \\
\Leftrightarrow~ & \co(I_{p_1} \cup S, \overline{I_{p_1}} \cup \overline{S}) + \co(S \cup O_{p_2}, \overline{O_{p_2}}) + \co(O_{p_2}, \overline{S})+\co(s, \overline{I_{p_1}} \cup \overline{S} \cup \overline{O_{p_2}})\\
+&~ \co(I_{p_1} \cup S \cup O_{p_2}, t) \\
\leq~& \co(I_{p_2} \cup S, \overline{I_{p_2}} \cup \overline{S}) + \co(S \cup O_{p_2}, \overline{O_{p_2}}) + \co(O_{p_2}, \overline{S}) + \co(s, \overline{I_{p_2}} \cup \overline{S} \cup \overline{O_{p_2}})\\
+&~\co(I_{p_2} \cup S \cup O_{p_2}, t) \\
\Leftrightarrow~ & \co'(\delta(\tilde{p})) \leq \co'(\delta(p_2)).
\end{align*}
Since $p_1\cap V_t \prec p_2 \cap V_t$, there exists at least one objective $j\in [\dd]$ where we have a strict inequality.
Therefore, if a solution $p\subseteq V$ is Pareto-optimal, each subset $p\cap V_t$ has to be Pareto-optimal in the instance $(t, p\cap X_t)$ as well.
\end{proof}

\JoinNodesLemmaST*

\begin{proof}
We define for a solution $p\subseteq V_t$ with $p \cap X_t = S$ and $i\in [2]$ the following sets: $I_p^i:= p\cap (V_{t_i} \setminus X_t)$, $\overline{I_{p}^i}:=(V_{t_i}\setminus X_t)\setminus p$, $\overline{S}:= X_t\setminus S$ and $O_p:=V\setminus (V_{t_1}\cup V_{t_2})$.
We can write the cost $\co'(\delta(p))$ in terms of its subsolutions $I_p^1, I_p^2$, $S$ and $O_p$, minus some constant depending only on $S$:

For $\co'(\delta(p))$ we get
\begin{align*}
\co'(\delta(p)) &= \co(I_p^1 \cup I_p^2 \cup S \cup \{s\}, \overline{I_p^1} \cup \overline{I_p^2} \cup \overline{S} \cup O_p \cup \{t\}) \\
&= \co(I_p^1, \overline{I_p^1} \cup \overline{S}) + \co(S, \overline{I_p^1}) + \co(S, \overline{I_p^2}) \\
&+ \co(S, \overline{S} \cup O_p) + \co(s, \overline{I_p^1} \cup \overline{I_p^2} \cup \overline{S} \cup O_p) + \co(I_p^1 \cup I_p^2 \cup S, t).
\end{align*}

For the sum of its partial solutions we get
{\small{
\begin{align*}
&\co'(\delta(p \cap V_{t_1})) + \co'(\delta(p \cap V_{t_2}))\\
=~& \co((p \cap V_{t_1}) \cup \{s\}, V\setminus (p \cap V_{t_1}) \cup \{t\}) + \co((p \cap V_{t_2}) \cup \{s\}, V\setminus (p \cap V_{t_2}) \cup \{t\}) \\
=~& \co(I_p^1 \cup S \cup O_p \cup \{s\}, \overline{I_p^1} \cup (V_{t_2}\setminus X_t) \cup \overline{S} \cup O_p \cup \{t\}) \\
+~& \co(I_p^2 \cup S \cup O_p \cup \{s\}, \overline{I_p^2} \cup (V_{t_1}\setminus X_t) \cup \overline{S} \cup O_p \cup \{t\}) \\
=~& \co(I_p^1 \cup S \cup \{s\}, \overline{I_p^1} \cup \overline{S} \cup \{t\}) + \co(S \cup \{s\}, (V_{t_2} \setminus X_t) \cup \{t\}) + \co(S \cup \{s\}, O_p \cup \{t\}) \\
+~& \co(I_p^2 \cup S \cup \{s\}, \overline{I_p^2} \cup \overline{S} \cup \{t\}) + \co(S \cup \{s\}, (V_{t_1} \setminus X_t) \cup \{t\}) + \co(S \cup \{s\}, O_p \cup \{t\})
\end{align*}
}}

\noindent If we compute $\co'(\delta(p \cap V_{t_1})) + \co'(\delta(p \cap V_{t_2})) - \co'(\delta(p))$ we get exactly the cost $\co'(\delta(S))$.
Therefore, for a fixed selection $S\subseteq X_t$, we can express the cost of any solution $p\subseteq V_t$ as the sum of the costs of its parts in $V_{t_1}$ and $V_{t_2}$, minus the cost the cut induced by $S$ alone.
By Lemma~\ref{lemma:LocalExchangeCuts}, we know that the solutions $p\cap V_{t_1}$ and $p\cap V_{t_2}$ must also be Pareto-optimal in the instances $(t_1, S)$ and $(t_2,S)$, respectively.
By our induction hypothesis, these parts must be contained in $D[t_1, S]$ and $D[t_2, S]$. 
Since our dynamic program considers all combinations of solutions from these two sets and removes only those that are dominated, the solution $p$ must also be computed in the final set $D[t,S]$.

For a given selection $S\subseteq X_t$, we therefore compute the combined set $\PP_{t_1}^S \oplus \PP_{t_2}^S$. 
Since both sets are are of size at most $\pmax$, the number of pairs is bounded by $\pmax^2$. 
If we only consider the respective costs and combine every solution pair by adding their costs and subtract $\co'(\delta(S))$, as described in Section~\ref{sec:Preliminaries}, we can compute this set in 
$\bigO(\pmax^2 \log^{\max \{\dd-2,1\}}(\pmax^2))$.

\end{proof}
Finally, we discuss how to retrieve a concrete solution if, for each instance, we only store the cost vectors. 
As previously mentioned, it suffices to store, for each non-leaf node $t$ and each cost vector $v\in \PP_t^{\mathcal{S}}$, a pointer (or two in case of a join node) to a solution $p'\in \PP_{t'}^{\mathcal{S}'}$, where $t'$ is a child of $t$ and $p'$ was used in the construction of the solution associated with $p$. 
Setting these pointers can be done in the natural way, as already indicated in the respective proofs for each node type.

Since a solution $p$ with cost vector $\co(p)=v$ corresponds to a subset of vertices, and vertices can only be added at introduce nodes, we can reconstruct the full solution $p$ by following the pointers back through the tree and consider each relevant introduce node.
If a vertex $u$ is newly introduced at a node and appears in the selection $\mathcal{S}$, we can conclude that $u\in p$, and include it in the reconstructed solution.

\section{Exponential size Pareto sets for the bicriteria aggregation problem
}\label{apx:proofKP}

In this section, we present a family of instances for the bicriteria triangle aggregation problem that has exponentially many Pareto-optimal solutions. 

In the knapsack problem, we are given $n$ items with profits $p=(p_1,\ldots,p_n)^T\in \RR^n_{>0}$ and weights $w=(w_1,\ldots,w_n)^T\in \RR^n_{>0}$, along with a capacity $T$. 
In the single-criterion version, the goal is to find a subset $I\subseteq [n]$ such that $p(I)=\sum_{i\in I} p_i$ is maximized and $w(I)=\sum_{i\in I} w_i \leq T$. In the bicriteria knapsack problem, we ignore the capacity and have two (conflicting) objectives instead: maximize $p(I)$ and minimize $w(I)$. If one chooses $p_i=w_i=2^i$, one obtains an instance where each of the $2^n$ solutions $I\subseteq\{1,\ldots,n\}$ is Pareto-optimal.

For two polygons $s_1,s_2\in T\cup P$, we define $l(s_1,s_2)$ as the length of the edge separating $s_1$ and $s_2$ if they are adjacent, and $0$ otherwise. 
Additionally, let $Z=\{z\}$ denote the remaining area not covered by triangles or polygons.
Observe that the total perimeter of a solution $S\subseteq T$ equals the sum of all edge lengths between polygons in $S\cup P$ and polygons in $(T\setminus S) \cup Z$.
Thus, we can express the perimeter of a solution $S\subseteq T$ as
\begin{align*}
P(S) = \sum_{s\in S\cup P} \sum_{s'\in (T\setminus S) \cup Z} l(s,s').
\end{align*}

\begin{figure}[t]
\centering
\includegraphics[width=.7\linewidth]{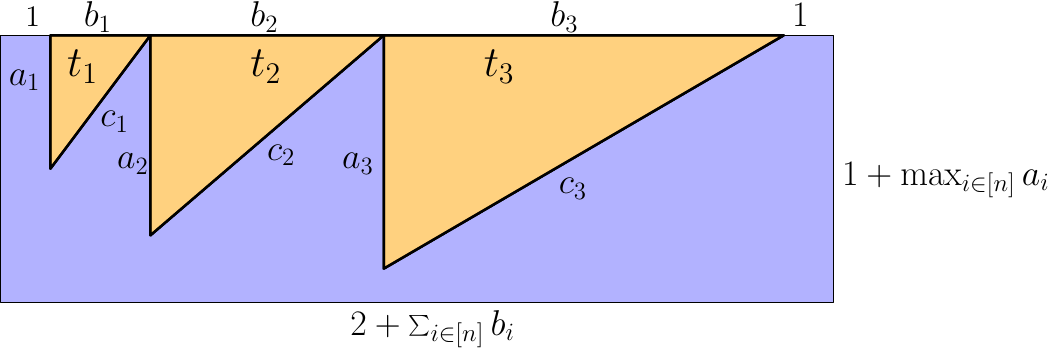}
\caption{Example transformation of a knapsack instance to a triangle instance.}
\label{kp:instance}
\end{figure}

\KPTA*

\begin{proof}
Let $(p,w)$ be an instance for the bicriteria knapsack problem. 
We construct a corresponding instance of the triangle aggregation problem as follows: 
We have a single polygon $P=\{p\}$ and a set of $n$ triangles $T=\{t_1,\ldots,t_n\}$, where each triangle $t_i$ is adjacent to polygon $p$ along two of its edges, and its third edge is connected with the outside area $z$. 
We assume each triangle $t_i\in T$ is right-angled with catheti $a_i, b_i$ and hypotenuse $c_i$, such that only the edge of length $b_i$ is adjacent to the outside area $z$. 
We construct the instance such that choosing triangle $t_i$ increases the total area by $w_i$ and reduces the perimeter by $p_i$.
Specifically, we set $a_i,b_i$ and $c_i= \sqrt{a_i^2+b_i^2}$ such that
\begin{align*}
w_i &= \frac{a_i\cdot b_i}{2} = A(t_i) \qquad\text{and}\qquad
p_i = a_i+\sqrt{a_i^2+b_i^2} - b_i = a_i+c_i-b_i.
\intertext{Solving both equalities for $a_i$ and $b_i$ yields }
a_i &=\frac{2w_i}{b_i} \qquad\text{and}\qquad
b_i = \frac{-p_i^2+4w_i + \sqrt{p_i^4+24p_i^2w_i+16w_i^2}}{4p_i}.
\end{align*}
Since $p_i>0$ and $w_i>0$ for all $i\in [n]$, the values $a_i,b_i,c_i$ are strictly positive as well. 
Now, any subset $I\subseteq [n]$ representing a solution of the knapsack instance yields a corresponding subset of triangles $T_I=\{t_i\in T \mid i \in I\}$ in the polygon aggregation instance. 
The solution $T_I$ has area $A(T_I)=\sum_{i\in I}w_i + A(P)$ and perimeter $P(T_I)=B-\sum_{i\in I}p_i$ where $B\in \mathbb{R}_{> 0}$ denotes the  perimeter of the polygon $p$. 
See Figure~\ref{kp:instance} for an example of this transformation. 
It follows that every (Pareto-optimal) solution of the knapsack problem corresponds to a (Pareto-optimal) solution of the polygon aggregation instance, and vice versa.
\end{proof}

\section{Multiobjective MST}\label{apx:proofsMST}

In the multiobjective minimum spanning tree problem, we are given a graph $G=(V,E)$ and a cost function $\co:E\rightarrow \mathbb{R}^\dd$ on the edges where $\dd\geq 2$ is a constant number of objectives. 
A spanning tree $T\subseteq E$ is associated with the cost vector $\co(T):=\sum_{e\in T}\co(e)$, and the goal is to find all Pareto-optimal spanning trees. 

Our algorithm for computing the set of Pareto-optimal solutions is largely based on the algorithm for computing an optimal Steiner tree presented in~\cite{CF15}. 
We first compute a nice tree decomposition with introduce edge nodes (for a definition of a nice tree decomposition with introduce edge nodes, see \cref{apx:introduce_edges}), and each bag induces a set of subproblems. The difference compared to single-criterion optimization problems is that we do not compute an optimal solution for each subproblem, but the set of Pareto-optimal solutions for each subproblem.

Assume we are given a nice tree decomposition $(\TT, \{X_t\}_{t\in V(\TT)})$ of the input graph with treewidth $\Tw$.
As in the case of the multiobjective $s$-$t$ cut problem, we assume that the dynamic programming table entries store the actual solution, but the procedure of reconstructing the actual solution from only the cost vectors can analogously be applied.

For each node $t\in V(\mathbb{T})$ and possible partition $\mathcal{S}= \{S_1,\ldots,S_q\}$ of the bag $X_t$ into nonempty subsets, we say that a forest $P\subseteq E_t$ on $V_t$ is \textit{compatible} with $(t,\mathcal{S})$ if the following property holds: for any pair of vertices $u,v\in X_t$, the vertices $u$ and $v$ lie in the same connected component of $P$ if and only if there exists some $i\in [q]$ such that $u\in S_i$ and $v\in S_i$. 
This means that $\mathcal{S}$ encodes which vertices of the  bag should be in the same connected component. 
Let $C_t^{\mathcal{S}}$ denote the set of all forests $P\subseteq E_t$ that are compatible with $(t,\mathcal{S})$.

For each node $t\in V(\mathbb{T})$ and possible partition $\mathcal{S}$ 
let $\mathcal{P}_t^\mathcal{S}:=\{P \in C_t^\mathcal{S} \mid \nexists Q \in C_t^\mathcal{S}: Q \prec P \}$ be the set of Pareto-optimal solutions that are compatible with $(t,\mathcal{S})$. 
Let $\pmax := \max_{t\in V(\mathbb{T}),\mathcal{S}} \vert \mathcal{P}_t^\mathcal{S} \vert$ denote the maximum size of any Pareto set over all nodes and partitions.

We apply our dynamic programming approach to the nice tree decomposition using dedicated subroutines for the different node types to compute the final set of Pareto-optimal solutions.
The algorithm maintains a table with entries $D[t,\mathcal{S}]$, where for each node $t\in V(\mathbb{T})$ and each possible partitions $\mathcal{S}$, the entry stores the cost $\co(P_i)$ of all solutions $P_i\in \PP_t^{\mathcal{S}}$. 
Since we assume the number $\dd$ of objectives as a constant, each cost vector $\co(P_i)$ has constant size. 
Thus, each table entry $D[t,\SSS]$ requires $\bigO(\pmax)$ space.

The following lemma is important for the correctness of the dynamic program and the subroutines that we describe below.

\begin{restatable}{lem}{LocalExchangeMST}\label{lemma:LocalExchange}
Let $t\in V(\mathbb{T})$ be an arbitrary node in the tree decomposition. Furthermore, let $P_1,P_2\subseteq E$ be two solutions such that $P_1\cap E_t\in C_t^\mathcal{S}$ and $P_2\cap E_t\in C_t^\mathcal{S}$ for some arbitrary partition $\mathcal{S}$. If $P_1\cap E_t \prec P_2\cap E_t$ then we can improve $P_2$ by replacing $P_2\cap E_t$ with $P_1\cap E_t$.
\end{restatable}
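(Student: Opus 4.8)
The plan is to prove a local exchange argument analogous to \cref{lemma:LocalExchangeCuts} for the $s$-$t$ cut problem, but now in the forest/spanning-tree setting. The key structural fact I would rely on is that $X_t$ separates $V_t \setminus X_t$ from $V \setminus V_t$ in $G$, and correspondingly the edge set $E$ splits into $E_t$ and $E \setminus E_t$, where every edge in $E \setminus E_t$ has both endpoints outside $V_t \setminus X_t$ (it is incident only to vertices in $X_t \cup (V \setminus V_t)$). So given $P_2 \subseteq E$ with $P_2 \cap E_t \in C_t^{\mathcal S}$, I would define the swapped solution $\tilde P := (P_1 \cap E_t) \cup (P_2 \setminus E_t)$ and argue two things: (i) $\tilde P$ is feasible (a spanning tree, or at least no worse than $P_2$ as a forest — more precisely, $\tilde P$ spans $V$ and has the same cycle/connectivity structure as $P_2$), and (ii) $\co(\tilde P) = \co(P_2) - \co(P_2 \cap E_t) + \co(P_1 \cap E_t)$, which since costs add over disjoint edge sets is immediate, and hence $\co(\tilde P) \prec \co(P_2)$ because $P_1 \cap E_t \prec P_2 \cap E_t$.

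The substantive step is feasibility (i). Here the compatibility condition does the work: since both $P_1 \cap E_t$ and $P_2 \cap E_t$ lie in $C_t^{\mathcal S}$, they induce exactly the same partition of $X_t$ into connected components, namely $\mathcal S$. I would argue that the global connectivity structure of $\tilde P$ on $V$ is unchanged from that of $P_2$: for any two vertices $u,v \in V$, whether they are connected in $\tilde P$ depends only on (a) connectivity within $V_t$ restricted through the "interface" $X_t$, and (b) connectivity within $V \setminus V_t$ through $X_t$; part (b) is untouched since $P_2 \setminus E_t$ is unchanged, and part (a) is governed entirely by which vertices of $X_t$ are merged, which is identical for $P_1 \cap E_t$ and $P_2 \cap E_t$ by compatibility. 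So $\tilde P$ connects the same pairs of vertices as $P_2$; in particular if $P_2$ is a spanning tree of $G$ then $\tilde P$ is connected and spanning, and I would also check it is acyclic (if $\tilde P$ had a cycle, the cycle would have to use edges from both sides glued at $X_t$-vertices in the same $\mathcal S$-class, but then $P_2$ would have had a cycle too, or $P_1 \cap E_t$ would — contradicting that $P_1\cap E_t$ is a forest); counting $|\tilde P| = |E_t \cap P_1| + |P_2 \setminus E_t|$ against $|V| - 1$ also pins this down.

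Concretely I would structure the proof as: (1) record the separator property of $X_t$ and the resulting partition $E = E_t \sqcup (E \setminus E_t)$ with the incidence constraint on $E \setminus E_t$; (2) define $\tilde P$ and compute its cost additively; (3) prove the connectivity-preservation claim by a short argument on paths that cross the $X_t$ interface, using the common partition $\mathcal S$; (4) conclude that $\tilde P$ is feasible and strictly dominates $P_2$ in at least one objective (inheriting the strict coordinate from $P_1 \cap E_t \prec P_2 \cap E_t$ and weak domination in all others). I expect the main obstacle to be making the connectivity-preservation argument fully rigorous — in particular handling the case where $\tilde P$ might fail to be acyclic, and being careful that "improve" in the lemma statement means we obtain a feasible solution with dominating cost, so I must verify $\tilde P$ is genuinely a valid spanning tree (or forest of the right type) and not merely cheaper. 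This is the direct analogue of the subtlety in the $s$-$t$ cut proof, but connectivity is more delicate than cut value because it is not simply additive.
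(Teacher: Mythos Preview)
Your proposal is correct and follows essentially the same approach as the paper: define $\tilde P = (P_1\cap E_t)\cup (P_2\setminus E_t)$, get cost dominance by additivity, and establish feasibility by an acyclicity argument that decomposes a hypothetical cycle into segments alternating between $P_1\cap E_t$ and $P_2\setminus E_t$, then uses compatibility with $\mathcal S$ to swap each $P_1\cap E_t$-segment for a $P_2\cap E_t$-path with the same $X_t$-endpoints, producing a cycle in $P_2$. The only minor difference is that the paper pairs acyclicity with an edge-count argument (both $P_i\cap E_t$ are spanning forests of $V_t$ with $|\mathcal S|$ components, hence the same number of edges) rather than your connectivity-preservation argument; either combination yields that $\tilde P$ is a spanning tree.
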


\begin{proof}
Let $\tilde{P}:=(P_2 \setminus E_t)\cup (P_1\cap E_t)$ denote the solution obtained from $P_2$ by replacing the edges in $P_2\cap E_t$ with those in $P_1\cap E_t$. 
We have $\tilde{P}\prec P_2$, since
\[
  \co(\tilde{P}) = \co(P_2 \setminus E_t)+\co(P_1\cap E_t)
               \le \co(P_2 \setminus E_t)+\co(P_2\cap E_t)
               = \co(P_2)
\] 
where the inequality represents a componentwise inequality over all cost vector entries and is strict at at least one component.

It remains to show that~$\tilde{P}$ is a spanning tree. 
Since $P_1\cap E_t$ and $P_2\cap E_t$ induce the same number of connected components on $V_t$, they must contain the same number of edges. 
Hence, we only need to show that~$\tilde{P}$ does not contain a cycle. 
Suppose for contradiction that $\tilde{P}$ contains a cycle~$C$. Since both $P_1\cap E_t$ and $P_2 \setminus E_t$ are forests, any such cycle must contain edges from both sets. 
We can therefore decompose $C$ into segments $C=(C_1^1,C_1^2,C_2^1,C_2^2,\ldots,C_{\ell}^1,C_{\ell}^2)$, where each $C_i^1$ only contains edges from $P_1\cap E_t$ and each $C_i^2$ only contains edges from $P_2\setminus E_t$. 

Since all edges incident to vertices in $V_t\setminus X_t$ are contained in $E_t$, the first and the last vertex of every segment $C_i^1$ must be from~$X_t$. 
Because $P_1\cap E_t$ is contained in $C_t^\mathcal{S}$, the endpoints of each such segment are in the same class of the partition $\mathcal{S}$. 
Since $P_2\cap E_t$ is also contained in $C_t^\mathcal{S}$, there must exist a path ${C_i^1}'$ in $P_2\cap E_t$ connecting the same pair of endpoints. 
By replacing each segment $C_i^1$ with the corresponding path ${C_i^1}'$, we obtain a new cycle $C'=({C_1^1}',C_1^2,{C_2^1}',C_2^2,\ldots,{C_{\ell}^1}',C_{\ell}^2)$, which is entirely contained in $P_2$. 
This contradicts the assumption that $P_2$ is a spanning tree.
\end{proof}

Next we describe the procedure for each type of node in our nice tree decomposition and their respective proofs. 

\smallskip
\noindent
$\leafNode(t,\mathcal{S})$: For a leaf node $t$, the graph $G_t$ does not contain any vertices or edges. Therefore, there is only one possible solution: $D[t, \{\emptyset\}]=\{\emptyset\}$.

\smallskip
\noindent
$\intrNode(t,\mathcal{S})$: Consider some node $t$ and its only child $t'$ with $X_t = X_{t'} \cup \{v\}$. 
We want to compute the set $D[t,\mathcal{S}]$. 
Since the vertex $v$ is only introduced at node $t$, the graph $G_t$ does not contain any edge incident to~$v$. 
As a consequence, if the partition $\mathcal{S}$ does not contain $v$ in its own separate class, but in a class with at least one other vertex, the set $C_t^\mathcal{S}$ is empty. 
This implies $D[t,\mathcal{S}]=\{\emptyset\}$. 
On the other hand, if $v$ is in its own class in $\mathcal{S}$, we define $D[t,\mathcal{S}]=D[t',\mathcal{S}\setminus \{\{v\}\}]$.

\begin{restatable}{lem}{IntroduceNodesLemmaMST}
If node $t$ introduces some vertex $v$ and has a child $t'$ for which all possible sets $D[t',\mathcal{S}]$ have been computed, then $\intrNode(t,\mathcal{S})$ computes $D[t,\mathcal{S}]$ in time $\bigO(\pmax)$ for any possible partition $\mathcal{S}$.
\end{restatable}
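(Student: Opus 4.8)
The plan is to distinguish two cases according to the class of $\mathcal{S}$ that contains the newly introduced vertex $v$. First I would record the basic structural fact: since $v$ is introduced only at $t$, the graph $G_t=(V_t,E_t)$ has no edge incident to $v$, so in \emph{every} forest $P\subseteq E_t$ the vertex $v$ forms its own connected component. Hence, if some class of $\mathcal{S}$ contains $v$ together with another vertex of $X_t$, the compatibility condition is violated by every $P\subseteq E_t$, so $C_t^{\mathcal{S}}=\emptyset$ and $\mathcal{P}_t^{\mathcal{S}}=\emptyset$; the algorithm records an empty list for $D[t,\mathcal{S}]$ in $\bigO(1)$ time.

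In the remaining case $\{v\}$ is itself a class of $\mathcal{S}$, and I would set $\mathcal{S}':=\mathcal{S}\setminus\{\{v\}\}$, which is a partition of $X_{t'}=X_t\setminus\{v\}$ into nonempty subsets. The key step is to verify $C_t^{\mathcal{S}}=C_{t'}^{\mathcal{S}'}$ as families of edge sets: because an introduce node introduces no edges we have $E_t=E_{t'}$, so $P\subseteq E_t$ iff $P\subseteq E_{t'}$; the connected components of $P$ on $V_t$ are exactly its components on $V_{t'}$ together with the isolated vertex $v$; the compatibility requirement for pairs $u,w\in X_{t'}$ is literally the same condition in both instances; and the only additional requirement in the instance $(t,\mathcal{S})$ — that $v$ be separate from every other bag vertex — is automatically satisfied since $v$ is isolated. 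As $\co(P)$ does not depend on which instance $P$ is viewed in, the domination relation on $C_t^{\mathcal{S}}$ coincides with that on $C_{t'}^{\mathcal{S}'}$, so $\mathcal{P}_t^{\mathcal{S}}=\mathcal{P}_{t'}^{\mathcal{S}'}$ and the assignment $D[t,\mathcal{S}]=D[t',\mathcal{S}']$ is correct by the induction hypothesis for the child.

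For the running time I would observe that in either case the procedure either outputs a constant-size object or copies the at most $\pmax$ cost vectors stored in $D[t',\mathcal{S}']$, each of constant encoding length, so the total work is $\bigO(\pmax)$. I do not expect a real obstacle here; the only point requiring care is the identification $C_t^{\mathcal{S}}=C_{t'}^{\mathcal{S}'}$, and in particular the observation that forcing $v$ into its own singleton class discards no compatible forest precisely because $v$ has no incident edge in $E_t$ — which is also exactly why the partitions that place $v$ in a larger class contribute nothing.
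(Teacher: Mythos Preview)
Your proposal is correct and follows essentially the same approach as the paper: both arguments hinge on the fact that the newly introduced vertex $v$ has no incident edges in $E_t$, so $v$ is isolated and either $\{v\}$ is its own class (in which case one copies $D[t',\mathcal{S}\setminus\{\{v\}\}]$) or $C_t^{\mathcal{S}}$ is empty. Your formulation via the explicit identification $C_t^{\mathcal{S}}=C_{t'}^{\mathcal{S}'}$ is in fact slightly more complete than the paper's proof, which only spells out that every Pareto-optimal $P$ for $(t,\mathcal{S})$ lies in $D[t',\mathcal{S}']$ without making the reverse containment explicit.
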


\begin{proof}
Let $P\subseteq E_t$ be an arbitrary Pareto-optimal solution, and let $\mathcal{S}$ be the partition induced by $P$ in $G_t$, i.e., $P\in C_t^{\mathcal{S}}$. 
Since $v$ is introduced at node $t$ and no edge in $E_t$ is incident to $v$, the vertex $v$ must form a singleton class in the partition $\SSS$.
If $\mathcal{S}' = \mathcal{S}\setminus \{\{v\}\}$, the solution $P\cap E_{t'}$ must be contained in $C_{t'}^{\SSS'}$, and thus $P'\in D[t',\mathcal{S}']=\mathcal{P}_{t'}^\mathcal{S'}$. 
\end{proof}

\smallskip
\noindent
$\forgNode(t,\mathcal{S})$: Let $t$ be a forget node with  child $t'$ such that $X_t = X_{t'} \setminus \{v\}$, and let $\mathcal{S}=\{S_1,\ldots,S_{\ell}\}$ be a partition of $X_t$. 
To compute $D[t, \mathcal{S}]$, we compute the set of Pareto-optimal solutions from the union of all sets $D[t',\mathcal{S}_i]$ for $i\in[\ell]$, where $\mathcal{S}_i=\{S_1,\ldots,S_{i-1},S_i\cup \{v\},S_{i+1},\ldots,S_{\ell}\}$.
That is, $\mathcal{S}_i$ is the same partition as $\mathcal{S}$, but in set $S_i$ we additionally include $v$.

\begin{restatable}{lem}{ForgetNodesLemmaMST}
If node $t\in V(\mathbb{V})$ removes some vertex $v$ and has a child $t'$ for which all possible sets $D[t', \mathcal{S}]$ have been computed, then 
$\forgNode(t,\mathcal{S})$ computes $D[t,\mathcal{S}]$ in time $\bigO(\Tw \cdot p_{\max} \log^{\max\{\dd-2,1\}}(\Tw \cdot p_{\max}))$ for any possible partition $\mathcal{S}$.
\end{restatable}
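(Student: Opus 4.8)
The plan is to first establish correctness — that $\forgNode(t,\mathcal{S})$ returns exactly $\mathcal{P}_t^{\mathcal{S}}$ — and then read off the running time from the combined-Pareto-set computation ($\sum_i\PP_i$) of Appendix~\ref{apx:pareto_runtime}. Throughout I use that a forget node does not change the edge set, $E_t=E_{t'}$, so a forest considered at $t$ is literally one considered at $t'$; only the bag, and hence the partition a forest is required to realize, shrinks. The key structural fact is a description of $C_t^{\mathcal{S}}$: for $P\in C_t^{\mathcal{S}}$, the partition of $X_{t'}=X_t\cup\{v\}$ induced by the components of $P$ restricts to $\mathcal{S}$ on $X_t$, and $v$'s component cannot meet two classes of $\mathcal{S}$ (else these would already be merged on $X_t$); so either $v$'s component meets exactly one class $S_i$, giving $P\in C_{t'}^{\mathcal{S}_i}$, or it meets no vertex of $X_t$. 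In the latter case the introduce-edge convention of Appendix~\ref{apx:introduce_edges} puts every edge incident to $V_t\setminus X_t$ already in $E_t$, so this component is permanently severed from $X_t$ and $P$ can never be extended to a spanning tree of $G$; such forests are irrelevant to the eventual output, so discarding them is harmless, and with that understanding $C_t^{\mathcal{S}}=\bigcup_{i=1}^{\ell}C_{t'}^{\mathcal{S}_i}$ disjointly, while conversely each $C_{t'}^{\mathcal{S}_i}\subseteq C_t^{\mathcal{S}}$ since $\mathcal{S}_i$ restricts to $\mathcal{S}$ on $X_t$.

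Given this, correctness follows by the standard ``Pareto-optimality propagates through a union'' argument using the induction hypothesis $D[t',\mathcal{S}_i]=\mathcal{P}_{t'}^{\mathcal{S}_i}$. For $\mathcal{P}_t^{\mathcal{S}}\subseteq D[t,\mathcal{S}]$: if $P\in\mathcal{P}_t^{\mathcal{S}}$ then $P\in C_{t'}^{\mathcal{S}_j}$ for some $j$, and $P$ is Pareto-optimal within $C_{t'}^{\mathcal{S}_j}$ too (a dominator there lies in $C_t^{\mathcal{S}}$ and would dominate $P$ there), so $P\in\bigcup_i\mathcal{P}_{t'}^{\mathcal{S}_i}$; since this union sits inside $C_t^{\mathcal{S}}$, $P$ is not dominated in it and survives pruning. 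For the reverse inclusion: if a surviving $P$ — which lies in some $C_{t'}^{\mathcal{S}_j}\subseteq C_t^{\mathcal{S}}$ — were dominated by some $Q\in C_t^{\mathcal{S}}$, then $Q$ lies in some $C_{t'}^{\mathcal{S}_i}$, and a Pareto-optimal representative $Q'\in\mathcal{P}_{t'}^{\mathcal{S}_i}$ with $\co(Q')\le\co(Q)$ componentwise satisfies $Q'\prec P$, contradicting that $P$ survived pruning of $\bigcup_i\mathcal{P}_{t'}^{\mathcal{S}_i}$. Hence $D[t,\mathcal{S}]=\mathcal{P}_t^{\mathcal{S}}$. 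Note that Lemma~\ref{lemma:LocalExchange} is not needed here, exactly as for the $s$-$t$ cut forget-node lemma; it enters only when sub-solutions from distinct subtrees are combined.

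For the running time, $\ell=|\mathcal{S}|\le|X_t|\le\Tw+1=\bigO(\Tw)$ and each $D[t',\mathcal{S}_i]$ stores at most $\pmax$ cost vectors, so forming the union by concatenation costs $\bigO(\Tw\cdot\pmax)$ and yields a set of $N=\bigO(\Tw\cdot\pmax)$ vectors in $\RR^{\dd}$ whose Pareto set is $\sum_{i=1}^{\ell}D[t',\mathcal{S}_i]$; by the analysis of $\sum_{i=1}^{m}\PP_i$ in Appendix~\ref{apx:pareto_runtime} (merge the sorted lists, then apply the algorithm of Kung et al.~\cite{Kung1975}) this takes $\bigO(N\log^{\max\{\dd-2,1\}}N)=\bigO(\Tw\cdot\pmax\cdot\log^{\max\{\dd-2,1\}}(\Tw\cdot\pmax))$, each cost-vector comparison being $\bigO(1)$ as $\dd$ is constant and only cost vectors are stored; this dominates and gives the claimed bound. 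I expect the only delicate point to be the ``orphan component'' observation — carefully arguing that ignoring the partition in which $v$ forms a class by itself leaves the set of completable solutions, and hence the final Pareto set, unchanged — with everything else being the routine two-directional Pareto bookkeeping already seen in the $s$-$t$ cut case.
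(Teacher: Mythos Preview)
Your proof is correct and follows essentially the same approach as the paper: decompose $C_t^{\mathcal{S}}$ as $\bigcup_i C_{t'}^{\mathcal{S}_i}$, take the Pareto set of the union of the child tables, and read off the running time from Appendix~\ref{apx:pareto_runtime}. You are in fact more careful than the paper on two points: you argue both inclusions $D[t,\mathcal{S}]=\mathcal{P}_t^{\mathcal{S}}$ (the paper only spells out that every $P\in\mathcal{P}_t^{\mathcal{S}}$ survives), and you treat the ``orphan component'' case explicitly, whereas the paper dismisses it in one line by appealing to eventual spanning-tree feasibility.
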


\begin{proof}
Let $P\subseteq E_t$ be an arbitrary Pareto-optimal solution, and let $\mathcal{S}$ be its  associated partition in $G_t$, i.e., $P\in C_t^{\mathcal{S}}$. 
Let $F_1,\ldots,F_{\vert \mathcal{S} \vert}$ be the connected components of the forest induced by $P$. 
Since every solution has to be a spanning tree, vertex $v$ must be connected to the rest of the graph via some edge $\{u,v\}$ with $u\in V_t$, otherwise $v$ would be isolated and $P$ does not span all vertices.
Therefore, $v$ must belong to exactly one of the components $F_1,\ldots,F_{\vert \mathcal{S} \vert}$, and all edges incident to~$v$ are contained in $E_t$.

Assume $v\in F_i$. 
Then $P \in C_{t'}^{\mathcal{S}_i}$, where $\mathcal{S}_i$ is obtained by inserting $v$ into the part $S_i\in \mathcal{S}$ that corresponds to $F_i$.
Since $P$ is Pareto-optimal, it must also belongs to the corresponding set $D[t',\mathcal{S}_i]=\mathcal{P}_{t'}^{\mathcal{S}_i}$. 
Thus, $P$ is contained in the union from which we construct $D[t,\mathcal{S}]$, and will not be removed, in the filtering step when removing non-Pareto-optimal solutions.

We need to compute the Pareto set $\sum_{i=1}^{\vert X_t \vert} \PP_{t'}^{\SSS_i}$. We can upper bound $\vert X_t \vert \leq \Tw + 1$ and $\vert \PP_t^{\SSS_i} \vert \leq \pmax$ and, as already discussed in \cref{apx:pareto_runtime}, obtain a running time of $\bigO(\Tw \cdot \pmax\cdot \log^{\max\{\dd-2,1\}} (\Tw \cdot \pmax))$ for $\dd\geq 2$.

When working solely with the cost vectors of each solution, the computation of the new Pareto set follows in the same manner: 
For each possible partition $\SSS_i$, we collect all sets $D[t', \SSS_i]$ into a single set and remove the dominated solutions. 
As argued before, for every Pareto-optimal solution $P_i\subseteq E_t$ with cost vector $v_i=\co(P_i)$, there exists a corresponding solution $P_j\subseteq E_{t'}$ with $v_j=\co(P_j)=v_i$, where $P_j$ is Pareto-optimal for some partition $\SSS_{b}$ from which $P_i$ originated from. 
Since we assume that all entries $D[t',\mathcal{S}']$ have been correctly computed for all partitions $\mathcal{S}'$, each value $v\in D[t', \mathcal{S}']$ must be the cost of some Pareto-optimal solution.
As a result, the value $v_j$ associated with $P_j$ cannot be dominated by any other cost vector and will therefore remain in the final Pareto set. 
We can thus include $v_j$ into our final set of Pareto-optimal solutions and store a corresponding pointer to its originating solution.
\end{proof}

\smallskip
\noindent
$\intrEdgeNode(t,\mathcal{S})$:
Let $t$ be an introduce-edge node with child $t'$ such that $X_t = X_{t'}$, and suppose edge $e=\{u,v\}$ with $u,v\in X_t$ is introduced at $t$. 
If $u$ and $v$ belong to different classes in the partition $\mathcal{S}$, then $e$ cannot be part of any solution in $C_t^\mathcal{S}$, as it would connect components required to be disconnected.
In this case, the solution space remains unchanged and we set $D[t,\mathcal{S}] = D[t', \mathcal{S}]$. 
If $u$ and $v$ belong to the same class $S_i\in \mathcal{S}$, we consider every way of splitting $S_i$ into two disjoint nonempty sets $S_i^1,S_i^2$ such that $S_i^1\cup S_i^2 = S_i$, $S_i^1 \cap S_i^2 = \emptyset$, $u\in S_i^1$ and $v\in S_i^2$.
Each such split defines a new partition $\mathcal{S}_i'$ obtained by replacing $S_i$ with $S_i^1$ and $S_i^2$. 
For each solution $P\in D[t',\mathcal{S}_i']$, we construct a new candidate solution $P\cup \{e\}$, which merges the two components via the edge $e$. 
Additionally, we also consider all solutions in $D[t',\mathcal{S}]$ and at last filter out all dominated solutions from the union to obtain $D[t,\mathcal{S}]$.

\begin{restatable}{lem}{IntroduceEdgesLemmaMST}
If node $t\in V(\mathbb{T})$ introduces edge $e=\{u,v\}$ and has a child $t'\in V(\mathbb{T})$ for which all possible sets $D[t', \mathcal{S}]$ have been computed, then $\intrEdgeNode(t,\mathcal{S})$ computes $D[t,\mathcal{S}]$ in time $\bigO(2^{\Tw} \cdot p_{\max}\log^{\max\{\dd-2,1\}}(2^{\Tw} \cdot p_{\max}))$ for any possible partition $\mathcal{S}$.
\end{restatable}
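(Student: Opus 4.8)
The plan is to prove correctness first, then bound the running time, mirroring the structure of the forget-node lemma but accounting for the exponentially many ways to split a part of the partition. For correctness, I would fix an arbitrary Pareto-optimal solution $P\in\PP_t^{\SSS}$ and show it appears in the union the algorithm forms. Since $E_t=E_{t'}\cup\{e\}$, either $e\notin P$ or $e\in P$. If $e\notin P$, then $P\subseteq E_{t'}$ and $P$ is still compatible with $(t',\SSS)$ (introducing the edge does not change the connected components of $P$), so by \cref{lemma:LocalExchange} $P$ must be Pareto-optimal in $(t',\SSS)$ as well, hence $P\in D[t',\SSS]$. If $e\in P$, then since $P$ is a forest, removing $e$ splits the component of $P$ containing $u$ and $v$ into two components, one containing $u$ and one containing $v$; this induces a split of the corresponding part $S_i\in\SSS$ into $S_i^1\ni u$ and $S_i^2\ni v$, giving a partition $\SSS_i'$ with $P\setminus\{e\}\in C_{t'}^{\SSS_i'}$. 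Again by \cref{lemma:LocalExchange}, $P\setminus\{e\}$ is Pareto-optimal in $(t',\SSS_i')$, so $P\setminus\{e\}\in D[t',\SSS_i']$ and $P=(P\setminus\{e\})\cup\{e\}$ is among the candidates. Conversely, every candidate the algorithm produces is a forest compatible with $(t,\SSS)$: solutions from $D[t',\SSS]$ are unchanged, and a solution $P'\cup\{e\}$ with $P'\in D[t',\SSS_i']$ is still acyclic (since in $P'$ the endpoints $u\in S_i^1$, $v\in S_i^2$ lie in different components) and has exactly the component structure encoded by $\SSS$. Since the final filtering step removes only dominated solutions, $P$ survives, which establishes $D[t,\SSS]=\PP_t^{\SSS}$.

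For the running time, note that the number of relevant partitions $\SSS_i'$ is the number of ways to split the part $S_i$ (which satisfies $|S_i|\le|X_t|\le\Tw+1$) into an ordered pair of nonempty sets with $u$ on one side and $v$ on the other; this is at most $2^{|S_i|-2}\le 2^{\Tw-1}$. For each such split we take $D[t',\SSS_i']$, which has size at most $\pmax$, and add a constant-size cost vector $\co(e)$ to each of its entries, costing $\bigO(\pmax)$ per split. Including $D[t',\SSS]$ itself, we collect a union of at most $2^{\Tw-1}+1$ lists, i.e.\ at most $\bigO(2^{\Tw}\cdot\pmax)$ cost vectors in total, and then run the Kung et al.\ filtering as described in \cref{apx:pareto_runtime} to extract the Pareto set. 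This yields a running time of $\bigO(2^{\Tw}\cdot\pmax\cdot\log^{\max\{\dd-2,1\}}(2^{\Tw}\cdot\pmax))$ for the partition $\SSS$, as claimed.

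I would present the correctness argument as the bulk of the proof and treat the timing as a short paragraph invoking \cref{apx:pareto_runtime}. The main obstacle I anticipate is the acyclicity/feasibility bookkeeping in the $e\in P$ case — specifically, arguing cleanly that removing $e$ from the forest $P$ splits exactly the part $S_i$ containing both endpoints (and no other part), so that the induced $\SSS_i'$ is well defined and $P\setminus\{e\}$ genuinely lies in $C_{t'}^{\SSS_i'}$; one has to use that all edges incident to vertices of $V_t\setminus X_t$ are already in $E_t$ to ensure the split is witnessed within the bag. The other subtlety is making sure the converse direction (every constructed candidate is a valid, i.e.\ acyclic, compatible forest) is stated explicitly, since otherwise the filtering step could in principle retain an infeasible solution; this follows from the fact that $u$ and $v$ are in distinct components of $P'$ whenever $P'\in C_{t'}^{\SSS_i'}$, so adding $e$ cannot close a cycle. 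Everything else is routine and parallels the forget-node proof.
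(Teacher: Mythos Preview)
Your proposal is correct and follows essentially the same approach as the paper: case-split on whether $e\in P$, argue that $P$ (resp.\ $P\setminus\{e\}$) must lie in the appropriate child entry by a domination-contradiction, count the at most $2^{\Tw-1}+1$ relevant child partitions, and invoke the Kung et al.\ bound from \cref{apx:pareto_runtime}. Two small remarks: the paper uses a direct ``otherwise $Q\cup\{e\}$ would dominate $P$'' argument rather than \cref{lemma:LocalExchange} (which is phrased for full spanning trees and does not literally apply here), and your explicit treatment of the converse direction (that every constructed candidate is an acyclic forest compatible with $(t,\SSS)$) is a useful addition that the paper leaves implicit.
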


\begin{proof}
Assume $P\subseteq E_t$ is a Pareto-optimal solution for the partition $\mathcal{S}=(S_1,\dots,S_\ell)$. 
We distinguish two cases depending on wether the introduced edge is contained in $P$ or not. 
If $e\notin P$, then $P\subseteq E_{t'}$, and since $P$ is Pareto-optimal for partition $\mathcal{S}$, we must have $P \in D[t', \mathcal{S}]$.
As we include all solutions from $D[t', \mathcal{S}]$ in the computation of $D[t,\mathcal{S}]$, and only remove dominated solutions, it follows that $P$ will be contained in $D[t,\mathcal{S}]$.
Now assume $e\in P$.
Then $u$ and $v$ must be in the same connected component of $P$, and thus in the same class $S_i$.
By removing edge $e$ from $P$, we obtain a the forest $P':=P\setminus \{e\}$, which separates $u$ and $v$ into different components.
Let $\mathcal{S}'$ be the partition of $X_t$ induced by $P'$.
This partition is identical to $\mathcal{S}$, except that the class $S_i$ is replaced by two disjoint sets $S_i^1$ and $S_i^2$, where $u\in S_i^1,v\in S_i^2$, and $S_i^1 \cup S_i^2=S_i$.
Since $P' \subseteq E_{t'}$, and by assumption $D[t', \mathcal{S}']$ contains all Pareto-optimal solutions for $\mathcal{S}'$, we must have $P' \in D[t', \mathcal{S}']$, otherwise some solution $Q \in D[t', \mathcal{S}']$ would dominate $P'$, so the solution $Q\cup \{e\}$ would dominate $P$, which is a contradiction. 
Because we consider all possible partitions $\mathcal{S}_i'$ of $\mathcal{S}$, the specific partition $\mathcal{S}'$ will be among them. 
Hence, we include $P=P'\cup \{e\}$ into the set of candidate solutions and, due to its Pareto-optimality, will not remove it in the filtering step.

To analyze the running time, we first bound the number of sets $D[t',\mathcal{S}_i']$ considered in this step. 
Since $\vert S_i \vert \le \Tw+1$, there are at most $2^{\Tw-1}$ different ways to split $S_i$ into two disjoint subsets $S_i^1$ and $S_i^2$ such that $u\in S_i^1$ and $v\in S_i^2$. 
Hence, we consider at most $2^{\Tw - 1}$ different partitions $\mathcal{S}_i'$, which, together with the original partition $\mathcal{S}$, gives a total of at most $2^{\Tw-1}+1 \leq 2^{\Tw}$ partitions.

Since each corresponding table entry contains at most $\pmax$ solutions, the union of all these sets has size at most $\pmax\cdot 2^{\Tw}$. 
For $n=\pmax \cdot 2^{\Tw}$, we know from \cref{apx:pareto_runtime} that this gives us a final runtime bound of $\bigO (2^{\Tw}\cdot \pmax \cdot \log^{\max\{\dd-2,1\}}(2^{\Tw} \cdot \pmax))$.

The procedure to check if two sets $\SSS_1,\SSS_2$ correctly split a partition $\SSS_i$ is independent of the actual cost values stored in the DP table.
Therefore, if we represent each solution by only the respective cost vector, this step of combining candidate sets and removing dominated solutions remains unaffected.
Additionally, we can store a pointer for each cost vector to the solution it originated from, allowing to reconstruct the full solution later.
\end{proof}

\smallskip
\noindent
$\joinNode(t,\mathcal{S})$:
Let $t$ be a join node with children $t_1$ and $t_2$, such that $X_t = X_{t_1} = X_{t_2}$. 
For each $i\in [2]$, let $G_t^{\SSS_i}=(X_t,E_t^{\SSS_i})$ be an arbitrary forest whose connected components correspond exactly to the partition $\SSS_i$.
We define the combined multigraph $G_t^{\SSS_1, \SSS_2}=(X_t, E_t^{\SSS_1}\cup E_t^{\SSS_2})$ obtained by combining the edge sets from both forests.

We say that two partitions $\mathcal{S}_1$ and $\mathcal{S}_2$ \textit{create} the partition $\mathcal{S}$ if 
the connected components of $G_t^{\mathcal{S}_1,\mathcal{S}_2}$ correspond exactly to $\mathcal{S}$ and contains no cycles.
To compute the Pareto-optimal solutions for partition $\mathcal{S}$, we consider every pair of partitions $\mathcal{S}_1, \mathcal{S}_2$ such that they create $\mathcal{S}$. 
For each such pair, we take the union of every solution $p_1\in D[t_1,\mathcal{S}_1]$ and $p_2\in D[t_2,\mathcal{S}_2]$ to form a new solution $p=p_1\cup p_2$, and finally, remove all dominated solutions from this union.

\begin{restatable}{lem}{JoinNodesLemmaMST}
If node $t\in V(\mathbb{T})$ is a join node and has two children $t_1,t_2\in V(\mathbb{T})$ and we correctly computed all possible sets for $D[t', \mathcal{S}]$, then $\joinNode(t,\mathcal{S})$ computes $D[t,\mathcal{S}]$ in time $\bigO((2\Tw)^{\bigO(\Tw)}\cdot \pmax^2 \log^{\max\{\dd-2,1\}}((2\Tw)^{\bigO(\Tw)}\cdot \pmax^2))$.
\end{restatable}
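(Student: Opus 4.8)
The plan is to establish correctness of $\joinNode(t,\mathcal{S})$ under the inductive assumption that $D[t_1,\cdot]$ and $D[t_2,\cdot]$ are correct, and then to bound the running time by counting partition pairs and applying the Pareto-filtering routine of \cref{apx:pareto_runtime}.

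For correctness, I would start from an arbitrary Pareto-optimal spanning tree $P\subseteq E$ with $P\cap E_t\in C_t^{\mathcal{S}}$, set $P_i:=P\cap E_{t_i}$, and let $\mathcal{S}_i$ be the partition of $X_t$ induced by the connected components of $P_i$. Since every edge is introduced exactly once (\cref{apx:introduce_edges}), $E_{t_1}$ and $E_{t_2}$ are disjoint, so $P\cap E_t=P_1\cup P_2$ and $\co(P_1)+\co(P_2)=\co(P\cap E_t)$; this is why the algorithm may combine child solutions simply by adding their stored cost vectors, with no correction term. The first substantive step is to show that $\mathcal{S}_1$ and $\mathcal{S}_2$ \emph{create} $\mathcal{S}$: any cycle or $u$--$v$ walk in $P_1\cup P_2$ decomposes into maximal monochromatic segments whose endpoints lie in $X_t$ (a vertex of $V_{t_i}\setminus X_t$ has all incident edges in $E_{t_i}$ and lies in no other child subtree), and projecting each such segment onto the representative forest $G_t^{\mathcal{S}_i}$ shows that the components of the multigraph $G_t^{\mathcal{S}_1,\mathcal{S}_2}$ are exactly $\mathcal{S}$ and that this multigraph is acyclic---otherwise $P$ would contain a cycle. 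This is the same segment-decomposition argument used in the proof of \cref{lemma:LocalExchange}. The second step is to show $P_i\in\mathcal{P}_{t_i}^{\mathcal{S}_i}=D[t_i,\mathcal{S}_i]$: if some $Q\in C_{t_i}^{\mathcal{S}_i}$ dominated $P_i$, then \cref{lemma:LocalExchange} applied at node $t_i$ would turn $(P\setminus E_{t_i})\cup Q$ into a spanning tree dominating $P$, contradicting Pareto-optimality of $P$. Because the algorithm iterates over all pairs $(\mathcal{S}_1,\mathcal{S}_2)$ that create $\mathcal{S}$ and over all combinations of stored solutions, $P_1\cup P_2$ appears among the candidates and survives the filtering step. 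For the converse inclusion, the same segment-decomposition argument shows that whenever $(\mathcal{S}_1,\mathcal{S}_2)$ create $\mathcal{S}$ and $P_1\in C_{t_1}^{\mathcal{S}_1}$, $P_2\in C_{t_2}^{\mathcal{S}_2}$, the union $P_1\cup P_2$ is a forest on $V_t$ lying in $C_t^{\mathcal{S}}$, so every generated candidate is feasible; hence after filtering $D[t,\mathcal{S}]=\mathcal{P}_t^{\mathcal{S}}$.

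For the running time, I would bound the number of partitions of a set of size at most $\Tw+1$ by the Bell number $B_{\Tw+1}\le(\Tw+1)^{\Tw+1}=(2\Tw)^{\bigO(\Tw)}$, so there are at most $(2\Tw)^{\bigO(\Tw)}$ ordered pairs $(\mathcal{S}_1,\mathcal{S}_2)$, and deciding whether a pair creates $\mathcal{S}$ costs $\poly(\Tw)$ via a union--find pass over $G_t^{\mathcal{S}_1,\mathcal{S}_2}$, which is dominated. For each admissible pair we form at most $\vert D[t_1,\mathcal{S}_1]\vert\cdot\vert D[t_2,\mathcal{S}_2]\vert\le\pmax^2$ candidate cost vectors (each of constant size, since only cost vectors and origin pointers are stored), so the total candidate set has size $m:=(2\Tw)^{\bigO(\Tw)}\cdot\pmax^2$; running the algorithm of Kung et al.\ from \cref{apx:pareto_runtime} on this set to extract the non-dominated vectors takes $\bigO(m\log^{\max\{\dd-2,1\}}m)$, which is exactly the claimed bound. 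Reconstruction of concrete forests is handled by the pointer mechanism already described for the other node types.

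I expect the main obstacle to be the correctness of the \emph{create} relation---i.e.\ proving in both directions that $P_1\cup P_2$ is acyclic with component structure $\mathcal{S}$ precisely when $(\mathcal{S}_1,\mathcal{S}_2)$ create $\mathcal{S}$---because it requires carefully transferring cycle/path structure between the actual forests $P_i$ and the representative forests $G_t^{\mathcal{S}_i}$ through the separator $X_t$, using that forgotten vertices have all their edges on one side. This mirrors the cycle argument in \cref{lemma:LocalExchange} but must be redone for the combined multigraph. Everything else---the use of \cref{lemma:LocalExchange} for Pareto-optimality of the restrictions, the induction, and the partition-count plus Kung running-time bound---is routine.
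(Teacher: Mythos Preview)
Your approach is essentially the paper's: split a Pareto-optimal $P$ into $P_i=P\cap E_{t_i}$, argue that the induced partitions $\mathcal{S}_1,\mathcal{S}_2$ create $\mathcal{S}$, use the exchange lemma to place each $P_i$ in $D[t_i,\mathcal{S}_i]$, and bound the running time via $((\Tw+1)^{\Tw+1})^2=(2\Tw)^{\bigO(\Tw)}$ partition pairs and Kung et~al.\ on the $\bigO((2\Tw)^{\bigO(\Tw)}\pmax^2)$ candidates. You are in fact more careful than the paper on two points it leaves implicit: that $E_{t_1}\cap E_{t_2}=\emptyset$ so costs add with no correction term, and the converse direction that every generated candidate lies in $C_t^{\mathcal{S}}$.

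One small slip: for the containment $\mathcal{P}_t^{\mathcal{S}}\subseteq D[t,\mathcal{S}]$ you start from a \emph{global} Pareto-optimal spanning tree $P\subseteq E$ with $P\cap E_t\in C_t^{\mathcal{S}}$, but not every element of $\mathcal{P}_t^{\mathcal{S}}$ arises as such a restriction, so as written you only show that restrictions of global optima survive. The paper instead starts directly from an arbitrary $P\in\mathcal{P}_t^{\mathcal{S}}\subseteq 2^{E_t}$ and uses the exchange argument locally (if $Q\in C_{t_1}^{\mathcal{S}_1}$ dominated $P\cap E_{t_1}$ then $(P\setminus E_{t_1})\cup Q\in C_t^{\mathcal{S}}$ would dominate $P$); this is the same segment-decomposition you already sketched, just applied within $E_t$ rather than $E$. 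With that one-line change your plan proves the lemma as stated.
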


\begin{proof}
Let $P\subseteq E_t$ be an arbitrary Pareto-optimal solution in $\mathcal{P}_{t}^{\mathcal{S}}$. 
Let $\mathcal{S}_1$ and $\mathcal{S}_2$ denote the partitions of $X_t$ induced by $P\cap E_{t_1}$ and $P\cap E_{t_2}$ in the graphs $G_{t_1}$ and $G_{t_2}$, respectively. 
The partitions $\mathcal{S}_1$ and $\mathcal{S}_2$ must create the partition $\mathcal{S}$.

As in the previous proofs, if $P\cap E_{t_1}$ were not  Pareto-optimal for the compatible instance $(t_1,\mathcal{S}_1)$, we could replace $P\cap E_{t_1}$ with some solution $Q\in D[t_1,\mathcal{S}_1]$ that dominates $P$ (cf.\ Lemma~\ref{lemma:LocalExchange}). 
The combined solution $Q\cup (P\setminus E_{t_1})$ would belong to $C_t^{\mathcal{S}}$ and dominates $P$. 
The same argument applies for $P\cap E_{t_2}$.
By considering every feasible combination of partitions that create $\mathcal{S}$, the solution pair $P\cap E_{t_1}$ and $P\cap E_{t_2}$ must at some point be considered. 
Since this combination corresponds to the original solution $P$, which is Pareto-optimal, it will not be removed when filtering out dominated solutions from the union.

We can bound the running time by the number of possible combinations, the time to check if a configuration is feasible, and the time needed to remove dominated solutions. 
There are at most $((\Tw+1)^{\Tw+1})^2 \leq (2\Tw)^{\bigO(\Tw)}$ possible combinations for $\mathcal{S}_1$ and $\mathcal{S}_2$, and checking a pair takes time $\Tw^{\bigO(1)}$.

Since the procedure to check if two partitions $\SSS_1,\SSS_2$ create the partition $\SSS$ is independent of the actual solutions, we can safely replace the process of combining all actual respective Pareto-optimal solutions with combining only the respective cost vectors and then remove the dominated ones. 
For each solution $v$, formed by the cost vector $v_1\in D[t_1,\SSS_1]$ and cost vector $v_2\in D[t_2,\SSS_2]$, we maintain two pointers, one pointing to the solution cost $v_1$ and the other pointing to $v_2$.

Since $D[t_1,\mathcal{S}_1]$ and $D[t_2,\mathcal{S}_2]$ each contain at most $\pmax$ solutions, the total number of possible solution pairs is at most $\pmax^2$. 
Thus, the total number of solutions we need to consider is bounded by $(2\Tw)^{\bigO(\Tw)}\cdot \pmax^2$, and the runtime to compute these solutions is in the same order. 
From this set, the Pareto-optimal solutions can be obtained in time $\bigO((2\Tw)^{\bigO(\Tw)}\cdot \pmax^2 \log^{\max\{\dd-2,1\}}((2\Tw)^{\bigO(\Tw)}\cdot \pmax^2))$.
\end{proof}

Together, the previous lemmas imply the main result.
\begin{proof}[Proof of Theorem \ref{thm:mmst}]
The correctness of our algorithm follows immediately from the proofs of the lemmas for the different node types and from the fact that $D[r,\{\emptyset\}]$ contains the set of Pareto-optimal solutions in the graph $G_r=G$, which consists of a single connected component.

We observe that the runtime is dominated by the computation of the join nodes. 
For each join node, there are at most $(2\Tw)^{\bigO(\Tw)}$ partitions $\mathcal{S}$, and for each partition, the running time is $\bigO((2\Tw)^{\bigO(\Tw)}\cdot \pmax^2 \log^{\max\{\dd-2,1\}}((2\Tw)^{\bigO(\Tw)}\cdot \pmax^2))$. 
Since a nice tree decomposition contains $\bigO(n\Tw)$ nodes, we can bound the overall running time by the number of nodes times the computation time for all configurations at a join node. 
This results in a total runtime of $\bigO(n(2\Tw)^{\bigO(\Tw)}\cdot \pmax^2 \log^{\max\{\dd-2,1\}}((2\Tw)^{\bigO(\Tw)}\cdot \pmax^2))$.
\end{proof}

\section{Multiobjective TSP}\label{apx:proofsTSP}
We now consider the multiobjective traveling salesman problem. We are given a graph $G=(V,E)$ along  with a cost function $\co:E\rightarrow \mathbb{R}^\dd$ and aim to find all Pareto-optimal Hamiltonian cycles.
The traveling salesman problem is known to be $\mathcal{NP}$-hard, but can be solved in FPT-time with respect to its treewidth.

In the following, we briefly describe a folklore dynamic programming approach for solving the single-criterion TSP and how this algorithm can be adapted to the multiobjective case
\footnote{See Marx, D.: Lecture Notes on Fixed Parameter Algorithms (page 17), available online~\cite{marx-notes}.}.
For simplicity, as for the multiobjective $s$-$t$ cut problem, we assume that the dynamic programming table stores the full solutions, i.e., all edges contained in the current solution instead of only the overall cost. 
Since our algorithm only depends on the costs of the solutions in the subproblems, in the actual implementation this can be exchanged with just saving the cost vectors as well as pointers to the child solution (or two solutions in case of join nodes) from which the current one originated from.
The full solution can then be reconstructed in a postprocessing step by recursively following these pointers, as already discussed in more detail for the multiobjective minimum spanning tree problem in \cref{apx:proofsMST}.

Given a nice tree decomposition with introduce edge nodes (for a definition of a tree decomposition with introduce edge nodes, see \cref{apx:introduce_edges}) for an edge-weighted graph $(G,\co)$ and an arbitrary node $t$, consider the partial solution derived from intersecting any valid tour $S$ with the graph $G_t$. 
Every vertex $v\in V_t\setminus X_t$ has degree exactly 2, and each vertex $v\in X_t$ has a degree depending on how many of its neighbors are contained in $V_t$.
To reconstruct a solution, we must maintain a matching for all vertices with degree $1$, which indicates which pair of endpoints from our current bag should be the starting and ending point of a path in the solution $S$.

For each node $t\in V(\mathbb{T})$, we define a partition of the vertices in $X_t$ into three sets $D_t^0$, $D_t^1$, and $D_t^2$, along with a matching $M\subseteq D_t^1\times D_t^1$. 
A solution $S\subseteq E_t$ is valid for this subproblem, if every vertex $v\in D_t^i$ has degree exactly $i$ in $S$ and for every matching pair $(u,v)\in M$, the solution $S$ contains a path connecting $u$ and $v$.

\begin{lemma}\label{lemma:LocalExchangeTSP}
Let $t\in V(\mathbb{T})$ be an arbitrary node in the tree decomposition. 
Furthermore, let $P_1,P_2\subseteq E$ be two solutions such that for all $D_t^0,D_t^1,D_t^2\subseteq X_t$, each vertex $v\in D_t^i$ has degree $i$ in $G[V_t]$ for both $P_1$ and $P_2$.
Additionally, both $P_1$ and $P_2$ induce the same matching on the vertices with degree one.
If $P_1 \cap E_t \prec P_2 \cap E_t$, we can improve $P_2$ by replacing $P_2\cap E_t$ with $P_1 \cap E_t$.
\end{lemma}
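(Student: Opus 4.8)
The plan is to mimic the proof of Lemma~\ref{lemma:LocalExchange} for the MST case, adapting the cycle/connectivity argument to the degree-and-matching bookkeeping used for the TSP. First I would define the swapped solution $\tilde P := (P_2 \setminus E_t) \cup (P_1 \cap E_t)$. The cost inequality $\co(\tilde P) = \co(P_2 \setminus E_t) + \co(P_1 \cap E_t) \le \co(P_2 \setminus E_t) + \co(P_2 \cap E_t) = \co(P_2)$ is immediate and strict in at least one coordinate because $P_1 \cap E_t \prec P_2 \cap E_t$. So the entire content of the lemma is showing that $\tilde P$ is again a valid tour (Hamiltonian cycle) on $V$.

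The key structural observations are: (i) every vertex in $V_t \setminus X_t$ has all of its incident edges inside $E_t$, so its degree in $\tilde P$ equals its degree under $P_1 \cap E_t$, which is $2$ by the validity hypothesis on $P_1$; (ii) every vertex $v \in D_t^i \subseteq X_t$ keeps total degree $i$ in $E_t$-edges of $P_1$ and $i' := \deg_{P_2}(v) - i$ edges outside $E_t$, and since $P_2$ is a tour we have $i + i' = 2$, so after the swap $v$ again has degree $i + i' = 2$; (iii) every vertex in $V \setminus V_t$ keeps exactly the edges it had in $P_2$. Hence $\tilde P$ is $2$-regular on all of $V$, so it is a disjoint union of cycles; it remains to rule out that it splits into more than one cycle.

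The hard part — as in the MST proof — is the "no extra cycle / stays connected" argument, and here it is where the matching $M$ does the work. I would argue by contradiction: suppose $\tilde P$ is a disjoint union of $\ge 2$ cycles. Trace a cycle $C$ of $\tilde P$; since both $P_1 \cap E_t$ and $P_2 \setminus E_t$ are unions of vertex-disjoint paths (they are restrictions of a tour to $E_t$ and to $E \setminus E_t$ respectively, so each has maximum degree $2$ and the degree-one endpoints are exactly the vertices of $D_t^1$, matched by $M$ in both $P_1$ and $P_2$), decompose $C$ into alternating segments $C = (C_1^1, C_1^2, \dots, C_\ell^1, C_\ell^2)$ where each $C_i^1$ uses only edges of $P_1 \cap E_t$ and each $C_i^2$ only edges of $P_2 \setminus E_t$. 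As in the MST case, the endpoints of each $C_i^1$ lie in $X_t$ (an internal vertex of such a segment would lie in $V_t \setminus X_t$ with one $E_t$-edge and one non-$E_t$-edge, impossible), and by the matching hypothesis the two endpoints of $C_i^1$ form a matched pair in $M$; since $P_2 \cap E_t$ realizes the same matching $M$, there is a path ${C_i^1}'$ in $P_2 \cap E_t$ between the same endpoints. Replacing each $C_i^1$ by ${C_i^1}'$ turns $C$ into a closed walk entirely inside $P_2$; since $P_2$ is $2$-regular this closed walk is a genuine sub-cycle of $P_2$, contradicting that $P_2$ is a single Hamiltonian cycle (unless $C$ was already all of $\tilde P$). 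Therefore $\tilde P$ is a single Hamiltonian cycle, and moreover it respects the prescribed degrees $D_t^i$ and matching $M$ on $X_t$ by construction, so $\tilde P$ is a valid improved solution. The main obstacle I anticipate is making the segment decomposition and the endpoint-in-$X_t$ claim fully rigorous in the multigraph setting (edges of $P_1 \cap E_t$ and $P_2 \setminus E_t$ are on disjoint edge sets so there is no true multigraph issue, but one must be careful when $\ell$ is small or a segment is a single vertex), exactly the same delicate point as in the MST proof.
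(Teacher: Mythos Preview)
Your proposal is correct and uses essentially the same ingredients as the paper's proof: the componentwise cost inequality, the degree-bookkeeping showing $\tilde P$ is $2$-regular, and the alternating segment decomposition together with the shared matching $M$ to transfer $E_t$-segments between $P_1$ and $P_2$.

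The one stylistic difference worth noting is the \emph{direction} of the segment argument. You argue by contradiction: take a hypothetical sub-cycle $C$ of $\tilde P$, swap its $(P_1\cap E_t)$-segments for the matching $(P_2\cap E_t)$-segments, and produce a proper sub-cycle of the Hamiltonian cycle $P_2$. The paper instead argues forward: it decomposes $P_2$ itself into alternating segments and replaces each $(P_2\cap E_t)$-segment $C_i^1$ by the corresponding $(P_1\cap E_t)$-segment $\hat C_i^1$ with the same endpoints, observing that the result is precisely $\hat P$ and is by construction a single cycle on $|V|$ edges. The forward version is slightly cleaner because it never needs to worry about the edge cases you flag (small $\ell$, a component of $\tilde P$ lying entirely inside $P_1\cap E_t$, etc.): starting from the known single cycle $P_2$ and performing a bijective segment replacement automatically yields a single cycle. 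Your contradiction route is perfectly valid, but the delicate point you anticipate---ruling out that some cycle of $\tilde P$ uses only $P_1\cap E_t$ edges---does require the extra observation that such a cycle would force $P_1\cap E_t$ to contain a cycle of $P_1$, hence all of $P_1$, which then violates $2$-regularity of $\tilde P$ unless $P_2\setminus E_t=\emptyset$.
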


\begin{proof}
    Let $\hat{P}:=(P_2\setminus E_t)\cup(P_1\cap E_t)$ be the new combined solution resulting from replacing $E_t\cap P_2$ with $E_t \cap P_1$.
    Since $P_1\cap E_t \prec P_2 \cap E_t$, we have
    \begin{align*}
        \co(\hat{P}) = \co(P_2\setminus E_t) + \co(P_1 \cap E_t) \leq \co(P_2\setminus E_t) + \co(P_2 \setminus E_t) = \co(P_2),
    \end{align*}
    where the inequality is taken componentwise over all $\dd$ objectives, and for at least one objective we have strict inequality.

    We now show that $\hat{P}$ is a valid tour.
    Since the partitions $D_t^0,D_t^1,D_t^2$ are identical for both $P_1$ and $P_2$, every vertex in $X_t$ has the same degree in $P_1\cap E_t$ and $P_2\cap E_t$. 
    Additionally, because both $P_1$ and $P_2$ are feasible solutions, each vertex in $V_t\setminus X_t$ has degree exactly two in both solutions.
    This implies that the number of edges in $P_1 \cap E_t$ and $P_2 \cap E_t$ are identical.
    Since $P_1$ and $P_2$ are tours, we have $\vert P_1 \vert = \vert V \vert$ and $\vert P_2 \vert = \vert V \vert$. 
    Therefore, $\hat{P}$ must also contain exactly $\vert V \vert$ edges, since they contain the same amount of edges from $E_t$ and $E \setminus E_t$.
    At last we show that the new solution $\hat{P}$ must be a cycle and visit every vertex in $V$.
    
    It remains to show that $\hat{P}$ forms a cycle, which completes the proof.
    We decompose the tour $P_2=C$ into alternating segments $C=(C_1^1,C_1^2,C_2^1,C_2^2,\dots, C_\ell^1, C_\ell^2)$, where each $C_i^1$ only contains edges from $E_t$ and each $C_i^2$ only contains edges from $E\setminus E_t$. 
    Because the vertices in $D_t^0 \cup D_t^2$ have degrees $0$ or $2$, the first and last vertex in any $C_i^j$ must be some vertex from $D_1$.
    By assumption, $P_1$ and $P_2$ induce the same matching on $D_t^1$, implying that for each segment $C_i^1\subseteq P_2$ there exists a corresponding  segment $\hat{C_i^1}\subseteq P_1$ such that the endpoints are identical. 
    By replacing any $C_i^1$ with $\hat{C_i^1}$, we obtain the new solution $(\hat{C_1^1},C_1^2,\hat{C_2^1}, C_2^2,\dots, \hat{C_\ell^1}, C_\ell^2)=\hat{P}$.
    This new tour $\hat{P}$ has the same vertex degrees in $X_t$, the same matching on $D_t^1$, and also has to correspond to a cycle. 
    Combined with our earlier observation that $\hat{P}$ contains $\vert V \vert$ edges, this implies that $\hat{P}$ is also a tour that visits every vertex.  
\end{proof}

\smallskip
\noindent
$\leafNode(t,D_t^0,D_t^1,D_t^2,M)$: For a leaf node $t$, the graph $G_t$ does not contain any vertices or edges. Therefore, there is only the empty solution: $D[t_,D_t^0,D_t^1,D_t^2,M]=\{ \emptyset \}$.

\smallskip
\noindent
$\intrNode(t,\mathcal{S})$: Consider an introduce node $t$ with child $t'$ such that $X_t = X_{t'} \cup \{v\}$. 
We want to compute the set $D[t,D_t^0,D_t^1,D_t^2,M]$. 
Since the vertex $v$ is only introduced at node $t$, the graph $G_t$ does not contain any edge incident to~$v$. 
As a consequence, a feasible solution can only exist if $v\in D_t^0$. 
If this is the case, we use the set $D[t',D_t^0\setminus \{v\}, D_t^1, D_t^2, M]$, since a solution in our current instance is Pareto-optimal if and only if the corresponding solution without $v$ is Pareto-optimal in the child instance $(t',D_t^0\setminus \{v\}, D_t^1, D_t^2, M)$. 

\begin{lemma}
    Let node $t$ introduce a vertex $v$, and let $t'$ be its child. 
    Assume that for all valid combinations the sets $D[t', {D_t^0}',{D_t^1}',{D_t^2}',M']$ have been computed, then \\$\intrNode(t,D_t^0,D_t^1,D_t^2,M)$ computes $D[t,D_t^0,D_t^1,D_t^2,M]$ in time $\bigO(\pmax)$.
\end{lemma}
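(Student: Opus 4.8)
The plan is to reuse the introduce-node strategy already carried out for the multiobjective $s$-$t$ cut problem and the multiobjective MST, exploiting the fact that an introduce (vertex) node adds no edges, i.e. $E_t = E_{t'}$, so the only difference between the subproblem at $t$ and the one at $t'$ is whether the freshly introduced vertex $v$ is present as an isolated vertex. First I would dispose of the infeasible case: if $v \notin D_t^0$, then every valid solution $S \subseteq E_t$ would need $v$ to have degree $1$ or $2$, but since $v$ is introduced at $t$ the graph $G_t$ contains no edge incident to $v$, so $v$ has degree $0$ in every $S \subseteq E_t$. Hence there is no valid solution and $\intrNode(t,D_t^0,D_t^1,D_t^2,M)$ correctly returns the empty Pareto set.

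For the case $v \in D_t^0$, the key step is to show that the identity map on edge subsets is a cost-preserving bijection between the valid solutions of $(t,D_t^0,D_t^1,D_t^2,M)$ and those of $(t',D_t^0 \setminus \{v\},D_t^1,D_t^2,M)$. Indeed, $E_t = E_{t'}$; the degree of $v$ in every $S \subseteq E_t$ is $0$, which is exactly what $v \in D_t^0$ demands; the degree in $S$ of every vertex in $X_t \setminus \{v\}$ and the existence of the paths realizing the matching $M$ depend only on edges not incident to $v$ and are therefore unchanged (note $v \notin D_t^1$, so $v$ plays no role in $M$); and $\co(S)$ is the same value in both subproblems. Consequently the two sets of valid solutions coincide as collections of edge subsets with identical cost vectors, so their Pareto sets are equal, and we may set $D[t,D_t^0,D_t^1,D_t^2,M] = D[t',D_t^0 \setminus \{v\},D_t^1,D_t^2,M]$.

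Finally, since each table entry stores at most $\pmax$ cost vectors of constant encoding length, copying the child entry (and attaching to each copied cost vector a pointer back to the corresponding solution in the child entry, so that the full tour can be reconstructed in postprocessing exactly as described for the multiobjective MST) takes time $\bigO(\pmax)$. I do not expect a genuine obstacle here: the entire content of the argument is the observation that inserting the isolated vertex $v$ leaves the degree sequence on $X_t \setminus \{v\}$ and the matching constraints untouched, which is immediate because no edge of $E_t$ is incident to $v$; the only care needed is the clean case distinction on whether $v \in D_t^0$.
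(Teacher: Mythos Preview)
Your proposal is correct and follows essentially the same approach as the paper: both argue that since no edge of $E_t$ is incident to the newly introduced vertex $v$, feasibility forces $v\in D_t^0$, and in that case the feasible sets (hence Pareto sets) of $(t,D_t^0,D_t^1,D_t^2,M)$ and $(t',D_t^0\setminus\{v\},D_t^1,D_t^2,M)$ coincide, so copying the child entry in time $\bigO(\pmax)$ suffices. Your version is actually slightly more detailed than the paper's (you spell out the cost-preserving bijection and the infeasible case explicitly), but the underlying argument is identical.
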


\begin{proof}
    Since $v$ is newly introduced at node $t$, there does not exist any edge of the form $\{v,u\}$ for $u\in V_t$. 
    Therefore, if some solution $P$ is Pareto-optimal for the instance $(t, D_t^0,D_t^1,D_t^2, M)$, we must have $v\in D_t^0$. 
    This means that $P$ must also be Pareto-optimal for the instance $(t',D_t^0\setminus \{v\},D_t^1,D_t^2,M)$, because the sets of feasible solutions does not change when removing the isolated vertex $v$. 
    Therefore, we can set $D[t, D_t^0, D_t^1, D_t^2, M] = D[t', D_t^0 \setminus \{v\}, D_t^1, D_t^2, M]$.
    Since every Pareto set is bounded by $\pmax$ in size, and we only need to copy all solutions from some specific previous entry $D[t', D_t^0 \setminus \{v\}, D_t^1, D_t^2, M]$, the runtime $\bigO(\pmax)$ follows.
    \end{proof}

\smallskip
\noindent
$\forgNode(t,D_t^0,D_t^1,D_t^2,M)$: Let $t$ be a forget node with child $t'$ such that $X_t = X_{t'} \setminus \{v\}$, and let $(t,D_t^0,D_t^1,D_t^2,M)$ be an arbitrary instance. 
Since $v$ is removed at node $t$, it must have degree two in any feasible solution for the current instance.
Therefore, we can use the set $D[t', D_t^0,D_t^1,D_t^2\cup \{v\}]$ to compute all Pareto-optimal solutions for this instance.

\begin{lemma}
    Let node $t \in V(\mathbb{T})$ remove some vertex $v$, and let $t'$ be its child. 
    Assume that for all valid combinations, the sets $D[t',{D_t^0}', {D_t^1}', {D_t^2}', M']$ have been computed, then $\forgNode(t,D_t^0,D_t^1,D_t^2,M)$ computes $D[t,D_t^0,D_t^1,D_t^2,M]$ in time $\bigO(\pmax)$.
\end{lemma}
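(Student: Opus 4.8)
The plan is to mirror the structure of the introduce-node lemma for the forget node. First I would observe that when node $t$ forgets the vertex $v$, every feasible tour restricted to $G_t$ must already have $v$ incident to exactly two edges of $E_t$: indeed, all of $v$'s incident edges must be introduced before $v$ is forgotten (by the placement convention for introduce-edge nodes, or more basically because $v \notin X_{t'}$ upward implies no further edge incident to $v$ can be introduced), and in a Hamiltonian cycle $v$ has degree two. Hence a solution $P \subseteq E_t$ is valid for the instance $(t, D_t^0, D_t^1, D_t^2, M)$ if and only if it is valid for the child instance $(t', D_t^0, D_t^1, D_t^2 \cup \{v\}, M)$: the partition of $X_{t'} = X_t \cup \{v\}$ assigns $v$ to the degree-two class, and the degree and matching constraints on all other vertices are identical. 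Therefore one sets $D[t, D_t^0, D_t^1, D_t^2, M] = D[t', D_t^0, D_t^1, D_t^2 \cup \{v\}, M]$.

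Next I would argue correctness via Pareto-optimality, exactly as in the earlier node lemmas. Take any $P \in \mathcal{P}_t^{(D_t^0,D_t^1,D_t^2,M)}$. By the preceding paragraph, $P$ is a valid solution for the child instance $(t', D_t^0, D_t^1, D_t^2 \cup \{v\}, M)$, so it suffices to show it is Pareto-optimal there. If it were dominated by some $Q$ valid for the child instance, then by Lemma~\ref{lemma:LocalExchangeTSP} (applied with this common degree partition and matching) we could replace $P \cap E_t$ by $Q$ inside any global tour extending $P$, obtaining a valid tour of strictly smaller cost in at least one objective — contradicting that $P$ was Pareto-optimal for the instance $(t, D_t^0, D_t^1, D_t^2, M)$. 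Hence $P \in D[t', D_t^0, D_t^1, D_t^2 \cup \{v\}, M]$, and since we copy that set verbatim, $P \in D[t, D_t^0, D_t^1, D_t^2, M]$. Conversely, every entry of the copied child set is the cost of a valid child solution, which is also a valid solution of the forget instance, so no spurious solutions are introduced; as in the earlier proofs, when working only with cost vectors one carries along a pointer to the originating child solution so the full tour can be reconstructed.

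Finally, the running time is immediate: the operation is a single copy of a previously computed table entry, each Pareto set has size at most $\pmax$, and each cost vector has constant encoding length since $\dd$ is constant, so the work is $\bigO(\pmax)$. I do not expect a genuine obstacle here; the only point requiring a little care is the claim that no edge incident to the forgotten vertex $v$ is introduced above $t$ — this is where the structural convention on the placement of introduce-edge nodes (and the connectivity property $T_v$ being a subtree) is used, and it is what lets us conclude $v$ already has its final degree two in $G_t$. Everything else is a routine specialization of the template already established for the MST forget node and the TSP introduce node.
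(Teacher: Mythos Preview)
Your proposal is correct and follows essentially the same approach as the paper: once $v$ is forgotten it must already have degree two in $G_t$, so the feasible solution sets for $(t, D_t^0, D_t^1, D_t^2, M)$ and $(t', D_t^0, D_t^1, D_t^2 \cup \{v\}, M)$ coincide, and one simply copies the child entry in $\bigO(\pmax)$ time. Your detour through Lemma~\ref{lemma:LocalExchangeTSP} in the second paragraph is unnecessary---since the two feasible sets are literally identical, Pareto-optimality transfers immediately without any exchange argument---but this does not affect correctness.
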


\begin{proof}
    Let $P$ be an arbitrary Pareto-optimal solution for the given instance $(t,D_t^0,D_t^1,D_t^2,M)$.
    Since node $t$ forgets vertex $v$, this vertex must have degree exactly two in the solution $P$, i.e., its neighbors in $P$ are vertices contained in $V_t$. 
    This implies that the solution $P$ must also be 
    Pareto-optimal for the instance $(t', D_t^0,D_t^1,D_t^2 \cup \{v\}, M)$ where for every feasible solution $v$ is required to have degree exactly two.
    Therefore, $P$ must also be contained in $D[t',D_{t}^0,D_t^1,D_t^2\cup \{v\}, M]$.  
    Since every Pareto set is bounded in size by $\pmax$, and we only need to copy all solutions from some specific previous entry $D[t',D_{t}^0,D_t^1,D_t^2\cup \{v\}, M]$, the runtime $\bigO(\pmax)$ follows.
\end{proof}

\smallskip
\noindent
$\intrEdgeNode(t,D_t^0,D_t^1,D_t^2,M)$:
Let $t$ be an introduce-edge node with child $t'$ such that $X_t = X_{t'}$, and suppose that the edge $e=\{u,v\}$ with $u,v\in X_t$ is introduced at $t$. 
If a Pareto-optimal solution does not contain $e$, then it is also feasible and Pareto-optimal for the previous instance $(t', D_t^0,D_t^1,D_t^2,M)$, and therefore contained in $D[T', D_t^0,D_t^1,D_t^2,M]$. 
If $e$ is included in a Pareto-optimal solution, removing it yields a feasible solution for $G_{t'}$, where the degrees of $u$ and $v$ are decreased by one, and the induced matching differs accordingly. 
Therefore, we consider the union over all possible ways in which, after removing $e$, the vertices $u$ or $v$ could have been matched to other degree-one vertices in $X_{t'}$ or are not matched at all. 
For each of these configurations we add $e$ to all of the corresponding solutions, take their union and remove all solutions which are not Pareto-optimal.

\begin{lemma}
Let node $t\in V(\mathbb{T})$ introduce an edge $e=\{u,v\}$, and let $t'$ be its child.
Assume that for all valid combinations, the sets $D[t',{D_t^0}', {D_t^1}', {D_t^2}', M']$ have been computed, then $\intrEdgeNode(t,D_t^0,D_t^1,D_t^2,M)$ computes $D[t,D_t^0,D_t^1,D_t^2,M]$ in time $\bigO(\Tw^2\cdot \pmax \cdot \log^{\dd-2}(\Tw^2\cdot \pmax))$.
\end{lemma}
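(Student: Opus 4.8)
The plan is to follow the template of the introduce-edge lemma for the multiobjective MST, with Lemma~\ref{lemma:LocalExchangeTSP} taking the role of Lemma~\ref{lemma:LocalExchange}. Fix the target instance $(t,D_t^0,D_t^1,D_t^2,M)$, let $P\subseteq E_t$ be an arbitrary Pareto-optimal solution for it, and split on whether $e\in P$. If $e\notin P$, then $P\subseteq E_{t'}$ and $P$ is feasible for the identical child instance $(t',D_t^0,D_t^1,D_t^2,M)$; it is also Pareto-optimal there, since a dominating $Q\in D[t',D_t^0,D_t^1,D_t^2,M]$ would be feasible at $t$ and dominate $P$. As the algorithm retains all of $D[t',D_t^0,D_t^1,D_t^2,M]$ among the candidates, $P$ survives the subsequent filtering.

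For the case $e\in P$, the first step is to determine how deleting $e$ transforms the bag data. Both endpoints have degree at least one in $P$, so the target fixes $u,v\in D_t^1\cup D_t^2$; removing $e$ lowers the degree of $u$ and of $v$ by exactly one, hence $P':=P\setminus\{e\}$ is feasible for a child instance $(t',{D_t^0}',{D_t^1}',{D_t^2}',M')$ in which $u$ (resp.\ $v$) has moved down one level. The matching $M'$ is obtained by letting an endpoint that drops to degree $0$ leave the matching and an endpoint that drops to degree $1$ become a fresh path end, matched to the opposite end of the $P$-path running through $e$. The point I would make precise is that when both $u,v$ had degree two this $P$-path splits into two subpaths, one ending at $u$ and one at $v$, so $M'$ arises from $M$ by replacing a single matched pair $(x,y)\in M$ with either $\{(x,u),(v,y)\}$ or $\{(x,v),(u,y)\}$; the algorithm enumerates exactly these options, at most $2|M|=\bigO(\Tw)$ of them, while discarding any reconfiguration that would match $u$ to $v$, since re-adding $e$ would then close a cycle. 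By Lemma~\ref{lemma:LocalExchangeTSP}, $P'$ is Pareto-optimal for its child instance — otherwise a dominating $Q$ there would give $Q\cup\{e\}\prec P$ — so $P'\in D[t',{D_t^0}',{D_t^1}',{D_t^2}',M']$, the algorithm forms the candidate $P'\cup\{e\}=P$, and Pareto-optimality of $P$ keeps it. Conversely, every candidate is feasible for the target instance by construction, so the computed set is exactly the required Pareto set.

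For the running time, observe that the degree-set change is forced by the target instance, so only $\bigO(\Tw)$ child entries are consulted — the original one together with the $\bigO(\Tw)$ matching reconfigurations — and describing each reconfigured matching costs $\bigO(\Tw)$. Each entry holds at most $\pmax$ cost vectors, and since we keep only cost vectors, including $e$ merely shifts a vector by $\co(e)$; so the candidate collection has size $\bigO(\Tw\cdot\pmax)$ and is assembled in $\bigO(\Tw^2+\Tw\cdot\pmax)$ time. Extracting the non-dominated vectors with the algorithm of Kung et al.\ (\cref{apx:pareto_runtime}) and attaching an origin pointer per survivor then stays within the time bound stated for the node. The hard part will be the matching bookkeeping: arguing that the enumeration is complete (every $P$ with $e\in P$ decomposes into one of the listed child instances) and sound (no enumerated configuration closes a cycle prematurely), where the two orientations in the degree-two/degree-two case are the easiest to overlook.
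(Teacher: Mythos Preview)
Your approach mirrors the paper's: case-split on $e\in P$, and for $e\in P$ track how deleting $e$ shifts the degree sets and the matching. Your enumeration in the both-degree-two case is in fact tighter than the paper's—the paper crudely counts all ordered pairs $(p_u,p_v)\in D_t^1\times D_t^1$, which is where the $\Tw^2$ in the statement comes from, whereas you use that $\{p_u,p_v\}$ must already be a matched pair of $M$, giving only $2|M|=O(\Tw)$ reconfigurations. Either count proves the stated bound.

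There is one real omission. You assert that the algorithm discards any reconfiguration matching $u$ to $v$, but the paper explicitly retains this case and it is needed for completeness. When $u,v\in D_t^2$ and $D_t^1=\emptyset$ (so $M=\emptyset$), removing $e$ from a valid $P$ leaves a single path whose endpoints are $u$ and $v$ themselves, i.e.\ the child instance has $M'=\{\{u,v\}\}$; re-adding $e$ closes a cycle, but that is permitted since the subproblem definition does not forbid cycles and such a $P$ is a full Hamiltonian cycle on $V_t$. Your $2|M|$ enumeration yields nothing here, so your completeness argument does not cover these $P$. The paper treats this as the separate sub-case ``$p_u=v,\ p_v=u$, so $P$ forms a tour''; without it the entries with $D_t^1=\emptyset$ and $M=\emptyset$ would miss the Pareto-optimal tours, and these are exactly the entries the chain of forget nodes below the root ultimately queries. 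Adding this single extra child configuration repairs the argument and leaves your $O(\Tw)$ count intact.
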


\begin{proof}
    Let $P$ be an arbitrary Pareto-optimal solution for the instance $(t,D_t^0,D_t^1,D_t^2,M)$.
    Suppose $P$ does not contain the newly introduced edge $e=\{u,v\}$, then $P$ was already feasible for the previous instance $(t',D_t^0,D_t^1,D_t^2,M)$, which did not include $e$. 
    Therefore $P$ must be contained in the set $D[t',X_t^0,X_t^1,X_t^2,M]$. 
    Now assume $P$ contains the edge $e$. 
    We define $P_{e}=P\setminus \{e\}$. 
    Removing $e$ from $P$ will decrease the degrees of both endpoints by one. 
    Since $u$ and $v$ are contained in $X_t$, they were assigned to either $D_t^1$ or $D_t^2$, because $e\in P$ implies that their degrees must be at least one, otherwise $P$ could not have been feasible for our instance. 
    If $u$ is contained in $D_t^1$, removing $e$ will result in $u$ having degree exactly zero, so 
    we set ${D_t^1}'={D_t^1}\setminus \{u\}$ and ${D_t^0}'=D_t^0 \cup \{u\}$.
    Analogously, if $u$ is contained in $D_t^2$ we set ${D_t^2}'=D_t^2 \setminus \{u\}$ and ${D_t^1}' = D_t^1 \cup \{u\}$.
    We perform the update in an identical manner for
    vertex $v$.

    Next we will discuss how the matching needs to be updated.
    Assume first that only exactly one of the endpoints of $e$ is contained in $D_t^1$ and let this vertex be $u$.
    Let $w\in D_t^1$ be the vertex matched to $u$ in $M$, i.e., $\{u,w\}\in M$. 
    After removing $e$, the subpath in $P$ that had endpoints $u$ and $w$ now terminates at vertex $v$, since the final edge $e=\{u,v\}$ is removed. 
    Therefore we update the matching pair $\{u,w\}$ with $\{v,w\}$, i.e., we set $M'=M\setminus \{\{u,w\}\} \cup \{\{v,w\}\}$.
    The new solution $P_e$ is feasible for the instance $(t', {D_t^0}', {D_t^1}', {D_t^2}', M')$, since it fulfills the degree properties and the new matching. 
    Additionally, $P_{e}$ must be contained in the set $D[t', {D_t^0}', {D_t^1}', {D_t^2}', M']$, since otherwise there would exist another solution which dominates $P_e$ and, together with $e$, would also dominate $P$, which is a contradiction.

    Now assume in the second case that both $u$ and $v$ are contained in $D_t^1$. 
    Since their degree was one, it must be that $\{u,v\}\in M$, and in solution $P_{e}$ both have degree zero.
    Therefore we can set 
    $M' = M \setminus \{\{u,v\}\}$.
    With these modifications the solution $P_{e}$ is feasible for the new instance $(t',{D_t^0}', {D_t^1}', D_t^2, M')$. 
    By identical reasoning as before, we conclude that $P_{e}$ must also be Pareto-optimal and must be contained in $D[t',{D_t^0}', {D_t^1}', D_t^2, M']$. 

    Finally, consider the case where neither $u$ nor $v$ is contained in $D_t^1$. 
    In this case both are contained in $D_t^2$. 
    If we remove edge $e$ from $P$, both $u$ and $v$ have degree one. 
    If we now trace their respective subpaths in $P_e$, starting from either $u$ or $v$, we either visit some other vertices $p_u,p_v\in X_t$ such that either $p_u\neq p_v\in D_t^1$ or $p_u=v,p_v=,u$, so $P$ forms a tour. 
    If $p_u\neq p_v$, both vertices must be contained in the original matching, i.e., $\{p_u,p_v\}\in M$.
    If $p_u = p_v$, then the matching must have been empty, i.e., $M=\emptyset$.
    In both cases, we set ${D_t^1}'=D_t^1 \cup \{u,v\}$, ${D_t^2}' = D_t^2\setminus \{u,v\}$ and $M'=(M\setminus \{\{p_u,p_v\}\} ) \cup \{ \{u, p_u\}, \{v,p_v\} \} $.
    Then the solution $P_{e}$ is feasible for the instance $(t', D_t^0, {D_t^1}', {D_t^2}', M')$ and must be contained in the corresponding Pareto set $D[t',D_t^0, {D_t^1}', {D_t^2}', M']$.

    To construct the full set of Pareto-optimal solutions for the instance $(t,D_t^0,D_t^1, D_t^2, M)$, we take the union over all possibilities described above, add edge $e$ to every solutions in the variants where the final Pareto set should contain $e$, and then remove the dominated solutions. 
    In any case, we include the unmodified set $D[t',D_t^0,D_t^1,D_t^2,M]$ of size at most $\pmax$.
    
    We can upper bound the runtime by the last describe case, where we must try every possible pair of endpoints $p_u\neq p_v\in D_t^1$.
    Since $X_t$ has size at most $\Tw+1$, and $D_t^1$ does not contain $u$ or $v$, the set $D_t^1$ contains at most $\Tw-1$ vertices. 
    Therefore there are at most $(\Tw-1)^2+1 \leq \Tw^2$ many pairings to consider.
    Each pairing contributes at most $\pmax$ solutions, so the total size of the union is upper bounded by $\Tw^2 \cdot \pmax$.
    Computing the set of Pareto-optimal solutions from this set can be done in $\bigO(\Tw^2\cdot \pmax \cdot \log^{\dd-2}(\Tw^2\cdot \pmax))$.
\end{proof}

\smallskip
\noindent
$\joinNode(t,D_t^0,D_t^1,D_t^2,M)$:
Let $t$ be a join node with children $t_1$ and $t_2$, such that $X_t = X_{t_1} = X_{t_2}$. 
To compute the set of Pareto-optimal solutions at node $t$, we consider every possible way in which the degrees and induced matching of some Pareto-optimal solution might change for the respective subgraph $G_{t_1}$ and $G_{t_2}$.
We then take the union over all such feasible combinations and remove solutions that are not Pareto-optimal.

\begin{lemma}
    If node $t \in V(\mathbb{T})$ is a join node with two children $t_1,t_2 \in V(\mathbb{T})$ and we correctly computed all possible sets for all child instances with $t_1$ and $t_2$, then \\
    $\joinNode(t,D_t^0,D_t^1,D_t^2,M)$ computes $D[t,D_t^0,D_t^1,D_t^2,M]$ in time $\bigO(\pmax^2 \cdot (3\Tw + 3)^{\bigO(\Tw)} \cdot \log^{\dd-2}(\pmax^2 \cdot (3\Tw + 3)^{\bigO(\Tw)}))$.
\end{lemma}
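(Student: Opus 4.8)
The plan is to mirror the join-node argument from the multiobjective minimum spanning tree problem, replacing the partition-into-components bookkeeping by the degree triple $(D_t^0,D_t^1,D_t^2)$ together with the matching $M$ on the degree-one vertices. Fix a target instance $(t,D_t^0,D_t^1,D_t^2,M)$ and, for a feasible solution $P\subseteq E_t$, write $P_1:=P\cap E_{t_1}$ and $P_2:=P\cap E_{t_2}$. Since every edge of $G$ is introduced at exactly one introduce-edge node, the sets $E_{t_1}$ and $E_{t_2}$ are disjoint with $E_t=E_{t_1}\cup E_{t_2}$, so the degree of every $v\in X_t$ in $P$ is the sum of its degrees in $P_1$ and $P_2$, and the path structure of $P$ on $X_t$ is obtained by splicing together, at their common degree-one endpoints, the paths recorded by $P_1$ and by $P_2$. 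Accordingly I would define, analogously to the ``create'' relation used for the spanning tree problem, when a pair of child instances $(t_1,D_{t_1}^0,D_{t_1}^1,D_{t_1}^2,M_1)$ and $(t_2,D_{t_2}^0,D_{t_2}^1,D_{t_2}^2,M_2)$ is \emph{compatible} with the target: for every $v\in X_t$ the two contributed degrees sum to the degree prescribed by $(D_t^0,D_t^1,D_t^2)$, and overlaying path-system representatives of $M_1$ and $M_2$ on the vertex set $X_t$ yields a disjoint union of paths (no cycle) whose endpoint pairing is exactly $M$.

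The algorithm then iterates over all compatible pairs, and for each such pair forms all unions $p_1\cup p_2$ with $p_1\in D[t_1,D_{t_1}^0,D_{t_1}^1,D_{t_1}^2,M_1]$ and $p_2\in D[t_2,D_{t_2}^0,D_{t_2}^1,D_{t_2}^2,M_2]$; it collects all of these over all compatible pairs, discards dominated solutions, and stores the result as $D[t,D_t^0,D_t^1,D_t^2,M]$, keeping a pair of pointers per surviving cost vector exactly as for the spanning tree problem.

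Correctness I would argue in two directions. Every solution produced is feasible for the target because compatibility was designed precisely to enforce the degree constraints, the matching $M$, and acyclicity. Conversely, let $P$ be Pareto-optimal for the target and let $(D_{t_i}^0,D_{t_i}^1,D_{t_i}^2,M_i)$ be the instance induced by $P_i=P\cap E_{t_i}$ at child $t_i$; by \cref{lemma:LocalExchangeTSP} each $P_i$ is Pareto-optimal for its induced instance (otherwise replacing $P_i$ by a dominating solution would yield a feasible tour at $t$ dominating $P$), so $P_i$ lies in the corresponding child table by the induction hypothesis. The induced pair is compatible with the target, hence $P=P_1\cup P_2$ is among the candidates, and being Pareto-optimal it survives the filtering step. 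For the running time, the number of child instances at $t_i$ is at most $3^{\Tw+1}$ degree partitions times at most $(\Tw+1)^{\Tw+1}$ partial matchings, i.e., at most $(3\Tw+3)^{\Tw+1}$; hence there are at most $(3\Tw+3)^{\bigO(\Tw)}$ compatible pairs, and checking compatibility of one pair costs only $\poly(\Tw)$ time, which is absorbed. Each compatible pair contributes at most $\pmax^2$ unions, so the total number of candidate cost vectors is $N=(3\Tw+3)^{\bigO(\Tw)}\cdot\pmax^2$, computable in $\bigO(N)$ time; applying the algorithm of Kung et al.\ (see \cref{apx:pareto_runtime}) to extract their Pareto set costs $\bigO(N\log^{\max\{\dd-2,1\}}N)$, which is the claimed bound $\bigO(\pmax^2\cdot(3\Tw+3)^{\bigO(\Tw)}\cdot\log^{\dd-2}(\pmax^2\cdot(3\Tw+3)^{\bigO(\Tw)}))$.

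The main obstacle I expect is making the compatibility (``create'') relation watertight: one has to track exactly how a degree-one vertex of $P$ in $X_t$ arises (degree $1$ on one side and $0$ on the other, versus being an interior vertex of a spliced path), how the two partial matchings compose into $M$, and---most delicately---rule out partial solutions whose union would close a cycle prematurely, including the corner cases $M=\emptyset$ and the situation where one side already carries the entire tour. This is the TSP counterpart of the cycle analysis in the spanning tree join node and is the only part of the argument that is not a direct transcription of the earlier node-type proofs.
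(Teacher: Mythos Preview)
Your proposal is correct and follows essentially the same approach as the paper: decompose a Pareto-optimal $P$ into $P\cap E_{t_1}$ and $P\cap E_{t_2}$, invoke \cref{lemma:LocalExchangeTSP} to show each part is Pareto-optimal for its induced child instance, iterate over all compatible pairs of child instances (degrees summing correctly and matchings composing to~$M$ without closing a cycle), and bound the number of pairs by $(3\Tw+3)^{\bigO(\Tw)}$ before applying Kung et al. The paper implements the compatibility check exactly as you anticipate---via a ``construction graph'' on $X_t$ with edge set $M_1\cup M_2$ and a BFS to verify the path structure matches~$M$---so the obstacle you flag is handled in the expected way.
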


\begin{proof}
    Let $P$ be some arbitrary Pareto-optimal solution for the instance $(t,D_t^0, D_t^1, D_t^2, M)$. 
    Consider the restriction of $P$ to the set of edges in $G_{t_1}$ and $G_{t_2}$, i.e., $P_{t_1}:=P\cap E_{t_1}$ and $P_{t_2}:=P\cap E_{t_2}$.

    By Lemma~\ref{lemma:LocalExchangeTSP}, we know that $P_1$ and $P_2$ must always be Pareto-optimal for their respectively induced instances.
    Otherwise, we could replace $P_1$ or $P_2$ with a solution dominating it, and obtain a new solution which dominates $P$.

    Checking if the combination of two Pareto-optimal solutions is feasible for $(t,D_t^0,D_t^1,D_t^2,M)$ only depends on the respective instance specification, not the solutions themselves.
    For two instances $(t_1, {D_{t_1}^0}', {D_{t_1}^1}', {D_{t_1}^2}', M')$ and $(t_2, {D_{t_2}^0}'', {D_{t_2}^1}'', {D_{t_2}^2}'', M'')$, 
    we check if for any vertex $v\in X_t$, its degree 
    matches the combination of degrees from the two child instances. 
    Because of our construction for a nice tree decomposition, there does not exist any edge $\{u,v\}$ with $u,v\in X_t$ at a join node, so we only need to sum up the degrees contributed by each child.
    Additionally, we check if the matchings $M'$ and $M''$ correspond to the original matching $M$.
    This can be done by creating a construction graph with vertex set $X_t$ and edge set given by $M'$ and $M''$.
    Then we check if for any matching $\{u,v\}\in M$, there exists a subpath in the construction graph with exactly $u$ and $v$ as endpoints.
    If all of these conditions are fulfilled, the Pareto-optimal solutions for both of the respective instances can be combined and yield a feasible solution for $(t,D_t^0,D_t^1,D_t^2,M)$.

    The number of possible instances for $t_1$ and $t_2$ is upper bounded by $3^{\Tw+1}\cdot (\Tw+1)^{\Tw+1} = (3\Tw+3)^{\Tw+1}$, so the number of possible combinations of pairs is upper bounded by $((3\Tw+3)^{\Tw+1})^2=(3\Tw+3)^{\bigO(\Tw)}$.
    Checking feasibility of a pair can be done in $\bigO(\Tw^2)$ using a simple BFS approach on the construction graph.
Computing $\PP_1 \oplus \PP_2$, as discussed in \cref{apx:pareto_runtime}, can be done in $\bigO (\pmax^2\cdot \log^{\max\{\dd-2,1\}}(\pmax^2))$, and the new resulting Pareto set is upper bounded in size by $\pmax^2$.
Since there are at most $\bigO((3\Tw + 3)^{\bigO(\Tw)})$ possible pairs, we have a runtime $\bigO((3\Tw + 3)^{\bigO(\Tw)})\cdot \pmax^2\cdot \log^{\max\{\dd-2,1\}}(\pmax^2)$ for computing the Pareto set of every feasible combination.
From these new Pareto sets $\PP_1,\dots,\PP_m$ with $\vert \PP_i\vert \leq \pmax^2$, we can upper bound $m=(3\Tw + 3)^{\bigO(\Tw)}$.
Therefore, we can compute the final set $\sum_{i=1}^m \PP_i$ in $\bigO(\pmax^2 \cdot (3\Tw + 3)^{\bigO(\Tw)} \cdot \log^{\dd-2}(\pmax^2 \cdot (3\Tw + 3)^{\bigO(\Tw)}))$, which is also our overall runtime. 
\end{proof}

\MTSP*

\begin{proof}
The correctness follows immediately by the correctness proofs of our previous lemmata and that our final Pareto set must be contained in $D[r, \{\}, \{\}, \{\}, \{\}]$.

The running time is dominated by the computation of a join node. Computing any entry in a join node takes $\bigO(\pmax^2 \cdot (3\Tw + 3)^{\bigO(\Tw)} \cdot \log^{\dd-2}(\pmax^2 \cdot (3\Tw + 3)^{\bigO(\Tw)}))$, and since there are $(3\Tw + 3)^{\bigO(\Tw)}$ possible entries, we can compute all entries in the same time complexity. Since the number of join nodes in a nice tree decomposition with introduce edge nodes is upper bounded by $\bigO(n \cdot \Tw)$, we get an overall running time of $\bigO (n(3\Tw+3)^{\bigO(\Tw)} \cdot \pmax^2 \cdot \log^{\max\{\dd-2,1\}}((3\Tw+3)^{\bigO(\Tw)} \cdot \pmax^2))$.

At last, like for the multiobjective $s$-$t$ cut algorithm and the multiobjective minimum spanning tree algorithm, it is enough to only save the costs of every Pareto optimal solution to reconstruct its corresponding set of edges by saving for every solution a pointer to the solution it was originating from.
\end{proof}

\section{Extended implementation details}
\label{sec:appendix_extended_implementation}

In this section we provide additional implementation details that complement the experimental section of this paper, offering a deeper insight into specific aspects of our approach.

\subsection{Graph simplification}
\label{appendix:graph_simplification}

In the $s$-$t$ cut construction for the bicriteria aggregation problem in~\cite{RDHR21}, certain vertices of the input graph $G$ represent polygons and are enforced to belong to the source side $S$ in any valid minimum $s$-$t$ cut. 
This is achieved by assigning an infinite capacity to the edge $\{s, v_i\}$ for each node $v_i$ that corresponds to a polygon. 
As a result, any feasible cut must include $v_i$ in $S$.

Since we wish to have a running time that only depends on the treewidth of the adjacency graph of the triangles,
we reduce the size of the $s$-$t$ cut graph such that only the triangles are represented as vertices in this instance.
To this end, we remove all vertices $v_i$ corresponding to the set of given polygons from the graph entirely and reflect their fixed association with $S$ through the edge weights. 
For each removed vertex $v_i$ and for every neighbor $u$ of $v_i$, we add the weight $\co(\{v_i,u\})$ to the edge $\{s,u\}$. 
If the edge $\{s,u\}$ does not yet exist, it is created with initial weight zero before the addition.
This transformation preserves the semantics of the original cut formulation, as $v_i$ is now implicitly contained in $S$, and its influence on neighboring nodes is redirected to the source $s$. 

\subsection{Saving solutions}
Let $G = (V \cup \{s,t\}, E)$ be the graph under consideration for the $s$-$t$ cut problem. 
For a given solution $p$, we define $V(p) \subseteq V$ as the subset of vertices that constitute $p$. 

For each node $t \in V(\mathbb{T})$ and selection $S \subseteq X_t$, we maintain the corresponding solution set $D[t,S]$ as a list sorted in lexicographical order. Specifically, if $D[t,S] = (p_1, \dots, p_l)$, the sorting follows the criteria $A(p_i) < A(p_j)$ and $P(p_i) > P(p_j)$ for all $i < j$. 

We store a solution $p$ in a pointer-like fashion. When a new solution $p$ is created in an introduce node by augmenting a solution $p'$ with some vertex $v$, we store the new weight of $p$, the vertex $v$, and a pointer to $p'$. 
For a new solution $p$ created in in join nodes, we store the corresponding weight along with the two solutions $p_a,p_b$ that were combined to create $p$.

This approach allows for efficient creation of new solutions while minimizing storage usage. 
In our experiments, many solutions recursively reference the same base solutions, resulting in significantly lower memory consumption compared to an approach where the vertices $V(p)$ are explicitly stored for all solutions.

\subsection{Outsourcing}
Even for small datasets the memory requirements of some subproblems can be too large to fit into RAM. 
To avoid this problem we outsource all solutions to a hard drive and only keep the solution lists currently needed for computations in RAM.
Additionally, we only load the actual vertices $V(p)$ of solutions into RAM when reconstructing the solutions at the end, as they are not needed for earlier computations. 

The outsourcing is split into two parts:
\begin{itemize}
    \item Origin-pointer file: Each time a new solution is created, we add a constant-size entry to the origin-pointer file. This entry contains all information required to recursively reconstruct the corresponding vertex set of any solution. 
    \item Surface-pointer file: Each solution list $D[t,S]$ is represented by a single file. In this file, we save the weight of each solution and the position of its entry in the origin-pointer file.
\end{itemize}

Each newly created solution \( p \) is assigned a unique identifier, starting from 0. 
This identifier corresponds to the position of the solution within the origin-pointer file. 
Each entry in the origin-pointer file has a fixed size of exactly 16 bytes. 
This constant size ensures low storage usage and enables fast retrieval of a solution by ID, since its position in the file can be directly computed via an offset of \(\text{ID} \times 16\) bytes.

The ability to maintain a compact 16-byte representation arises from the nature of different solution types. 
When a solution is generated in an introduce node, it is derived by augmenting an existing solution \( p' \) with a new vertex \( v \). 
In this case, the origin-pointer file stores the ID of \( p' \) (8 bytes) and the vertex identifier (8 bytes).
Similarly, if a solution is formed in a join node, it results from combining two solutions \( p_1 \) and \( p_2 \). 
The corresponding origin-pointer entry then consists of the respective IDs of \( p_1 \) and \( p_2 \), each requiring 8 bytes, summing to 16 bytes in total.

For new solutions in an introduce-join-forget node, the standard join structure—storing the IDs of \( p_1 \) and \( p_2 \)—remains applicable. 
However, in cases where the $\intrNode$ algorithm was skipped for specific vertices, the corresponding origin-pointer entries for \( p_1 \) and \( p_2 \) may not exist. 
Let \( I_i \) denote the set of vertices that were skipped during the introduction step on side \( i \in \{1,2\} \). If 
\(I_i \cap V(p_i) \neq \emptyset,\)
then no origin-pointer entry exists for \( p_i \); instead, an entry is only available for $p_i'$ with \(V(p_i') = V(p_i) \setminus I_i.\)
To account for this, a new origin-pointer entry is created for \( p_i \), referencing the ID of \( p_i' \) (8 bytes) and encoding the skipped vertices \( I_i \cap V(p_i) \) within the remaining 8 bytes. 
This is achieved by allocating 4 bytes to reference the respective node \( t \) and another 4 bytes as a bitmask, indicating which vertices in \( X_t \) belong to \( I_i \cap V(p_i) \). Notably, this approach imposes a maximum bag size of 32 due to the bitmask representation.

Additionally, within the 16-byte origin-pointer entry, specific bit flags are embedded. Among other purposes, these flags distinguish different solution types (e.g., introduce, join) and play a crucial role in the pruning procedure.

For the surface-pointer entries, a larger 24-byte structure is utilized per solution. Each entry in a surface-pointer file comprises three 8-byte fields: the solution ID, the area, and the perimeter.

\subsection{Pruning}
Once a solution list $D[t,S]$ is no longer needed, we can delete the corresponding surface-pointer file. However, this may leave some entries in the origin-pointer file that are no longer required.
To clean up the origin-pointer file, we use a \emph{pruning} procedure which deletes all unneeded entries. 
This procedure is slow because it operates in place on a file that, during our experiments, could easily exceed a terabyte in size. It needs to identify and retain the necessary entries while reorganizing the data without using additional storage.
Because of this, we try to only execute the pruning if absolutely necessary.
We try to minimize the amount of times the pruning procedure is repeated by estimating if we are likely to run out of storage within a node and only execute the pruning in such cases.
However, this can still lead to unnecessary pruning calls. A possible future approach could allow the algorithm to proceed without prior checks and only revert to the pre-node state once an out-of-storage situation is actually encountered, perform pruning at that point, and then resume execution. This would ensure that pruning is only applied when strictly required.

To estimate whether a node $t$ is at risk of exhausting its storage capacity, we rely on an estimation strategy described in \cref{sec:appendix_choosing_a_td}. Specifically, we estimate the number of solutions that will be generated at the next node. 
This estimate is then used to approximate the number of potential origin-pointer entries and surface-pointer entries being created.

To ensure robustness, we deliberately overestimate these numbers. 
This conservative approach reduces the likelihood to encounter situations where the pruning procedure is necessary but not invoked in time, leading to excessive storage consumption.

One of the primary reasons why the pruning procedure is computationally expensive is that it operates \textit{in-place}. 
We chose this approach because pruning typically occurs when the available hard-disk storage is already critically low. 
In such situations, maintaining additional temporary storage structures would not be feasible due to tight memory constraints.

The in-place pruning process consists of two distinct phases:
In the first phase, we identify and mark all origin-pointer entries that are still required, i.e., entries that are (directly or indirectly) referenced by at least one solution in a surface-pointer file. 
This is achieved by recursively traversing all entries in the origin-pointer file, starting from all solutions stored in the surface-pointer files. 
Dedicated bit flags within each origin-pointer entry are used to indicate whether the entry is still required.

Let \( n \) be the number of entries marked as required in the first phase. 
In the second phase we restructure the origin-pointer file such that all required entries are relocated to occupy the first \( n \) positions of the file.  

To achieve this, we again iterate over all surface-pointer entries and recursively traverse the referenced IDs in the origin-pointer file. 
This traversal is performed in post-order to ensure that dependencies are updated consistently. 
Whenever we encounter a required entry with an ID greater than or equal to \( n \), we move it into a free position among the first \( n \) entries.  

Since relocating an entry overwrites another entry's position, we ensure that only entries that are not required (i.e., unmarked) are overwritten. 

To handle cases where an entry is referenced by multiple other entries, we introduce an additional mechanism: when an entry is moved, we mark its original location with a bit flag, indicating that it has been relocated and store the new ID in that position. 
This allows other referencing entries to correctly resolve the updated position of the entry.

Because moving an entry changes its position in the origin-pointer file, its corresponding ID must be updated. 
To maintain consistency, we process solutions in post-order, ensuring that whenever we relocate an entry for a solution \( p \), we subsequently adjust all references to \( p \) in other entries.

At the end of the second phase we truncate the origin-pointer file to a final size of 
\(n \times 16 \text{ bytes}\), retaining only the required entries.

\subsection{Join node algorithm}\label{apx:join_algorithm}
In \cref{alg:join_heap} we provide a pseudocode to the join algorithm described in \cref{sec:join_heap}.
The input consists of two lexicographically sorted Pareto lists, $P_{t_1}^S=(p_1^1,p_2^1,\dots,p_a^1)$ and $P_{t_2}^S=(p_1^2,p_2^2,\dots,p_b^2)$. 
We initialize an empty min-heap $H$ and insert all solution pairs $(p_i^1,p_1^2)$ for $i \in [a]$ into the heap, i.e., we consider all combinations between solutions from $P_{t_1}^S$ with the first solution from $P_{t_2}^S$.
In the min-heap, solution pairs are always ordered lexicographically. 

We then iteratively extract the solution pair $(p_i^1,p_j^2)$ with the smallest combined area.
If its combined perimeter is smaller than that of the last solution added to the final list $P_t^S$, we know, because of the lexicographic ordering, that this pair corresponds to a new Pareto-optimal solution and add it to $P_t^S$, otherwise the pair is skipped. 
After processing a pair, we remove it from the heap. 
If $j\neq b$, we insert the next solution pair $(p_i^1,p_{j+1}^2)$ into the heap.
In this way, we progressively consider all combinations of solutions from both Pareto lists and only include the ones to $P_t^S$ that are Pareto-optimal.

\begin{algorithm}
\DontPrintSemicolon
  \KwIn{Two lexicographically sorted PO lists $\PP_{t_1}^S=(p_1^1,p_2^1,\dots,p_a^1), \PP_{t_2}^S=(p_1^2,p_2^2,\dots,p_b^2)$, where $p_i^j=(A_i^j,P_i^j)$ encodes the area and perimeter of the solution} 
  \KwOut{$\PP_{t}^S=\PP_{t_1}^S \oplus \PP_{t_2}^S$ (sorted lexicographically)}
  \BlankLine
  Initialize empty min-heap $H$\;
  \For{$p_i^1 \in \PP_{t_1}^S$}{
    Add solution pair $(p_i^1, p_1^2)$ to the min-heap $H$ (sorted lexicographically)
}
$P_{\text{last}} \gets \infty$\;
$\PP_{t}^S \gets \emptyset$\;

\While{$H \neq \emptyset$}{
    $(p_i^1, p_j^2) \gets \text{GetMin}(H)$%\;
    \Comment{Solution pair is judged by its area}\;

    \If{$P(p_i^1) + P(p_j^2) < P_{\text{last}}$}{
        Add the combined solution $(A(p_i^1)+A(p_j^2), P(p_i^1) + P(p_j^2))$ to $\PP_{t}^S$\;
        $P_{\text{last}} \gets P(p_i^1) + P(p_j^2)$\;
    }
    \If {$j == b$ }{ 
    Remove solution pair $(p_i^1, p_j^2)$ from $H$ \Comment{This includes a call of heapify} 
    \;
    }
    \Else{
        Remove solution pair $(p_i^1, p_j^2)$ from $H$\;
        Add solution pair $(p_i^1,p_{j+1}^2)$ to $H$
        \Comment{This includes a call of heapify}\; 
    }
}
\Return $\PP_t^S$\;
\caption{Heap-based algorithm for computing $\PP_{t_1}^S \oplus \PP_{t_2}^S$}
\label{alg:join_heap}
\end{algorithm}

\subsection{Optimizing join node computations}

Let \( t \) be a join node with child nodes \( t_1 \) and \( t_2 \). We consider the case where we need to compute all PO combinations of two lexicographically sorted lists of PO solutions, denoted as \( \PP_1 \) and \( \PP_2 \).

\subsubsection{Computing Lower Bounds}

For each \( i \in \{1,2\} \), we partition \( \PP_i \) into \( n_{\text{lower},i} = \min\{500, \lceil |\PP_i|/10 \rceil\} \) approximately equal-sized consecutive and disjoint sections
$L_1^i,\dots,L_{n_{\text{lower},i}}^i$.
To estimate a lower bound for a section \( L_j^i \), let \( p_1 \) and \( p_2 \) denote the first and last solutions of this section, respectively. 
We define the lower bound as the line connecting the points \( (A(p_1), P(p_1) - \alpha) \) and \( (A(p_2), P(p_2) - \alpha) \), where \( \alpha \) is the smallest value ensuring that all solutions in \( L_j^i \) lie above this line.

\subsubsection{\texorpdfstring{Computing the Heuristic \( \mathcal{H} \) of \( \PP_1 \oplus \PP_2 \)}{Computing the Heuristic H of P1 ⊕ P2}}

To compute \( \PP_1 \oplus \PP_2 \) efficiently, we construct a heuristic Pareto set \( \mathcal{H} \). For each \( i \in \{1,2\} \), we first partition \( \PP_i \) into \( n_{h,i} = \min\{350, \lceil |\PP_i|/10 \rceil\} \) disjoint, consecutive, and approximately equal-sized sections \( H_j^i \) with \( j \in [n_{h,i}] \).

For each section \( H_j^i \), we extract the subset of extreme solutions \( E_j^i \subseteq H_j^i \), yielding a significantly reduced set of representative solutions. 
For every $i\in [2]$, the union \( E_i = \cup_{j} E_j^{i} \) yields a reduced set \( E_i \subseteq \PP_i \) where typically the size of $E_i$ is much smaller than the size of $\PP_i$.

To compute a heuristic approximation \( \mathcal{H} \) of \( \PP_1 \oplus \PP_2 \), we determine all PO combinations of \( E_1 \) and \( E_2 \) using a recursive variant of the join-node speed-up technique. In recursive calls, instead of extracting extreme solutions from new sections, we directly subsample approximately 4\% of the solutions, ensuring an approximately uniform distribution over the set. 
The recursion process continues until a base case is reached where the input sets become sufficiently small, resulting in only one section to be computed per side in the lower-bound calculations. 
At this point, the standard min-heap approach is used to compute all PO combinations.

\subsubsection{\texorpdfstring{Computing Upper Bounds Based on the Heuristic \( \mathcal{H} \)}{Computing Upper Bounds Based on the Heuristic H}}

To derive upper bounds from the heuristic \( \mathcal{H} \), we ensure that any point located above these bounds is guaranteed to be dominated in \( \mathcal{H} \), and therefore also in the true Pareto  set \( \PP_1 \oplus \PP_2 \). For this matter, we construct a set of bounding points \( B \) as follows.

Given the ordered heuristic list \( \mathcal{H} = (p_1, p_2, \dots, p_l) \), we define the bounding points as  
\[
B = \{(A(p_1), P(p_1))\} \cup \{(A(p_2), P(p_1)), (A(p_3), P(p_2)), \dots, (A(p_l), P(p_{l-1}))\}
\]

We then partition \( B \) into \( n_{\text{upper}} = \min\{200, (|B| - 1) / (4-1)\} \) consecutive and overlapping sections \( B_1,\dots,B_{n_{\text{upper}}} \), ensuring that each section contains at least four points. 
Let $B_i=(b_1^i,\dots,b_{l_i}^i)$ be an arbitrary set of points.
The overlap is structured such that for two consecutive sections \( B_i \) and \( B_{i+1} \)
we enforce an overlap at a single point, the last in $B_i$ and the first in $B_{i+1}$, i.e., $b_{l_i}^i = b_1^{i+1}$. This overlapping structure ensures that the upper bounds completely cover the heuristic solution space.

For each section \( B_i \), we define an upper bound as the line segment connecting  
\(
(A(b_1^i), P(b_1^i) + \alpha) \text{ to } (A(b_{l_i}^i), P(b_{l_i}^i) + \alpha)
\)
where \( \alpha \) is the smallest value ensuring that all solutions in \( B_i \) are below this line. 

Figure~\ref{fig:upper_bound_example} illustrates an example for such a section \( B_i \) with only a few points for clarity. Black dots represent the solution points in \( \mathcal{H} \) and red crosses indicate the bounding points \( B_i \). 
The red shaded area indicates a region in which any solution is guaranteed to be dominated. 
The upper bounding line, drawn in green, is constructed by connecting the first and last bounding point in \( B_i \), and shifting this line upwards just enough to ensure that all bounding points lie below or on it. This shift corresponds to the minimal \( \alpha \) value defined as above.

\begin{figure}[t]
    \centering
    \includegraphics[width=0.4\textwidth]{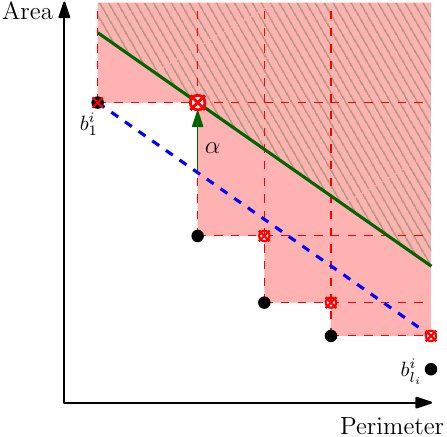}
    \caption{Exemplary upper bound creation. Pareto points $b_1^i,\dots,b_{l_i}^i$ are drawn in black, they dominate all possible points in the red region, and we compute the green line as a heuristic by shifting the blue dotted line up by the minimal amount $\alpha$.}
    \label{fig:upper_bound_example}
\end{figure}

\subsubsection{Identifying Lower Bound Combinations to be ignored}

Let $\mathcal{L}^1 = (L_{1}^1, \dots, L_{m}^1)$ and $\mathcal{L}^2 = (L_{1}^2, \dots, L_{n}^2)$ be two lists of lower bound functions derived from $\PP_1$ and $\PP_2$, respectively. 
For $i\in [2]$, each $L_j^i$ corresponds to a straight line in $\RR^2$, where one axis represents area and the other perimeter. Both lists are sorted in ascending order by area. 
Similarly, let $\mathcal{U} = (U_{1}, \dots, U_{p})$ be a list of upper bounds, also sorted in ascending order by area. 

For each pair $(L_i^1, L_j^2) \in \mathcal{L}^1 \times \mathcal{L}^2$, we construct a polyline $L_{i,j}$, defined by three points derived from $L_i^1$ and $L_j^2$. 
Let $i$ and $j$ be arbitrary but fixed. Let $L_i^1=((A_1^1,P_1^1),(A_2^1,P_2^1))$ and $L_j^2=((A_1^2,P_1^2),(A_2^2,P_2^2))$.
Without loss of generality, we assume the slope ordering
$\frac{P_2^1-P_1^1}{A_2^1-A_1^1} \leq \frac{P_2^2-P_1^2}{A_2^2-A_1^2}$. 
Then, the combined polyline is defined by 
$$((A_1^1+A_1^2,P_1^1+P_1^2),
(A_2^1+A_1^2,P_2^1+P_1^2),
(A_2^1+A_2^2,P_2^1+P_2^2)).$$
This new polyline serves as a lower bound for all possible combinations of solutions lower bounded by $L_i^1$ and $L_j^2$.

Our goal is to determine whether \( L_{i,j} \) lies entirely above all upper bounds \( U_k \in \mathcal{U} \). If this is the case, all solution combinations from \( L_i^1 \) and \( L_j^2 \) can be ignored. 
This is illustrated in \cref{fig:skipping}. Multiple upper bounds are shown as blue lines. In orange, we see a lower bound combination \( L_{i,j} \) that must not be skipped, along with the corresponding solution points. The line combination lies below at least one upper bound and therefore cannot be skipped. In contrast, the green polyline and points represent another lower bound combination that lies entirely above all upper bounds. These solution combinations can safely be skipped.

\begin{figure}[t]
    \centering
    \includegraphics[width=0.4\textwidth]{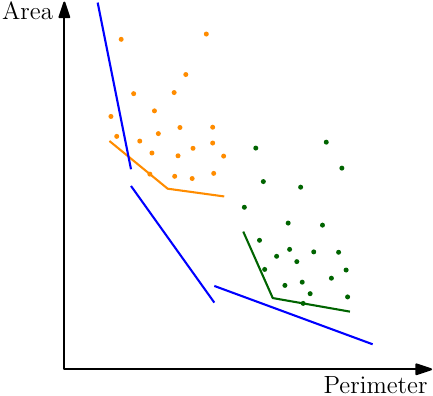}
    \caption{Illustration of the skip condition. The blue line represents the heuristic upper bounds. The orange and green points, along with their respective polylines correspond to combinations of Pareto-optimal solutions and their associated lower bounds.}
    \label{fig:skipping}
\end{figure}

To optimize this check, we leverage the sorted structure of the bounds.
Since all bounds are sorted by area, for each $L_{i,j}$ we only need to consider upper bounds $U_k$ whose area interval overlaps with the interval of $L_{i,j}$. 
As we iterate over $\mathcal{U}$, we identify the first $U_k$ that intersects with the area interval of $L_{i,j}$, check if the actual solution values forming $L_{i,j}$ should be added to the final Pareto set, and then continue with $U_{k+1}$. 
Given that $\mathcal{U}$ is sorted, once we encounter an upper bound $U_l$ that does not overlap with $L_{i,j}$, we can immediately terminate the search, as all subsequent upper bounds will also have non-overlapping intervals.

Since $\mathcal{L}^1$ and $\mathcal{L}^2$ are sorted, we can efficiently determine the starting position in $\mathcal{U}$ for a given $L_{i,j}$. 
Specifically, since the start of the area interval for $L_{i,j}$ cannot be smaller than for $L_{i,j-1}$, we can begin the search for intersecting upper bounds in $\mathcal{U}$ from the first known intersection position of $L_{i,j-1}$. The same principle applies to the outer iteration over $i$, meaning that for $L_{i,1}$, we can reuse the first known intersection position of $L_{i-1,1}$.

\subsection{Choosing a tree decomposition}
\label{sec:appendix_choosing_a_td}
As previously described, we traverse the tree decomposition in postorder and estimate the number of solutions at each node. 
Based on these estimates, we derive predictions for both the runtime and storage consumption of our algorithm. 
To select the most efficient tree decomposition, we minimize a weighted combination of estimated runtime and memory consumption. 
The following sections provide a detailed explanation of the individual components involved in this estimation process.

\subsubsection{Estimating the Number of Solutions per Node}
\label{apx:join_solutions}

To estimate the number of solutions for each node $t$, i.e., $\sum_{S\subseteq X_t} |\mathcal{P}_t^S|$, we begin at the leaf nodes. 
Let $|\PP_t|$ be the number of solutions in node $t$. 
Each leaf node is guaranteed to have exactly one solution. 
For introduce nodes $t$ with child node $t'$, the number of solutions exactly doubles compared to the number of solutions within $t'$. 
Our experiments showed that in forget nodes, the number of solutions approximately halves. 
The behavior of the number of solutions in join nodes is significantly less consistent. 
Let $t$ be a join node with child nodes $t_1$ and $t_2$, and assume $|\PP_{t_1}| \leq |\PP_{t_2}|$. 
Our experiments indicated that the number of solutions at $t$ depend primarily on the larger child $t_2$. 
Consequently, we estimate $|\PP_t|$ using
\[
    |\PP_t| \approx |\PP_{t_2}| \cdot \left( 1 + \left( \frac{|\PP_{t_1}|}{|\PP_{t_2}|} \right)^\alpha \cdot \beta \right),
\]
where $\alpha$ and $\beta$ are learnable parameters. 
Fitting this model using extensive data from our algorithm yields $\alpha \approx 0.57$ and $\beta \approx 2.15$.

\subsubsection{Estimating the Runtime of the Algorithm}

To estimate the runtime of our algorithm, we focus only on the join nodes and their variants (i.e., join-forget and introduce-join-forget nodes), as they dominate the overall runtime. 
Let $t$ be a join node with child nodes $t_1$ and $t_2$, and assume $|\PP_{t_1}| \leq |\PP_{t_2}|$. 
For each $S \subseteq X_t$, the algorithm merges the solution sets $\PP_{t_1}^S$ and $\PP_{t_2}^S$ using a min-heap. 
This yields a runtime of
\(
    \bigO \left( |\PP_{t_1}^S| \cdot |\PP_{t_2}^S| \cdot \log |\PP_{t_1}^S| \right).
\)
This runtime does not account for the effect of the join heuristic.
Our experiments revealed that in the join nodes the heuristic tends to skip a similar proportion of combinations. 
This effect introduces only a linear factor in the time complexity.

We also observed that the sizes of $\PP_t^S$ are approximately uniform across different subsets $S\subseteq X_t$, i.e., $|\PP_t^S| \approx |\PP_t^{S'}|$ for $S \neq S'$. 
Therefore, we approximate
\(
    |\PP_t^S| \approx \frac{|\PP_t|}{2^{|X_t|}}.
\)
Using this estimation, we estimate the runtime for a join-node $t$ as
\(
    |\PP_{t_1}^S| \cdot |\PP_{t_2}^S| \cdot \log |\PP_{t_1}^S| \cdot \gamma,
\)
where $\gamma$ is a learnable constant. Since we are only interested in estimating relative performance among different tree decompositions, the actual value of $\gamma$ is not important.

In join-forget nodes, we observed that the runtime decreases exponentially with the number of forgotten vertices $F$. 
Specifically, we estimate the runtime of join-forget nodes analogously to join nodes, but scale the runtime by approximately $\delta^{|F|-1}$, where $\delta$ is a learnable constant. 
Fitting this runtime approximation to our data, we estimated $\delta \approx 0.5$.

\subsubsection{Estimating Storage Consumption}

To estimate the storage consumption of our algorithm, we focus on the size of the origin-pointer file, as its growth is significantly more problematic than that of the surface-pointer files, whose size has become far less critical due to the incorporation of introduce-join-forget nodes.
The size of the origin-pointer file can be estimated straightforwardly using the previously derived estimates of the number of solutions for each node type.

\subsubsection{Selecting the Optimal Tree Decomposition}

Given estimated runtime and storage consumption for a set of $n$ tree decompositions $\mathcal{T}_1\dots,\mathcal{T}_n$, we select the best-performing tree decomposition by minimizing a performance score $s(\mathcal{T}_i)$. 
This score function is defined as a weighted combination of the normalized runtime and storage estimates:
\[
    s(\mathcal{T}_i) = \frac{1}{4} \frac{\text{estmTime}(\mathcal{T}_i)}{\min_{j\in [n]} \text{estmTime}(\mathcal{T}_j)} + \frac{3}{4} \frac{\text{estmStorage}(\mathcal{T}_i)}{\min_{j\in [n]} \text{estmStorage}(\mathcal{T}_j)},
\]
where $\text{estmTime}(\mathcal{T}_i)$ and $\text{estmStorage}(\mathcal{T}_i)$ denote the estimated runtime and storage requirements for tree decomposition $\mathcal{T}_i$, respectively.

A natural question is how many tree decompositions should be generated before selecting the best one. 
Our experiments showed that our method for generating tree decompositions results in a right-skewed distribution of performance scores: 
Tree decompositions with good scores occur relatively frequently, some exhibit very poor scores. 
To balance quality and preprocessing cost, we generated tree decompositions for approximately 16 hours (i.e., 1 hour using 16 threads) on the same system used in \cref{tab:algorithm_improvements}. 
This approach reduces the risk of selecting suboptimal decompositions while ensuring that the preprocessing time remains relatively low compared to the overall runtime of our algorithm.

\subsubsection{Empirical Evaluation of Tree Decomposition Selection}

In \cref{tab:algorithm_improvements}, we present results demonstrating the impact of selecting an appropriate root and tree decomposition for an early version of our algorithm. 
Further experiments on the final version reinforce the importance of this selection step. 
When using the median-rated tree decomposition from the pool of decompositions used in \cref{tab:algorithm_improvements}, we observed an 87\% increase in runtime and a 275\% increase in storage consumption. 
For the worst-rated tree decomposition, runtime increased by 316\% and storage consumption increased by 16450\%.

\jk{
These results highlight that performance can vary substantially, even if the width stays the same. 
As observed by previous experimental studies~\cite{ADMW15,KKS20}, evaluating a tree decomposition based only on its width does not necessarily yield a good performance.
Especially the number and structure of join nodes are important when evaluating different decompositions for predicting runtime and memory consumption.
}

Motivated by Abseher et al. \cite{ADMW15}, we also evaluate our estimator by the rank attained by the selected tree decomposition within a precomputed pool. They generated $40$ decompositions per instance and sorted them by runtime; a random pick would on average land at rank~$20.5$ and their heuristic achieved a median of $8.5$ on real-world instances of width~$4$–$6$. In subsequent work by Kangas et al.~\cite{KKS20}, it was reported that prediction accuracy decreases as the width increases. 

For all other experiments in this paper we report the full wall-clock time, i.e. including reconstruction of the final solutions. However, in the ranking experiment presented here, we exclude the reconstruction phase for three reasons.
(i)~While the dynamic-programming phase uses up to $96$ threads, reconstruction is strictly single-threaded, so its share in the wall-clock time is disproportionately large even though its CPU usage is minor.
(ii)~As discussed in \cref{sec:appendix_extensions}, our current reconstruction routine is naive; we believe its runtime could be reduced substantially, but this optimization was not required for our study.
(iii)~For each dataset the reconstruction time is almost constant across decompositions, so omitting it barely affects the relative ranking that we are interested in.

For our larger benchmarks we only build one pool per data set (sizes $100$, $40$, and $15$) and record 
(i)~the ranks of the decomposition chosen by our estimator with respect to runtime and storage,
(ii)~the Pearson correlations between our performance score and the actually measured runtime and storage usage, 
(iii)~the speed-up (or memory saving) of the selected decomposition over the median candidate in the pool, and
(iv)~the residual gap to the optimal runtime (storage usage) expressed as the percentage by which the best-observed runtime (or memory usage) still improves on that of the decomposition selected by our estimator.
Results are summarized in \cref{tab:td_ranking} and visualized in \cref{fig:score_scatter}. 
For Lottbek, the six lowest‑scoring decompositions exceeded our benchmark limits, so correlations and median metrics are not reported. The benchmarks were run on the identical hardware configuration described in \cref{sec:appendix_experiments}.

Even for treewidth close to~$20$, the estimator selects decompositions that are well within the top of the ranking.
Compared with the heuristic of Abseher et al., which achieves a median rank of $8.5$ on instances of width~$4$–$6$, our estimator delivers comparable or markedly superior placements while operating at much larger treewidths. The high correlations between score and measured performance also demonstrate that our selection strategy not only scales to wider decompositions but also matches - or exceeds - the effectiveness of the state‑of‑the‑art techniques developed for much narrower treewidth regimes.
Moreover, the right-most columns of \cref{tab:td_ranking} show substantial median improvements: On Osterloh and Ahrem the chosen decomposition is $66\%$ and $56\%$ faster than the median candidate while using $55\%$ and $81\%$ less storage, respectively.
When the estimator did not choose the very best decomposition, the loss was modest: the true optimum is at most $14\%$ faster (and only $2\%$ leaner in storage) in our hardest instance.

\begin{figure}[h]
  \centering
  \subfloat[{Osterloh}: score vs.\ runtime]{
    \includegraphics[width=0.47\linewidth]%
      {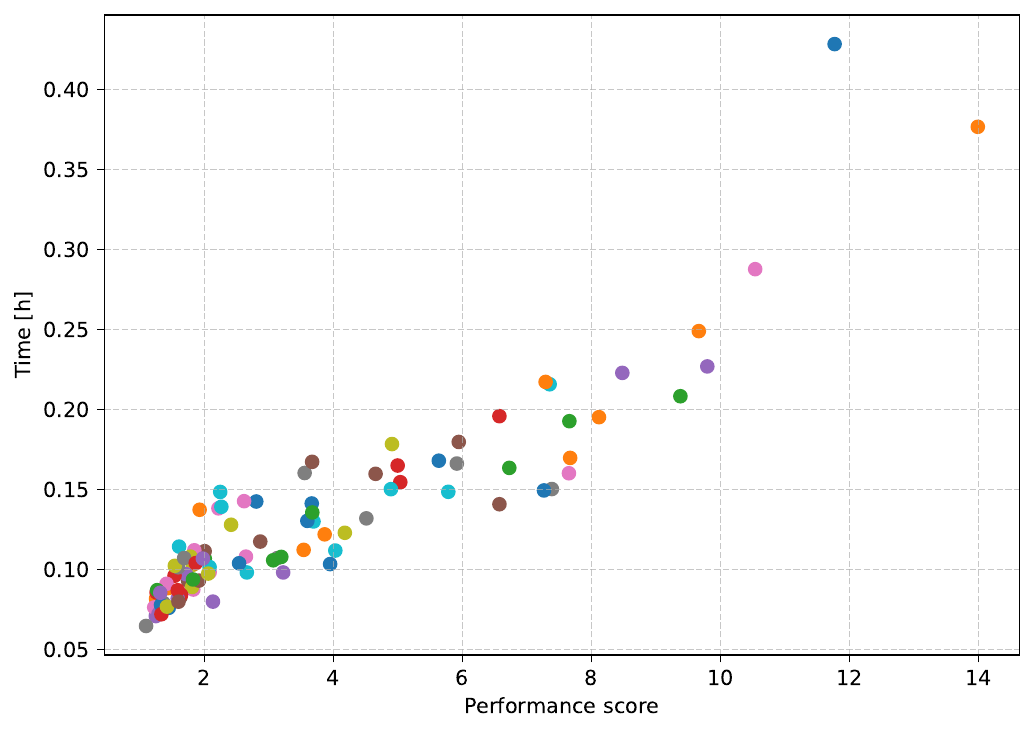}
    \label{fig:osterloh_runtime}}
  \hfill
  \subfloat[{Osterloh}: score vs.\ storage]{
    \includegraphics[width=0.47\linewidth]%
      {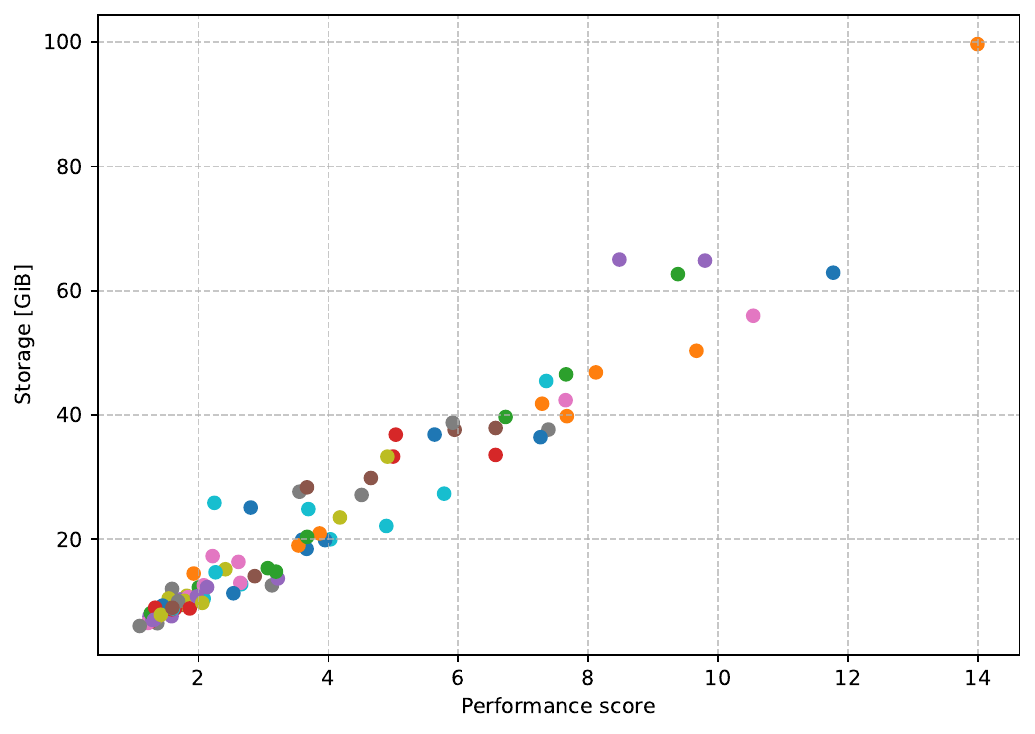}
    \label{fig:osterloh_storage}}
  
  \vspace{0.5em}%
  \subfloat[{Ahrem}: score vs.\ runtime]{
    \includegraphics[width=0.47\linewidth]%
      {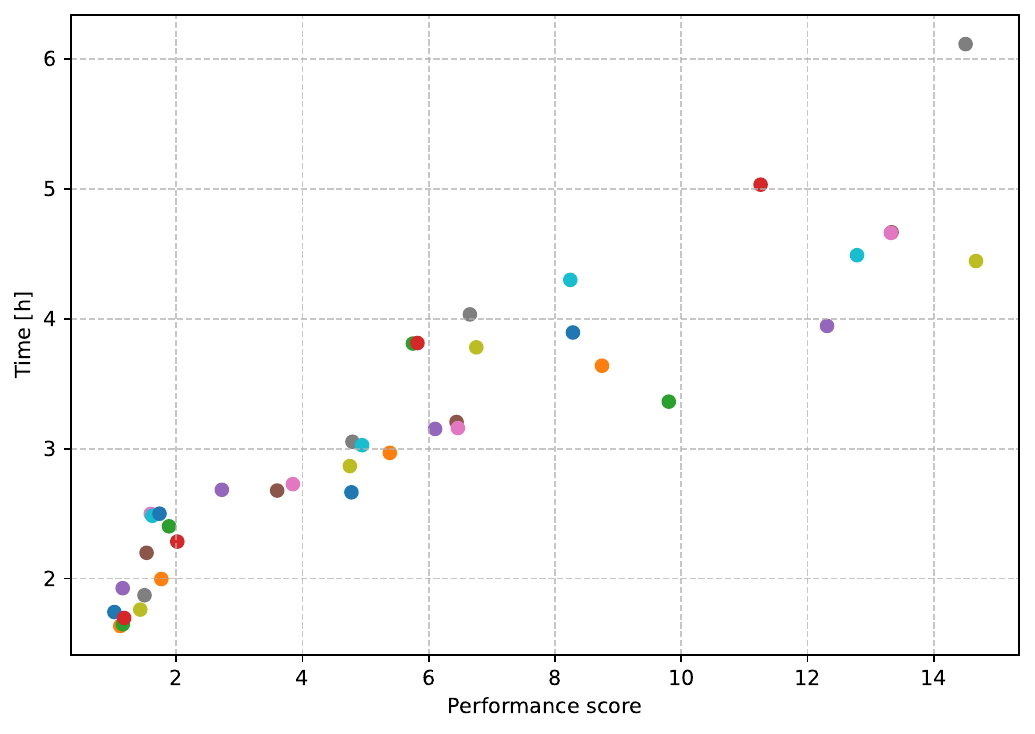}
    \label{fig:ahrem_runtime}}
  \hfill
  \subfloat[{Ahrem}: score vs.\ storage]{
    \includegraphics[width=0.47\linewidth]%
      {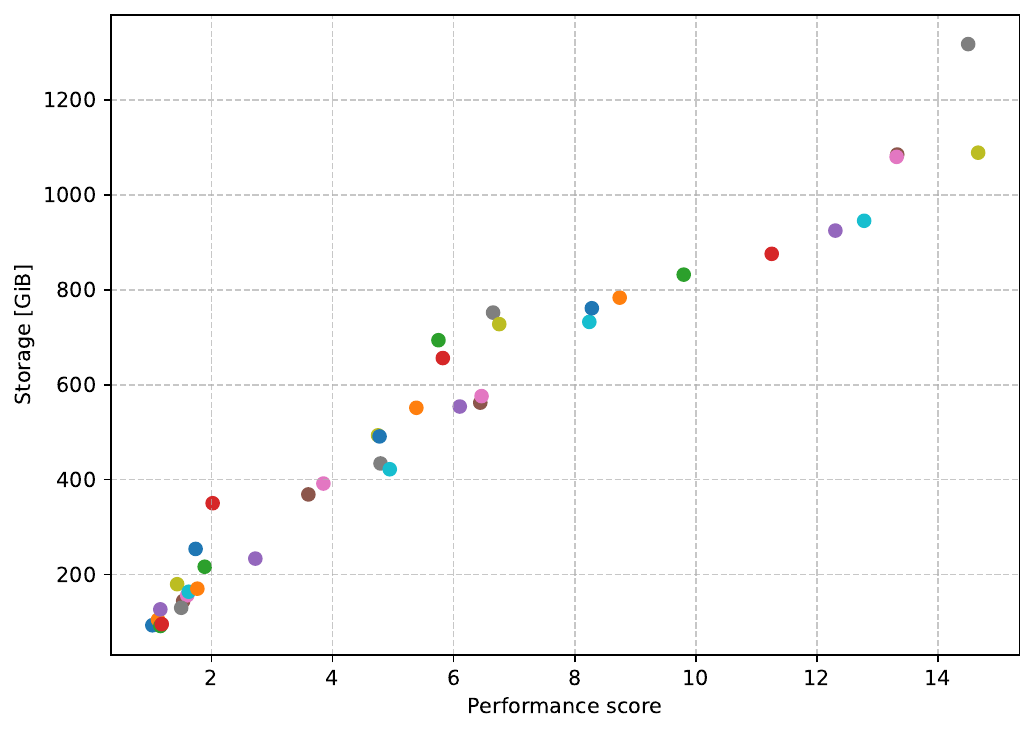}
    \label{fig:ahrem_storage}}
  
  \caption{Relationship between predicted performance score and measured runtime (left) and storage usage (right) for the Osterloh and Ahrem pools.
  Each point represents one tree decomposition; lower values are better
  on both axes. Each tree decomposition is assigned a consistent color across the runtime and storage plots.}
  \label{fig:score_scatter}
\end{figure}

\begin{table}[t]
  \centering
  \begin{tabular}{lccccccccc}
    \toprule
    \multirow{2}{*}{Instance} &
    \multirow{2}{*}{$\Tw$} &
    \multicolumn{2}{c}{Rank} &
    \multicolumn{2}{c}{$\rho$} &
    \multicolumn{2}{c}{Median improvement} &
    \multicolumn{2}{c}{Gap to optimum} \\
    \cmidrule(lr){3-4}\cmidrule(lr){5-6}\cmidrule(lr){7-8}\cmidrule(lr){9-10}
     & & 
     Time & Storage & 
     Time & Storage & 
     Time & Storage & 
     Time & Storage \\
    \midrule
    Osterloh & $13\text{--}14$ & 1/100 & 1/100 & 0.91 & 0.97 & 66\% & 55\% & 0\% & 0\% \\
    Ahrem    & $16\text{--}17$ & 4/40 & 2/40 & 0.92 & 0.98 & 56\% & 81\% & 6\% & 2\% \\
    Lottbek  & 19              & 2/15 & 2/15 & n/a\textsuperscript{*} & n/a\textsuperscript{*} & n/a\textsuperscript{*} & n/a\textsuperscript{*} & 14\% & 2\% \\
    \bottomrule
  \end{tabular}

  \caption{Estimator-selected tree decompositions: pool rank, score–metric correlations, gain over the pool median, and residual gap to the optimum for each benchmark.\\
  *: Six lowest-scoring decompositions exceeded time or memory limits, preventing meaningful correlation estimates or median values.}
    \label{tab:td_ranking}
\end{table}

\section{Further experiments}
\label{sec:appendix_experiments}

In \cref{tab:all_datasets}, we present all datasets that were successfully solved. 
All computations were performed on a high performance computing system.
We used 96 threads of two Intel Xeon "Sapphire Rapids" 2.10GHz and about two terabytes of storage. 
Pruning was disabled for these computations, as the underlying algorithm would have experienced a significant increase in runtime when combined with pruning.

For some datasets we used up to 3200GB of RAM.
This high RAM usage stems from running up to 96 threads in parallel, which, combined with the use of join-forget nodes, caused the algorithm to load a substantial number of solution lists into RAM. 
To mitigate this effect, the algorithm could be adapted to dynamically reduce the number of threads when available RAM is insufficient.

For datasets requiring more than two terabytes of storage, pruning was enabled. 
As a result, execution times increased substantially, with all further experiments taking more than 100 hours. 
We chose a time limit of 100 hours per dataset, so we only evaluated instances that could be solved within this time frame.

For further possibilities to improve runtime and memory consumption, see \cref{sec:appendix_extensions}.

\begin{table}[t]
    \centering
    \begin{tabular}{lrrrrrrrrrr}
        \toprule
        \shortstack[t]{Dataset\\~} & \shortstack[t]{$\Tw$\\~} & \shortstack[t]{Time \\ {[h]}} & \shortstack[t]{Storage \\ {[GIB]}} & \shortstack[t]{\#PO \\ Solutions} & \shortstack[t]{Percentage \\ nonextreme} & \shortstack[t]{Graph \\ \#Vertices} & \shortstack[t]{Graph \\ \#Edges} & \shortstack[t]{ $|V(\mathbb{T})|$ \\~ } \\
        \midrule
        Osterloh        & 14 & 0.15  & 5.8   & 83055 & 99.43 & 1717  & 1887  & 7586   \\
        Schobuell      & 16 & 2.2   & 120   & 193868 & 99.56 & 3552  & 3970  & 16282  \\
        Hambach        & 16 & 4.0   & 73    & 278419 & 99.01 & 7176  & 7571  & 29200  \\
        Ahrem          & 17 & 3.1   & 122   & 219969 & 98.99 & 5280  & 5607  & 21778  \\
        Bockelskamp    & 18 & 3.3   & 171   & 166192 & 99.44 & 2980  & 3325  & 13593  \\
        Gevenich*       & 19 & 4.6   & 214   & 176703 & 99.19 & 4163  & 4552  & 18850  \\
        Sabbenhausen*   & 19 & 5.8   & 253   & 167909 & 99.12 & 4301  & 4651  & 19002  \\
        Lottbek        & 19 & 10.5  & 355   & 316567 & 99.48 & 5595  & 6101  & 24426  \\
        Duengenheim    & 20 & 15.8  & 543   & 281816 & 99.12 & 7024  & 749   & 30468  \\
        Butzweiler     & 21 & 23.9  & 663   & 238057 & 99.44 & 4520  & 5015  & 20296  \\
        Norheim**        & 21 & 39.2  & 1296  & 264172 & 99.53 & 4559  & 5072  & 21370  \\
        Erlenbach      & 22 & 70.7  & 1754  & 303565 & 99.47 & 5644  & 6284  & 25682  \\
        \bottomrule
    \end{tabular}
    \caption{Overview over the datasets we managed to solve. All unmarked datasets required at most 800GB of RAM. Datasets marked with * and ** required at most 1600GB and 3200GB of RAM, respectively.}
    \label{tab:all_datasets}
\end{table}

\subsection{Extrapolating Runtimes}
\label{sec:appendix_extrapolation}
In early versions of our algorithm, the runtime on the Ahrem dataset was prohibitively high, preventing us from solving it entirely. 
To still obtain runtime values for the comparison Table \ref{tab:algorithm_improvements}, we use extrapolated values. 
Specifically, we first ran the final version of our algorithm on this dataset and sampled a fraction of the solution lists that have to be combined at each join node. 
After sampling, we executed the earlier algorithm versions only on these sampled subproblems. 
Since the solution lists within a node are approximately uniform in size, we assumed that the runtime per list remains consistent. 
We extrapolate the total runtime by scaling the measured values proportionally to the number of lists that would have been processed in the full execution.

\section{Implementation Extensions}\label{sec:appendix_extensions}
In this section we discuss potential extensions to our implementation that could further optimize its efficiency and resource consumption. 
We have already outlined possible modifications to the pruning procedure which would invoke it only when necessary (see \cref{sec:implementation_pruning}), and described how RAM usage can be reduced by dynamically adjusting the number of active threads (see \cref{sec:appendix_experiments}). In the following, we propose additional improvements.

\subsection{Caching During Solution Reconstruction}
At the end of the algorithm, the set of solutions (i.e. the selected vertex sets) is reconstructed using the origin-pointer file. 
This process is primarily I/O-bound, making it a significant contributor to the overall runtime for small datasets. 
The impact of this bottleneck is particularly pronounced when multithreading is employed, as the I/O constraints prevent parallel execution during reconstruction, whereas computations at join nodes fully benefit from multithreading.  

The reconstruction procedure currently follows a straightforward, unoptimized approach. 
However, many solutions share references to the same origin-pointer entries. 
By introducing a caching mechanism we could allow frequently accessed entries to be stored in memory, avoiding repeated lookups in the origin-pointer file and thereby reducing redundant I/O operations.
This would likely significantly accelerate the reconstruction phase.  

\subsection{Reducing RAM Usage During Reconstruction}  
Another limitation of the current reconstruction approach is its high RAM consumption. 
All solutions are currently fully reconstructed before being written to an output file. 
This results in substantial memory usage, particularly for large instances with many solutions. 
A more memory-efficient approach would involve writing each reconstructed solution to the output file immediately upon reconstruction, rather than storing all solutions in memory until the process is complete. 
This adjustment would substantially lower peak RAM requirements during the reconstruction process, without negatively impacting performance. 

\end{document}